\newtheorem{theorem}{Theorem}[section]
\newtheorem{lemma}[theorem]{Lemma}
\newtheorem{definition}[theorem]{Definition}
\newtheorem{proposition}[theorem]{Proposition}
\newtheorem{corollary}[theorem]{Corollary}
\newtheorem{conjecture}[theorem]{Conjecture}
\newtheorem{assumption}[theorem]{Assumption}
\newtheorem{observation}[theorem]{Observation}
\newtheorem{fact}[theorem]{Fact}
\newtheorem{remark}[theorem]{Remark}
\definecolor{SH_color}{rgb}{0.188, 0.478, 0.858}
\newcommand{\U}{\mathrm{U}}
\newcommand{\C}{\mathbb{C}}
\newcommand{\R}{\mathbb{R}}
\newcommand{\CP}{\mathbb{CP}}
\newcommand{\diag}{\mathrm{diag}}
\newcommand{\dist}{\mathrm{dist}}
\newcommand{\E}{\mathbb{E}}
\newcommand{\cmark}{\ding{51}} % ✓
\definecolor{azure}{rgb}{0.0, 0.5, 1.0}
\DeclareMathOperator{\poly}{poly}
\definecolor{greenish}{rgb}{0.0, 0.6, 0.2}
\begin{document}
\title{Complexity and hardness of random peaked circuits}
%\author[1]{Scott Aaronson\thanks{aaronson@cs.utexas.edu}}
\author[1,2]{Yuxuan Zhang\orcidlink{0000-0001-5477-8924}\thanks{quantum.zhang@utoronto.ca}} 
%\affil[1]{Department of Computer Science, The University of Texas at Austin.}
\affil[1]{Department of Physics and Centre for Quantum Information and Quantum Control, University of Toronto}
\affil[2]{Vector Institute for Artificial Intelligence, W1140-108 College Street, Schwartz Reisman Innovation Campus, Toronto, Ontario M5G 0C6, Canada}
%\date{\today}
\maketitle

\begin{abstract}

Near-term feasibility, classical hardness, and verifiability are the three requirements for demonstrating quantum advantage; most existing quantum advantage proposals achieve at most two. A promising candidate recently proposed is through randomly generated ``peaked circuits": quantum circuits that look random but with high output-weight on one of its output strings. 
In this work, we study an explicit construction for random peaked circuits that is closely related to the model studied in~\cite{aaronson2024verifiable}. Our construction involves first selecting a random circuit $C$ of polynomial size, which forms a $k$-design. Subsequently, a second random circuit $C'$ is chosen from the same architecture, subject to a postselection criterion: $C'$ must exhibit a high overlap with $C$ in one of their rows. The composition of these two circuits, $P = C'^\dagger C$, yields a peaked circuit where the local properties of each gate appear random. Utilizing unitary design theory properties, we demonstrate that the circuits generated by this method are non-trivial; specifically, $C'$ is provably far from $C^\dagger$. Indeed, with overwhelmingly high probability, a random peaked circuit generated this way is non-compressible and is of circuit complexity $\tilde \Omega(nk)$. This resolves an open problem posed by Aaronson in 2022~\cite{aaronson2022much}: it shows that peaked circuits selected at random are highly likely to be non-trivial. 

Secondly, employing a polynomial method, we analytically establish that estimating the peakedness of a circuit sampled from a slightly perturbed random peaked circuit distribution, to within a $2^{-\text{poly}(n)}$ additive error, is average-case \#P-hard, even when the peaked string is known. When the additive error is relaxed to $1/\text{poly}(n)$, we note that the worst-case scenario for this problem is BQP-complete. Under widely accepted assumptions on random quantum circuits, we identify a regime where no classical polynomial-time sequential simulator (that simulates quantum states gate-by-gate) attains inverse-polynomial additive accuracy on the peak on a non-negligible fraction of instances.
 %we conjecture that an average-case BQP-hardness result should hold true by giving analytical arguments on why existing classical simulation algorithms should fail for this problem.

Thirdly, we study using peaked circuits as a practical attempt for a verifiable quantum advantage protocol. While the postselection method for generating peaked circuits could be costly, we demonstrate that numerical search for $C'$ with randomized initialization successfully returns a random peaked circuit, achieving the properties as theoretically predicted. Although numerical optimization alone cannot reach system sizes beyond the classically simulable regime, we propose a circuit stitching method that reliably generates large peaked circuits within a regime suitable for demonstrating quantum advantage. %Finally, we discuss other practical utility of peaked circuits, such as a benchmarking protocol for real quantum devices.
\end{abstract}
\newpage
\tableofcontents
\newpage
\begin{figure}
    \centering
    \includegraphics[width=.95\linewidth]{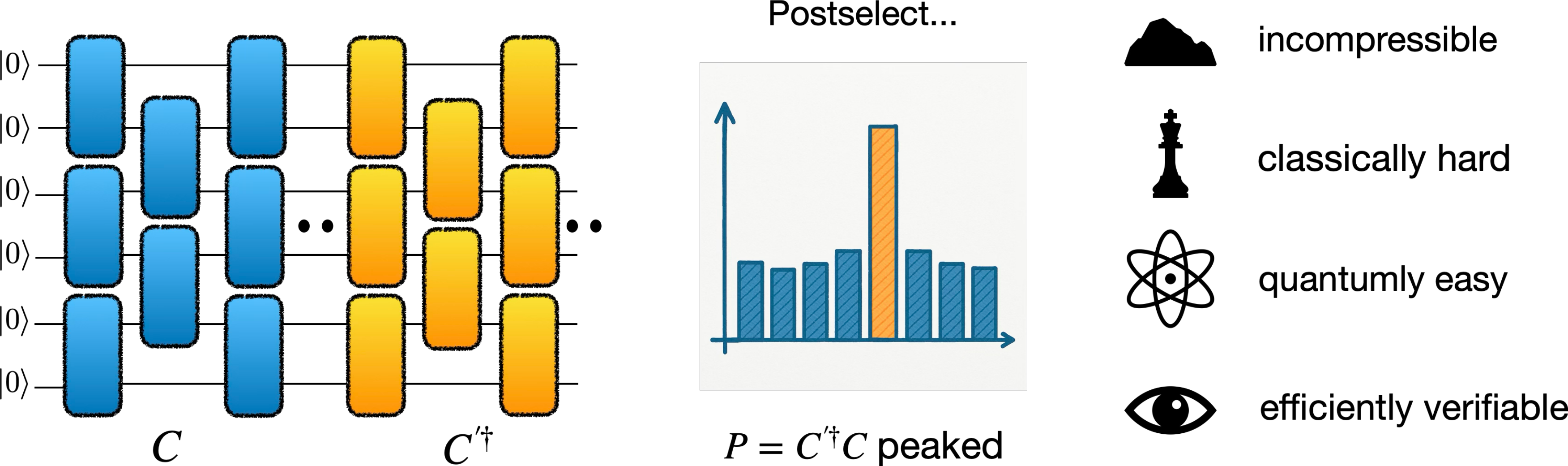}
    \caption{{\bf Overview of the results in this paper:} We consider generating peaked circuits from random circuits and postselectiing on the output distribution being peaked. We analytically prove that the peaked circuits generated this way are incompressible. Furthermore, we show that estimating the output weight of the peaked string can be average case computationally hard for a classical computer, whereas a quantum computer can obtain the peakedness by simply running the circuit and perform sampling. Compared to conventional random circuit sampling protocol, peaked circuits are easy to verify: the peaked string and its corresponding weight serves as a witness for efficient verification. }
    \label{fig:1}
\end{figure}
\section{Introduction}\label{sec: intro}

More than four decades after Feynman’s proposal to simulate quantum physics with quantum computers~\cite{feynman2018simulating}, quantum hardware has progressed rapidly while theory has delivered powerful techniques for quantum error correction and mitigation~\cite{lidar2013quantum,temme2017error,acharya2023suppressing,google2024belowthreshold,bausch2024learning,dasu2025breaking,bluvstein2024logical,zhang2023quantum}. Thanks to these advances, quantum computing today is approaching regimes of classically intractable computation~\cite{harrow2017quantum, preskill2018quantum,kim2023evidence,arute2019quantum,zhong2020quantum,madsen2022quantum,deng2023gbs255,bravyi2018quantum,kretschmer2025demonstrating}.
Among the many tasks that a quantum machine is hoped to outperform its classical counterpart, a prominent near-term target is sampling-based quantum advantage (sometimes called “quantum supremacy”): generating samples from a distribution produced by a quantum circuit for which classical approximation is provably hard. Over the past decade, sampling from random quantum or linear-optical circuits has emerged as a leading candidate, supported by worst-to-average-case complexity evidence for the classical simulability of output probabilities~\cite{aaronson2011computational,aaronson2013bosonsampling,aaronson2016complexity,movassagh2023hardness}. Experimentally, several platforms have reported large-scale demonstrations, including random circuit sampling with superconducting and trapped-ion processors and Gaussian boson sampling with photonic devices~\cite{arute2019quantum,zhong2020quantum,madsen2022quantum,deng2023gbs255,liu2025certified}, while concurrent advances in classical simulation continue to probe the depth and noise regimes where a decisive advantage persists~\cite{pan2022solving,dellios2023simulating,huang2020classical,aharonov2022noisyrcs,zlokapa2023boundaries,morvan2024phase,zhao2024leapfrogging,oh2023spoofing}.

At a high level, an ideal quantum advantage protocol should satisfy all three of the following criteria:
(i) feasibility on near-term noisy hardware,
(ii) average-case classical hardness guarantees, and
(iii) an efficient, scalable verifier.
A persistent obstacle for sampling proposals is (iii) verifiability. Cross-entropy benchmarking for RCS~\cite{boixo2018characterizing,arute2019quantum,aaronson2019classical}, for example, offers a pragmatic test but becomes computationally demanding for $\gtrsim 70$ qubits, and can blur the hardness line when realistic noise and approximate simulators are taken into account~\cite{zlokapa2023boundaries,morvan2024phase}. Recent progress on verification protocols such as Bell sample extraction~\cite{hangleiter2024bell} provide a promising way to certify certain physical properties, yet they do not currently yield rigorous complexity guarantees for the \emph{entire} output distribution.

Outside of sampling, leading advantage candidates typically miss at least one pillar. For instance, the quantum approximate optimization algorithm (QAOA) lacks a general theoretical guarantee of speedup on natural problem families (albeit empirical and heuristic advantages have been observed)~\cite{farhi2014quantum,farhi2016quantum,zhou2020quantum,zhang2021qed,ebadi2022quantum}; instantaneous quantum polynomial (IQP) schemes~\cite{shepherd2009temporally,bremner2011classical,bremner2017achieving,codsi2022classically,maslov2024fast,rajakumar2024polynomial} face ongoing challenges in scalable, robust verification~\cite{kahanamoku2019forging,bremner2023iqp} and from secret-extraction attacks~\cite{gross2023secret}. Shor’s algorithm~\cite{shor1994algorithms}, while offering strong asymptotic guarantees, is not feasible on near-term devices due to the error-correction overhead required for the necessary circuit sizes and depths. Recent demonstrations of beyond-classical behavior in specific quantum simulations are valuable, practical milestones~\cite{daley2022practical, kim2023evidence,king2025beyond,haghshenas2025digital}. However, they differ in emphasis from sampling-based advantage protocols: they typically rely on error-mitigation pipelines or model-specific analyses and may lack average-case hardness guarantees from a computational perspective.
\begin{table}[t]
\centering
\caption{Comparing leading quantum advantage proposals and their known complexity-theoretic properties. Expanded on~\cite{bouland2018quantum}.}
\label{tab:bfnv-proposals}
\begin{tabular}{lccccc}
\toprule
\textbf{Proposal} & \textbf{Worst-case} & \textbf{Average} & \textbf{Efficient}  &\textbf{Near-term} \\
& \textbf{hardness} & \textbf{case hardness} & \textbf{Verification} &  \textbf{Feasibility} \\
\midrule
BosonSampling~\cite{aaronson2013bosonsampling}        & \cmark & \cmark &                  & \cmark         \\
FourierSampling~\cite{fefferman2015power}       & \cmark & \cmark &                 &         \\
IQP~\cite{shepherd2009temporally}                 & \cmark &        &          &         \\
Random Circuit Sampling\cite{aaronson2016complexity,bouland2018quantum}& \cmark & \cmark &          & \cmark  \\
QAOA~\cite{farhi2014quantum}& \cmark & &  \cmark        & \cmark  \\
Peakedness Estimation \cite{aaronson2024verifiable}, our work& \cmark & \cmark & \cmark        & \cmark  \\
\bottomrule
\end{tabular}

\vspace{0.5ex}
\raggedright

\end{table}
 These gaps motivate new sampling-based models whose circuit decomposition remain ``random-looking'' yet contain a simple structure enabling efficient checks~\cite{aaronson2024verifiable}. The proposal of~\cite{aaronson2024verifiable} constructs random peaked circuits (RPCs) that look random globally yet place anomalously large probability on a designated bit string, supplying a simple verification witness through the peaked string without sacrificing the randomness features that underlie hardness arguments. 
\begin{definition}\label{def: peaked}
For a unitary $U$ on $n$ qubits, define the peak weight
\begin{equation}
p_{\max}(U)\;:=\;\max_{x\in\{0,1\}^n}\;|\langle x|U|0^n\rangle|^2.
\end{equation}
A circuit is called $\delta$-peaked if $p_{\max}(U)\geq \delta$. 
A circuit is considered `peaked' for short if $\delta = 1/\poly (n)$.
\end{definition} 
Naturally, calculating the secret string $x$ and its peakedness, $p_x$ can be used as a challenge for verifiable quantum advantage demonstrations. This raises three questions: what are the circuit complexity of these circuits, how hard to classically compute their output distributions are on average, and how to generate such circuits efficiently. We address these questions with analytical and numerical arguments.

To begin with, we give an explicit construction of random peaked circuits closely aligned with the numerical optimization model proposed in~\cite{aaronson2024verifiable}: there, $\tau_C$ layers of random circuits are applied, followed by $\tau_{C'}$ layers of variational circuits, whose parameters are varied with gradient-based search to maximize on some designated output string. For the analytical model considered in this work, we generate random peaked circuits through the following postselection procedure:
\begin{definition}[Random Peaked Circuit (RPC) Construction via Postselection.]\label{def: pcc}
To generate a distribution over “peaked” circuits, consider the following procedure:
\begin{enumerate}
    \item Draw a depth-$\mathrm{poly}(n)$ circuit $C$ from an ensemble forming a unitary $k$-design: e.g. a polynomial-sized random circuit with each gate drawn $\sim \mu$, the Haar random distribution.
    \item Draw a second circuit $C'$ from the same architecture.
    \item Define $P \;=\; C'^{\dagger}C$~\footnote{Throughout the work, we choose to break $P$ into two chunks as a proof strategy, although many of the analytical results will hold when $P$ is picked at random as a whole}. Output $P$ if $P$ is $\delta$-peaked given some desired peakedness $\delta$ and desired peaked string $x_\star$; otherwise, repeat the last procedure.
\end{enumerate}
\end{definition}
Call the distribution of peaked circuits generated this way $\nu_\delta$; or simply $\nu$ whenever there is no ambiguity. Under this construction, we first analytically show that $P$ is peaked on one of its dimensions while other $d-1$ levels remain $k$-design-like, and using a combination of block decomposition, unitary design, we show that the ensemble of $\{P\}$ contains a large amount of distinct circuits, with nearly equal probability mass distribution. Next, from a circuit packing theorem, we show that this random peaked circuit ensemble requires a large number of elementary operations to implement, namely: random peaked circuits are incompressible. We formally cast this result as a lower bound in circuit complexity: Given a fixed gate set, the circuit complexity measures the minimum gates required to prepare a given quantum operation. A direct corollary is that $C'^{\dagger}$ cannot be a trivial or even obfuscated reverse of $C$: this resolves an open problem posed by Aaronson in 2022~\cite{aaronson2022much}. %: with overwhelming probability $P$ is incompressible, with circuit complexity $\tilde{\Omega}(nk)$. 

\begin{theorem}[Circuit complexity of random peaked circuits. Informal.]\label{thm: incom}
   Suppose we have some architecture with poly-sized random gates that forms a unitary $k$-design. The obtained $C'^\dagger C$ according to \cref{def: pcc} at least requires $\tilde \Omega(nk)$ gates to implement with overwhelmingly high probability. 
\end{theorem}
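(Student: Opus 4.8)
The plan is to derive the circuit-complexity lower bound from a counting argument, where the number of distinct circuits in the support of $\nu_\delta$ is controlled using the $k$-design property. First I would recall the standard circuit-packing bound: for any fixed universal gate set, the number of $n$-qubit unitaries reachable by $\le m$ two-qubit gates is at most $2^{O(m \log m + m \log n)}$ up to some fixed precision $\epsilon$ in operator norm, i.e. an $\epsilon$-net of that reachable set has size $2^{\tilde O(m)}$. So if I can show that $\nu_\delta$ is supported (in an appropriate sense) on at least $2^{\tilde\Omega(nk)}$ unitaries that are mutually $\epsilon$-separated, and that no single $\epsilon$-ball carries more than an exponentially small fraction of the probability mass, then by a pigeonhole/union-bound argument a unitary drawn from $\nu_\delta$ has complexity $\tilde\Omega(nk)$ except with exponentially small probability.

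The crux is therefore a lower bound on the ``effective support size'' of $\nu_\delta$, and this is where the $k$-design hypothesis enters. The idea is to use the $k$-design property of the underlying random-circuit ensemble for $C$ and $C'$ to control the moments of $P = C'^\dagger C$, restricted to the $(d-1)$-dimensional block orthogonal to the peaked direction (here $d = 2^n$). Concretely, I would argue that conditioning on the peakedness event at string $x_\star$ still leaves the complementary block distributed like a $k$-design on $\U(d-1)$ (this should follow from the block-decomposition analysis the excerpt says is carried out earlier, plus the fact that the postselection event only constrains one row/column of $P$). Once the conditional ensemble is a $k$-design on a $(d-1)$-dimensional space, I can upper bound the probability that it lands in any fixed operator-norm $\epsilon$-ball: a $k$-design matches Haar moments up to order $k$, and the Haar measure assigns mass at most $(\epsilon')^{\,\Theta(k \cdot \dim)} $—more carefully, using $k$-th moment bounds one gets that $\Pr[\,\|P - V\| \le \epsilon\,] \le 2^{-\tilde\Omega(nk)}$ for any fixed $V$, since covering $\U(d-1)$ at scale $\epsilon$ requires $2^{\tilde\Omega(n \cdot (d-1)^2)}$ balls while the $k$-design can only ``see'' resolution corresponding to $k$ moments, yielding the stated $\tilde\Omega(nk)$ scaling. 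Combining this per-ball mass bound with the $2^{\tilde O(m)}$ bound on the number of $\epsilon$-balls needed to cover all complexity-$\le m$ unitaries gives $\Pr[\text{complexity}(P) \le m] \le 2^{\tilde O(m)} \cdot 2^{-\tilde\Omega(nk)}$, which is $2^{-\tilde\Omega(nk)}$ as long as $m = \tilde o(nk)$.

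I would then assemble these pieces: (1) state the net/packing lemma for the fixed gate set; (2) invoke the earlier structural result that $P$ restricted to the non-peaked block is $k$-design-distributed even after postselection; (3) prove the anti-concentration estimate $\Pr[\|P-V\|\le\epsilon]\le 2^{-\tilde\Omega(nk)}$ via the $k$-th moment / Markov-inequality route applied to $\E\,\|P-V\|_{2k}^{2k}$ or an equivalent polynomial test function; (4) take a union bound over the net to conclude. A few routine technicalities to handle carefully but not belabor: making the operator-norm vs.\ Frobenius-norm and vs.\ diamond-norm notions of ``distance between circuits'' consistent, tracking the $\poly(n)$ overhead in depth so the design order $k$ that the architecture achieves is the same $k$ appearing in the bound, and ensuring the postselection success probability is at least inverse-exponential so that conditioning does not blow up tail probabilities by more than a controlled factor.

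The main obstacle I expect is step (3): getting a clean, dimension-correct anti-concentration bound for a $k$-design around an \emph{arbitrary} fixed unitary $V$. Matching $k$ Haar moments does not by itself immediately bound $\Pr[\|P-V\|\le\epsilon]$—one needs a test polynomial of degree $\le 2k$ that is large near $V$ and whose Haar expectation is small, and the quality of that polynomial (hence the exponent in the final bound) is exactly what determines whether one gets $\tilde\Omega(nk)$ or something weaker. I would look to existing ``design implies complexity growth'' arguments (e.g.\ the Brandão--Harrow--Horodecki-style moment bounds and their use in complexity lower bounds) and adapt them to the postselected, block-structured setting here; reconciling the block structure with those off-the-shelf moment inequalities is the delicate part, and it is also where the informal ``$\tilde\Omega(nk)$'' must be pinned down to an honest polylog factor.
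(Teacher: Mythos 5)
Your outline follows essentially the same blueprint as the paper --- block decomposition after first-column conditioning (Theorem~\ref{thm:haar-block}, Corollary~\ref{cor:peaked-design}), a covering-number bound for short circuits (Proposition~\ref{prop:covering}), and a pigeonhole between the covering number and a $k$-design moment estimate --- but you package the moment step differently. You propose anti-concentration around an arbitrary fixed unitary $V$, namely $\Pr[\|P-V\|\le\varepsilon]\le 2^{-\tilde\Omega(nk)}$, followed by a union bound over a net of $\mathcal{C}_{n,\le s}$. The paper instead descends to \emph{states} after the block decomposition $P\overset{d}{=}\diag(1,V)$: it packs $\gtrsim c_\delta(d-1)^k/k!$ pairwise-separated states $V\ket{\psi_0}\in\CP^{d-2}$ using the fidelity polynomial $|\langle\phi|\psi\rangle|^{2k}$ (Lemma~\ref{lem:state-packing}), lifts them to pairwise Frobenius-separated peaked unitaries (Lemma~\ref{lem:lifting}), and then takes the covering/packing ratio in Theorem~\ref{thm:packing-lb}. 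The two routes are dual views of one estimate --- the $k$-design puts mass at most $\sim k!/d^k$ on any small Fubini--Study ball --- and both land on $s=\Omega(kn/\log(kn))$. The state-level packing is slightly cleaner: the relevant projective space is $d$-dimensional rather than $d^2$-dimensional, and you never have to decide which $V$'s in your unitary-side net are ``almost peaked'' and which can be discarded.

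One sub-step as written would actually fail: Markov applied to $\E\,\|P-V\|_{2k}^{2k}$ yields an \emph{upper}-tail bound $\Pr[\|P-V\|\ge t]$, not the anti-concentration bound $\Pr[\|P-V\|\le\varepsilon]$ that your step (3) needs. As you correctly note at the end of the proposal, what is required is a nonnegative degree-$(k,k)$ polynomial in $(P,\overline P)$ that is \emph{large} near $V$ and has \emph{small} Haar expectation, to which one then applies Markov from above. The paper's choice is the state overlap $|\langle\phi|\psi\rangle|^{2k}$, with Haar mean $k!(d-1)!/(d-1+k)!\approx k!/d^{k}$; the direct unitary analogue you could use is $|\Tr(PV^\dagger)/d|^{2k}$, with Haar mean $\approx k!/d^{2k}$ for $k<d$. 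Either one discharges your step (3) under the standard $\E[U^{\otimes k}\otimes\overline U^{\otimes k}]$ definition of a unitary $k$-design; the Schatten-norm Markov variant should be dropped from the sketch, and the hedge ``or an equivalent polynomial test function'' should be replaced by the explicit overlap polynomial, since that choice is load-bearing and the exponent $\tilde\Omega(nk)$ is exactly what it delivers.
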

\begin{figure}
    \centering
    \includegraphics[width=1\linewidth]{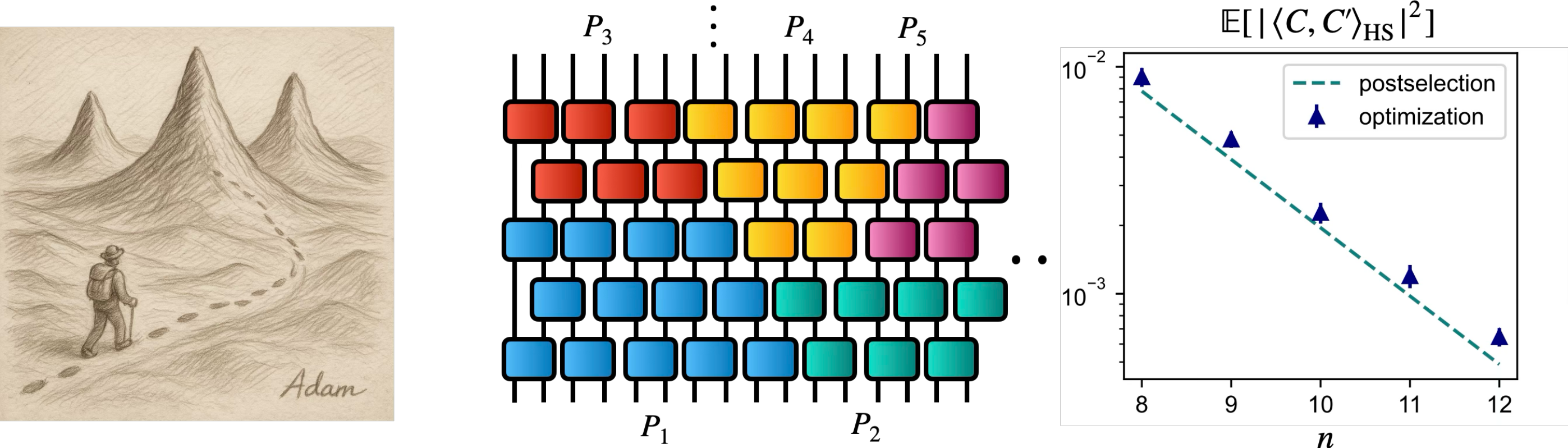}
    \caption{{\bf Left:} In practice, generating RPC from postselection has an extremely small success probability.  Therefore, we randomly generate $C$ and variationally optimize $C'$ with randomized initializations and gradient-based optimizer, Adam. This figure reflects an artist's impression of the process: Adam~\cite{kingma2014adam}, the tireless optimizer start with a random location in the parameter space. Following the local gradient in the landscape, Adam find the closest peaked circuit to the initialization. This search process is a practical rescue for generating random peaked circuits. {\bf Middle:} To construct larger peaked circuits that goes beyond the size of classical simulability, we show that it is possible to combine small peaked circuits while still reliably keep tracking of the peaked string and peakedness. {\bf Right:} Through a comparision in the Hilbert-Schmidt overlap between $C$ and $C'$, the plot shows that the circuit generated with Adam optimization has similar properties to those generated by postselection (as predicted by our theory). The numerical result is averaged over 100 random instances of $C$. This clearly shows that, on average, the random peaked circuits generated numerically are far from identity and very obfuscated.}
    \label{fig:2}
\end{figure}

Secondly, as in the Boson sampling and RQC~\cite{aaronson2011computational,aaronson2013bosonsampling,aaronson2016complexity,bouland2018quantum,movassagh2023hardness} case we are interested in the rigorous complexity result of RPCs. To this end, we first show that \emph{worst case} of nearly exact computation of a single output weight of a peaked circuit (also known as `strong simulation' in literature) is \#P hard. Next, we slightly perturb each gate in the RPC distribution with gates from a hard circuit instance, and we show that the resultant distribution is also peaked. Lastly, we prove that if this slightly perturbed RPC distribution were average case easy, then there exist a low degree polynomial one could construct computationally efficient, whose extraction gives the almost exact estimation of the worst case circuit. Therefore, we reach the second main theorem in this work
\begin{theorem}[Hardness of simulating random peaked circuits. Informal.]\label{thm: hard}
    Estimating $p_{\max}(P)$ for a $P$ sampled with a slightly perturbed version of $\nu$ to within $2^{-\mathrm{poly}(n)}$ additive error is average-case \#P-hard. This result holds even when the peaked string $x_\star$ is known. 
\end{theorem}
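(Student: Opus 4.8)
The plan is to follow the standard worst-to-average-case template of Bouland--Fefferman--Nirkhe--Vazirani and Movassagh, adapted to the peaked setting. First I would fix a \#P-hard quantity to target: by the worst-case result mentioned above (nearly exact computation of a single output amplitude of a peaked circuit is \#P-hard), there is a worst-case circuit $Q$ on $n$ qubits such that computing $|\langle x_\star | Q | 0^n\rangle|^2$ to within $2^{-\poly(n)}$ is \#P-hard. I would then define, for each gate $g_i$ of the target circuit $Q$ and each gate $h_i$ of a freshly sampled RPC $P\sim\nu$, an interpolated gate $g_i(\theta) := e^{i\theta H_i}$ (or a Cayley/rational path) that equals $h_i$ at $\theta = 0$ and equals $g_i$ at some $\theta = \theta^\star$ close to $0$. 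The key point is to choose the perturbation small enough that the interpolated circuit $P(\theta)$ stays $\delta$-peaked on the \emph{same} string $x_\star$ for all $\theta\in[0,\theta^\star]$ — this is exactly where I would invoke the ``slightly perturbed $\nu$'' in the statement, and where I would use continuity of $p_{\max}$ together with the fact that a Haar-random design circuit followed by a worst-case perturbation of each gate by $O(1/\poly(n))$ in operator norm moves $|\langle x_\star|P(\theta)|0^n\rangle|^2$ by at most $O(\poly(n)\cdot 1/\poly(n)) = o(\delta)$, so peakedness on $x_\star$ is preserved. Defining the perturbed distribution $\tilde\nu$ as the law of $P(\theta)$ for $\theta$ drawn uniformly from a tiny interval around $0$, each $P(\theta)$ is then a valid sample from (a slight perturbation of) $\nu$, and by the hiding/symmetry property of the design ensemble (the marginal of each perturbed gate is close to its unperturbed marginal in total variation), the tuple of gates of $P(\theta)$ is distributed $\delta_{\mathrm{TV}}$-close to a genuine $\tilde\nu$-sample for every fixed small $\theta$.

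Next I would set up the polynomial-interpolation core. The amplitude $\langle x_\star | P(\theta) | 0^n\rangle$ is, after expanding each $e^{i\theta H_i}$ (or choosing each gate-path to be a low-degree rational/polynomial function of $\theta$ as in Movassagh's Cayley-path argument), a polynomial $q(\theta)$ in $\theta$ of degree $D = \poly(n)$; hence $|q(\theta)|^2 = p_{x_\star}(P(\theta))$ is a polynomial of degree $2D$. Suppose toward a contradiction that there is a classical algorithm $\mathcal{A}$ estimating $p_{\max}(P) = p_{x_\star}(P)$ (using that $x_\star$ is known, so $p_{\max} = p_{x_\star}$) to within $2^{-\poly(n)}$ additive error on, say, a $1 - 1/\poly(n)$ fraction of $\tilde\nu$-instances. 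I would evaluate $\mathcal{A}$ on $O(\poly(n))$ points $\theta_1,\dots,\theta_m$ in the small interval, each of which is individually $\tilde\nu$-distributed (up to $o(1)$ TV distance), so a union bound over $m = \poly(n)$ points leaves $\mathcal{A}$ correct on all of them with probability $\geq 1 - \poly(n)/\poly(n)$; boosting the success fraction first (e.g. by a Markov/median argument or by demanding a $1 - 1/\poly(n)$ success rate with a larger polynomial) makes this work. Then I would run Berlekamp--Welch / robust polynomial interpolation to recover the degree-$2D$ polynomial $p_{x_\star}(P(\theta))$ from its noisy values, and extrapolate to $\theta = \theta^\star$ to obtain $|\langle x_\star|Q|0^n\rangle|^2$ to within $2^{-\poly(n)}$ — contradicting \#P-hardness of the worst case.

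The main obstacle, and the place I would spend the most care, is the simultaneous satisfaction of three competing constraints on the perturbation size $\theta^\star$ and the interpolation interval. It must be \emph{large enough} that (a) the worst-case circuit $Q$ is actually reached at $\theta = \theta^\star$ with the interpolating-path coefficients having only $\poly(n)$-bit precision, and (b) the robust-interpolation step tolerates the $2^{-\poly(n)}$ noise — this requires controlling the Lebesgue/condition number of interpolation on the chosen node set, which blows up if the nodes are too clustered, forcing a Chebyshev-type node placement and a careful $2^{-\poly(n)}$ vs. node-spacing accounting (the classic ``error amplification under extrapolation'' issue from BFNV/Movassagh). Simultaneously it must be \emph{small enough} that peakedness on $x_\star$ is preserved along the entire path and that the perturbed per-gate marginals stay within $o(1/\poly(n))$ TV of the design marginals so that the union bound over interpolation nodes survives. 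I would resolve this by taking $\theta^\star = 2^{-\Theta(\poly(n))}$ with a carefully tuned exponent, mirroring Movassagh's rational-function rescaling trick so that after a change of variable the interpolation is effectively over an $O(1)$-sized interval with well-conditioned nodes, while the physical perturbation to each gate remains exponentially small and hence trivially peakedness-preserving; verifying that the peakedness slack $\delta - p_{x_\star}(P(\theta))$ stays bounded away from the peakedness threshold uniformly along this (now exponentially short) path is then immediate from operator-norm continuity. I expect the remaining steps — the worst-case \#P-hardness input, the hiding property from design theory, and the Berlekamp--Welch decoding — to be essentially routine given the cited machinery.
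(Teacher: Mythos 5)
Your overall template (peaked embedding for the worst case, gate-level interpolation path, sampling nodes in a small interval, Berlekamp--Welch recovery, extrapolate to the worst-case endpoint) matches the paper's proof. However, the specific resolution you propose for the ``competing constraints'' contains a genuine error: setting $\theta^\star = 2^{-\Theta(\mathrm{poly}(n))}$ and claiming that ``the physical perturbation to each gate remains exponentially small and hence trivially peakedness-preserving'' is inconsistent with actually reaching a \emph{fixed} worst-case circuit $Q$ at $\theta = \theta^\star$. The worst-case gates $g_i^\ast$ differ from the random RPC gates $g_i$ by an $O(1)$ amount in operator norm generically, so if $g_i(\theta^\star)=g_i^\ast$ with $\theta^\star = 2^{-\mathrm{poly}(n)}$, then the generator must satisfy $\|H_i\| = \Theta(1/\theta^\star) = 2^{\mathrm{poly}(n)}$. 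The per-gate perturbation $\|g_i(\theta)-g_i(0)\| \approx \theta\|H_i\| \approx \theta/\theta^\star$ is therefore not exponentially small at the interpolation nodes unless those nodes are placed at $\theta \ll \theta^\star$, in which case the extrapolation amplification factor returns; and a Taylor truncation of $e^{-i\theta H_i}$ accurate on $[0,\theta^\star]$ would need exponentially large degree, which destroys the $\mathrm{poly}(n)$ degree bound the entire interpolation argument relies on. The ``change of variable to an $O(1)$ interval'' is a reparameterization and does not change the intrinsic conditioning: what controls the error amplification in Berlekamp--Welch extrapolation is the ratio between the extrapolation point's distance and the node-interval length, and that ratio is invariant under affine rescaling of $\theta$.

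The paper resolves the tension differently, and more simply: the interpolation interval is taken to be $[0,\tilde\theta]$ with $\tilde\theta = 1/\mathrm{poly}(n)$ (chosen so that $1/\tilde\theta \gg m/\delta$), which is small enough that $(\theta,K)$-truncated perturbed RPCs remain $1/\mathrm{poly}(n)$-peaked and the perturbed gate distribution stays close to $\nu$, but it is \emph{not} exponentially small. One then extrapolates to $\theta = 1$ (well outside the interval) and simply absorbs the resulting amplification factor $\approx\bigl(|2-\tilde\theta|/\tilde\theta\bigr)^{mK} = 2^{\mathrm{poly}(n)\log n}$ into the exponential precision budget: because the oracle error is $2^{-\mathrm{poly}(n)}$ with a sufficiently large polynomial, the extrapolated estimate of $p_0$ is still exponentially accurate, and the separate Taylor-truncation error $2^{O(mn)}/(K!)^m$ is also made exponentially small by choosing $K = \mathrm{poly}(n)$. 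This absorption of a super-polynomial (but sub-double-exponential) amplification factor is precisely why the theorem's accuracy requirement is $2^{-\mathrm{poly}(n)}$ rather than $1/\mathrm{poly}(n)$ --- a point the paper explicitly flags when it notes the bound cannot be pushed below roughly $2^{O(m)}$ with this technique. Replacing your exponentially-short-path resolution with the paper's polynomially-short-interval-plus-long-extrapolation resolution fixes the gap; the rest of your outline is sound.
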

Namely, the average case hardness for simulating RPC is as hard as simulating RQC as far as an exponentially small additive error is considered. At inverse–polynomial additive accuracy, the associated worst-case peak-estimation task is PromiseBQP-complete. Beyond worst case, we also rule out a broad classical strategy: we identify a regime that any sequential simulator $\mathcal{S}$ that updates an approximation to quantum state on a gate-by-gate basis cannot approximate the output weight of a $\delta$-peaked circuit $P$ on an average case basis. The intuition is that our construction first routes the computation through a scrambling segment (e.g., $C\ket{0}$ forms a state $k$-design) before $C'$ recovers the peak; maintaining inverse-polynomial error across this stage forces $\mathcal{S}$ to track a Haar-like intermediate state with near-unit fidelity, which is widely believed to be hard. Together with the fact that RPCs are incompressible, this suggests there is no substantially “smarter” classical approach that circumvents this barrier than a sequential gate-by-gate simulation.

%give strong evidence toward average-case hardness in this regime and explain why leading classical simulators are unlikely to achieve such accuracy on typical instances. 

 The literal postselection recipe for generating peaked circuits is not scalable in $n$. In a Hilbert space of dimension $d=2^n$, forcing a Haar–random column $\ket{c'}=C'\ket{0^n}$ to be close to a fixed target $\ket{c}=C\ket{0^n}$, namely, to have $|\!\braket{c}{c'}\!|^2 \ge \delta$, occurs with probability exactly $(1-\delta)^{d-1}$, so even modest alignment is exponentially rare in $d$. A similar conclusion holds when $C'$ is drawn from a unitary $k$-design. Consequently, naive postselection has exponentially small acceptance and is impractical at scale.

 We therefore move the difficulty to instance generation and adopt a practical remedy: we variationally synthesize $C'$ so that its first column approximates $\ket{c}$ to fidelity $\delta$. Once such circuits are found, the verifiable quantum-advantage protocol follows directly. Under reasonable assumptions, we argue that this search-based construction preserves the information-theoretic content of the postselected definition while enabling implementation on hardware; moreover, with randomized initialization the optimizer returns a representative among many degenerate realizations, effectively obfuscating the underlying circuit decomposition.
 
 Numerically, averaging over $100$ instances and multiple system sizes with both $C$ and $C'$ taken as brickwall circuits of equal depth $\tau_C=\tau_{C'}=n$, we evaluate the normalized Hilbert–Schmidt overlap $\mathbb{E} [|\langle C,C'\rangle_{\mathrm{HS}}|^2]$ with $d=2^n$ and observe that its mean concentrates near $2/d=2^{\,1-n}$, i.e., it decays exponentially with $n$, just as the predicted property for the RPCs constructed from postselection. 
 
 To scale up, we stitch together small peaked blocks while tracking a designated input–output basis string through each block: if block $i$ is $\delta_i$-peaked on $x_{i-1}\!\to x_i$ (assuming $\delta_i$ is sufficiently close to 1), then the $L$-composed circuit $U$ remains peaked along the tracked path with $|\!\braket{x_L}{U|x_0}\!|^2\approx\prod_i(1-\delta_i)$: choosing $\delta_i = O(1/L)$ makes the peakedness of the composed circuit constant. For a classical challenger without knowledge of the stitching pattern, spoofing remains hard: even in a $1$D layout, $L = O(\log n)$ blocks already induce a superpolynomial number of candidate stitch patterns (e.g., $n^{\Omega(\log n)}$), making brute-force enumeration and testing computationally infeasible. Furthermore, we may apply local circuit rewrite rules to make the stitching pattern more obscure, such that the circuit avoids returning to a concentrated distribution near the boundary between the blocks. We propose using these composed large peaked circuits as a verifiable quantum advantage protocol.

\section{Circuit complexity of random peaked circuits}~\label{sec: complexity}
It is natural to expect peaked circuits to have high circuit complexity: most textbook quantum algorithms that yield sharply peaked output distributions already require polynomially many gates. In this work, however, we move beyond this worst-case intuition and ask an average-case question: what is the typical circuit complexity of a random peaked circuit? We argue that simulating peaked circuits remains computationally hard on average. 
\subsection{Constructing RPC's with postselection}
Throughout the rest of the work, we make use of the following random peaked circuit construction in Def.~\ref{def: pcc}: Pick a random “scrambling” circuit $C$, pick another random circuit $C'$ with the same layout, and set $P=C'^{\dagger}C$. Keep $P$ only if it is $\delta$-peaked on a chosen bitstring $x_\star$; otherwise, resample $C'$ (and/or $C$) and repeat. The key observation is that a randomly drawn peaked circuit, constructed via postselection, has high circuit complexity. 

Without loss of generality and clarity, we set $x_\star=0^n$ (as different computational basis are linked by at most one layer of X gates only) and consider the postselection criteria where the peakedness is exactly 1 throughout this section. Under this postselection, $C'$ and $C$ share the same initial state vector up to a phase, while the action of $C'$ on any other computational basis state remains Haar random (up to normalization). Thus, although $C'$ and $C$ are correlated through their first column, the remaining matrix elements of $C'$ (subject to orthogonality and normalization constraints) are otherwise distributed as in the random ensemble. This construction enables us to study quantum circuits that are highly “peaked” on the $|0^n\rangle$ state, while retaining typical randomness elsewhere, and is useful for analyzing both average- and worst-case properties of such circuits.

One might first worry that the circuits generated this way are merely trivial inverses of each other (i.e., $C' = C$)~\cite{aaronson2024verifiable}. We show here that it is not the case using random circuits' design properties. In fact, we rigorously prove that there is no easy way to `cancel out' the gates in the middle with any circuit rewrites and contractions. In other words:

Here’s the intuition why this is true. If we sample two independent unitaries $C,C'\in\U(2^n)$ and postselect on having the same first column, i.e., $C\ket{0^n}=C'\ket{0^n}=\ket{\psi_0}$, then we’ve fixed a single rank-1 direction $\ket{\psi_0}$ while leaving the action on $\ket{\psi_0}^\perp$ essentially unconstrained. In the Haar case this factorizes \emph{exactly}: there is a change of basis $R$ with $R\ket{0^n}=\ket{\psi_0}$ such that $C$ and $C'$ look like independent blocks on $\U(2^n\!-\!1)$, and the “peaked” unitary $P=C'^\dagger C$ reduces to $\,\mathrm{diag}(1,V)$ with $V\sim$ Haar on $\U(2^n\!-\!1)$, independent of $\ket{\psi_0}$. For unitary $k$-designs the same picture holds up to degree-$k$ moments: conditioning only fixes the first column projectively, and all balanced degree-$\le k$ statistics of $P$ match those of the block model with $V$ drawn from a $(2^n\!-\!1)$-dimensional $k$-design.

This decomposition turns circuit complexity into a problem about the $(d-1)$-block: Haar $V$ has typical $\Theta(4^n)$ two-qubit gate complexity (the $R$ sandwich is just $O(n2^n)$ overhead), and for $k$-design $V$ we get a clean packing to complexity route—there are $\gtrsim (d-1)^k/k!$ well-separated outputs, forcing average (and w.h.p.) lower bounds of $\tilde\Omega(k n)$ gates at constant precision for the peaked ensemble. We provide a formal proof below.

\subsection{Block decomposition under first-column conditioning}

Let $\U(d)$ be the unitary group with Haar probability measure $\mu_d$ and $d=2^n$.
For a unit vector $\ket{\psi_0}\in\C^d$, write its stabilizer
\[
  \mathrm{Stab}(\ket{\psi_0}):=\{W\in \U(d): W\ket{\psi_0}=\ket{\psi_0}\}\cong \U(d-1)\times \U(1).
\]
Fix any $R\in\U(d)$ with $R\ket{0^n}=\ket{\psi_0}$. In the orthogonal decomposition $\C^d=\mathrm{span}\{\ket{\psi_0}\}\oplus \ket{\psi_0}^\perp$ we use block notation
\[
  R^\dagger U R
  =\begin{pmatrix} \alpha & a^\dagger \\[2pt] b & X \end{pmatrix},
  \qquad \alpha\in\C,\ a,b\in\C^{d-1},\ X\in\C^{(d-1)\times(d-1)}.
\]
Throughout, we show that conditioning on equality of first columns can be understood via a regular conditional distribution.

\begin{theorem}[Exact block--Haar decomposition]
\label{thm:haar-block}
Let $C,C'\overset{\mathrm{i.i.d.}}{\sim}\mu$. Condition on $\{C\ket{0^n}=C'\ket{0^n}\}$ and denote $\ket{\psi_0}:=C\ket{0^n}=C'\ket{0^n}$.
Then there exist independent random variables
\[
  \phi,\phi'\sim \text{Unif}[0,2\pi),\qquad X,Y\overset{\mathrm{i.i.d.}}{\sim}\mu_{d-1}
\]
such that, for any (measurable) choice of $R\in\U(d)$ with $R\ket{0^n}=\ket{\psi_0}$,
\[
  C \ \overset{d}{=}\ R \begin{pmatrix} e^{i\phi} & 0 \\[2pt] 0 & X \end{pmatrix},
  \qquad
  C' \ \overset{d}{=}\ R \begin{pmatrix} e^{i\phi'} & 0 \\[2pt] 0 & Y \end{pmatrix}.
\]
Consequently,
\[
  P:=C'^{\dagger}C \ \overset{d}{=}\ \,\diag\!\bigl(1,\,V\bigr)\,,
  \qquad V:=Y^\dagger X \sim \mu_{d-1},
\]
and $V$ is independent of $\ket{\psi_0}$ (and hence of $R$ aside from the conjugation).
\end{theorem}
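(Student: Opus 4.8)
The plan is to exhibit an explicit $R$-valued measurable map $\ket{\psi_0}\mapsto R(\ket{\psi_0})$ (a measurable section of the bundle $\U(d)\to\mathbb{CP}^{d-1}$, which exists since the map $R\mapsto R\ket{0^n}$ admits a Borel section), and then to compute the regular conditional distribution of $(C,C')$ given the event $\{C\ket{0^n}=C'\ket{0^n}=\ket{\psi_0}\}$ by exploiting the left-invariance of Haar measure. First I would write $C = R(\ket{\psi_0})\,\widetilde C$ where $\widetilde C := R(\ket{\psi_0})^\dagger C$; since $R$ is a measurable function of $\ket{\psi_0}$ alone, conditioning on $\ket{\psi_0}=C\ket{0^n}$ is equivalent to conditioning on $\widetilde C\ket{0^n}=\ket{0^n}$, i.e.\ on $\widetilde C\in\mathrm{Stab}(\ket{0^n})\cong\U(1)\times\U(d-1)$. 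The key observation is that by left-invariance of $\mu$, the conditional law of $\widetilde C$ given that it lies in (an $\epsilon$-neighborhood of) $\mathrm{Stab}(\ket{0^n})$ converges, as $\epsilon\to0$, to the Haar measure on the subgroup $\mathrm{Stab}(\ket{0^n})$, which factorizes as $\mathrm{Unif}[0,2\pi)\times\mu_{d-1}$. This is exactly the statement that $C\overset{d}{=}R\,\mathrm{diag}(e^{i\phi},X)$; the same argument applied independently to $C'$ gives $\phi',Y$, and independence of the two pairs follows from independence of $C$ and $C'$ together with the fact that $R$ is a common deterministic function of the (shared) conditioned value $\ket{\psi_0}$.

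For the consequence about $P$, I would simply substitute:
\[
P = C'^\dagger C \;\overset{d}{=}\; \begin{pmatrix} e^{-i\phi'} & 0 \\ 0 & Y^\dagger \end{pmatrix} R^\dagger R \begin{pmatrix} e^{i\phi} & 0 \\ 0 & X \end{pmatrix} = \begin{pmatrix} e^{i(\phi-\phi')} & 0 \\ 0 & Y^\dagger X \end{pmatrix},
\]
and then absorb the top-left phase: since $P\ket{0^n}=e^{i(\phi-\phi')}\ket{0^n}$ but a global phase on a unitary circuit is physically irrelevant (or: one may fix the phase convention so that $\braket{0^n}{C|0^n}=\braket{0^n}{C'|0^n}\in\R_{>0}$, forcing $\phi=\phi'$), we get $P\overset{d}{=}\mathrm{diag}(1,V)$ with $V=Y^\dagger X$. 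That $V\sim\mu_{d-1}$ is just translation-invariance of Haar measure on $\U(d-1)$ applied to the product of two independent Haar elements; and $V$ is a function of $X,Y$ only, which were shown independent of $\ket{\psi_0}$, so $V\perp\ket{\psi_0}$.

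The main obstacle is making the conditioning rigorous: the event $\{C\ket{0^n}=C'\ket{0^n}\}$ has measure zero, so one cannot naively condition on it and must instead either (a) build the regular conditional probability via disintegration of $\mu\otimes\mu$ along the measurable map $(C,C')\mapsto(C\ket{0^n},C'\ket{0^n})$ and verify that on the diagonal the disintegrated measure has the claimed product form, or (b) take a limit of conditionings on shrinking neighborhoods and show the limit is independent of how the neighborhoods shrink. Approach (a) is cleanest: the push-forward of $\mu$ under $C\mapsto C\ket{0^n}$ is the uniform measure on the sphere (equivalently on $\mathbb{CP}^{d-1}$ after quotienting the phase), and the fibers are cosets of $\mathrm{Stab}$, so the disintegration theorem plus left-invariance of $\mu$ forces each fiber measure to be (the push-forward under $R$ of) Haar on $\mathrm{Stab}(\ket{0^n})$ — I would spell out this last implication carefully, as it is the crux. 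A secondary subtlety worth a sentence is measurability/independence of the choice of section $R$: different choices differ by a $\ket{\psi_0}$-measurable element of $\mathrm{Stab}$, which is absorbed into the Haar-distributed blocks $X,Y$ and hence leaves the joint law (and $P$) unchanged.
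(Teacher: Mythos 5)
Your proposal follows essentially the same route as the paper: disintegrate Haar along $U\mapsto U\ket{0^n}$, transport by a measurable section $R$, identify the fiber with $\mathrm{Stab}(\ket{0^n})$, and argue the conditional law is Haar on that subgroup, then substitute to get $P$. That part is sound and matches the paper's reasoning, including the observation that different measurable sections differ by a $\ket{\psi_0}$-measurable element of the stabilizer, which is absorbed into the Haar-distributed blocks.

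The one step that is wrong is the phase absorption. You claim $\mathrm{diag}(e^{i(\phi-\phi')},\,Y^\dagger X)\overset{d}{=}\mathrm{diag}(1,\,Y^\dagger X)$ because "a global phase on a unitary circuit is physically irrelevant," but $e^{i(\phi-\phi')}$ multiplies only the top-left block, not all of $P$; $\mathrm{diag}(e^{i\theta},V)$ and $\mathrm{diag}(1,V)$ are genuinely different unitaries, not related by a global phase, so this argument does not close the gap. The resolution is that no phase ever appears: the conditioning event is \emph{vector} equality $C\ket{0^n}=C'\ket{0^n}=\ket{\psi_0}$, and $R\ket{0^n}=\ket{\psi_0}$ gives $R^\dagger C\ket{0^n}=R^\dagger\ket{\psi_0}=\ket{0^n}$ \emph{exactly}. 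The exact stabilizer of $\ket{0^n}$ is $\{\mathrm{diag}(1,Z):Z\in\U(d-1)\}\cong\U(d-1)$, not $\U(1)\times\U(d-1)$, so $\phi=\phi'=0$ deterministically and $P=\mathrm{diag}(1,Y^\dagger X)$ with no phase to absorb. (This is a small misstatement in the theorem itself, inherited both in your proof and in the paper's, where $\mathrm{Stab}(\ket{\psi_0})$ is written as $\U(d-1)\times\U(1)$ despite being defined via exact fixing. Your second, phase-convention-based alternative is closer to the truth but should be phrased as forcing $\phi=\phi'=0$ rather than as an optional convention.) With that correction, the remainder — the disintegration argument, conditional independence of $C,C'$ given $\ket{\psi_0}$, and $Y^\dagger X\sim\mu_{d-1}$ by translation invariance, independent of $\ket{\psi_0}$ — all goes through.
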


\begin{proof}
Let $\pi:\U(d)\mapsto \mathbb{S}^{2d-1}$ be $\pi(U)=U\ket{0^n}$. The pushforward $\pi_\#\mu$ is uniform on the sphere (left invariance). By disintegration there is a regular conditional law $\mu(\cdot\,|\,\ket{\psi_0})$ on each fiber $\pi^{-1}(\ket{\psi_0})$.
Right-multiplication by $\mathrm{Stab}(\ket{\psi_0})$ preserves the fiber, and $\mu(\cdot\,|\,\ket{\psi_0})$ is right-invariant under $\mathrm{Stab}(\ket{\psi_0})$.
Transport by $R^\dagger$ identifies the fiber with $\mathrm{Stab}(\ket{0^n})=\{\diag(e^{i\phi},Z):\phi\in\R,\,Z\in\U(d-1)\}$ and sends $\mu(\cdot\,|\,\ket{\psi_0})$ to product Haar on $\U(1)\times \U(d-1)$. The claims follow; $Y^\dagger X\sim\mu_{d-1}$ by left/right invariance and is independent of $\ket{\psi_0}$ since $X,Y$ live on $\ket{\psi_0}^\perp$.
\end{proof}

\begin{corollary}[Peaked ensemble under Haar sampling]
\label{cor:peaked-haar}
If $C,C'\overset{\mathrm{i.i.d.}}{\sim}\mu$ and we condition on $C\ket{0^n}=C'\ket{0^n}$, then
$P=C'^{\dagger}C \overset{d}{=} \,\diag(1,V)\,$ where $V\sim\mu_{d-1}$.
\end{corollary}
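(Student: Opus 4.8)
The plan is to read this statement off from \cref{thm:haar-block}: its final display \emph{is} the assertion, so the only work is to unwind the block forms and confirm that nothing depends on the reference frame $R$. I would organize the argument into two short steps and then flag the single point that needs care.

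\textbf{Step 1 (the shape of $P$), which needs only the conditioning.} Write $\ket{\psi_0}:=C\ket{0^n}=C'\ket{0^n}$. Then $P\ket{0^n}=C'^{\dagger}C\ket{0^n}=C'^{\dagger}\ket{\psi_0}=\ket{0^n}$, so the first column of the unitary $P$ equals $\ket{0^n}$; unitarity then forces the first row of $P$ to equal $\bra{0^n}$ as well, so $P=\diag(1,V)$ for some $V\in\U(d-1)$ acting on the orthogonal complement of $\ket{0^n}$, with no residual phase in the $(1,1)$ entry to worry about.

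\textbf{Step 2 (the law of $V$).} Here I would invoke \cref{thm:haar-block}, which gives $C\overset{d}{=}R\,\diag(e^{i\phi},X)$ and $C'\overset{d}{=}R\,\diag(e^{i\phi'},Y)$ with $X,Y\overset{\mathrm{i.i.d.}}{\sim}\mu_{d-1}$ independent of $\ket{\psi_0}$. Forming $P=C'^{\dagger}C$ the conjugating factors $R^{\dagger}R$ cancel and, matching with Step 1, $V\overset{d}{=}Y^{\dagger}X$; left/right invariance of $\mu_{d-1}$ then gives $Y^{\dagger}X\sim\mu_{d-1}$, and since $X,Y$ are supported on $\ket{\psi_0}^{\perp}$ this block is independent of $\ket{\psi_0}$ (hence of the choice of $R$). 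Alternatively one can bypass the theorem and rerun its disintegration argument, now restricted to the $\U(d-1)$ factor of $\mathrm{Stab}(\ket{\psi_0})$, to identify the conditional law of $V$ directly.

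The only point that genuinely deserves attention — and the place I would be careful — is that the conditioning event $\{C\ket{0^n}=C'\ket{0^n}\}$ has $\mu\times\mu$-measure zero, so ``$V\sim\mu_{d-1}$'' must be understood as a statement about a regular conditional distribution rather than an ordinary conditional probability. I would not re-derive this: \cref{thm:haar-block} already provides the disintegration of $\mu$ along the orbit map $\pi(U)=U\ket{0^n}$, together with the right-invariance of the fiber law under $\mathrm{Stab}(\ket{\psi_0})$, which is exactly what legitimizes conditioning on the diagonal. Finally I would add a one-line remark that running the same argument at the level of balanced degree-$\le k$ moments rather than exact laws yields the $k$-design analogue invoked later for the $\tilde\Omega(nk)$ bound: the conditioning still pins only the rank-one direction $\ket{\psi_0}$, so every degree-$\le k$ statistic of $P$ agrees with that of $\diag(1,V)$ for $V$ drawn from a $(d-1)$-dimensional $k$-design.
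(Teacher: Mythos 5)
Your proof is correct and takes essentially the same approach as the paper, which simply reads the corollary off the final display of \cref{thm:haar-block}. Your Step~1 is a nice, self-contained way to fix the $(1,1)$ entry of $P$ at exactly $1$ directly from the conditioning event (sidestepping any ambiguity about the phases $\phi,\phi'$ in the theorem's block forms), and Step~2 together with the measure-zero caveat matches the paper's reliance on the disintegration established in the theorem.
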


\medskip

We now state the moment-equivalence for unitary designs. A \emph{unitary $k$-design} on $\U(d)$ is a distribution such that the expectation of every balanced polynomial in the entries of $U$ and $\overline{U}$ of total degree $\le k$ matches Haar (for $\eta$-approximate designs, moments match up to $O(\eta)$).

\begin{proposition}[Block--design decomposition up to degree-$k$ moments]
\label{prop:design-block}
Let $C,C'$ be i.i.d.\ from a (possibly $\eta$-approximate) unitary $k$-design on $\U(d)$. Condition on $C\ket{0^n}=C'\ket{0^n}=\ket{\psi_0}$, and fix $R$ with $R\ket{0^n}=\ket{\psi_0}$.
Then for any balanced polynomial $p$ of total degree $\le k$ in the entries of $P=C'^{\dagger}C$ and $\overline{P}$,
\[
  \mathbb{E}\!\big[\,p(P)\,\big]
  \ =\ \mathbb{E}\Big[\,p\big(\,\diag(1,Y^\dagger X)\,\big)\,\Big]\,
\]
where $(\phi,X)$ and $(\phi',Y)$ have the same degree-$\le k$ moments as independent draws from $\U(1)\times\U(d-1)$ (i.e., $X,Y$ are unitary $k$-designs on $\U(d-1)$, moment-independent up to degree $k$).
\end{proposition}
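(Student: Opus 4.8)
The plan is to reduce the statement for $k$-designs to the exact Haar statement of \cref{thm:haar-block} by a moment-matching argument, carried out entirely at the level of balanced polynomials of degree $\le k$. The key point is that every quantity appearing in the conclusion---the entries of $P = C'^\dagger C$, and the entries of $\diag(1, Y^\dagger X)$---is a \emph{fixed} polynomial (of controlled degree) in the entries of $C, \overline{C}, C', \overline{C'}$ (resp. $X,\overline X,Y,\overline Y$ and the phases), so matching degree-$\le k$ moments of the inputs forces matching of the relevant expectation on the output side. I would organize this in three steps.

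\textbf{Step 1: Reduce conditioning to a polynomial identity on the unconditioned ensemble.} Conditioning on $\{C\ket{0^n}=C'\ket{0^n}=\ket{\psi_0}\}$ is a measure-zero event, so the clean way to phrase "matches degree-$\le k$ moments" is via the regular conditional law as in the proof of \cref{thm:haar-block}: after transporting by $R^\dagger$, the fiber is identified with $\mathrm{Stab}(\ket{0^n})\cong\U(1)\times\U(d-1)$, and the claim is that the conditional law of $R^\dagger C R$ (resp. $R^\dagger C' R$) on this fiber has the same degree-$\le k$ mixed moments as the product Haar measure on $\U(1)\times\U(d-1)$. For a $k$-design one does not get this on the nose for a single fixed $\ket{\psi_0}$, but one gets it \emph{on average over $\ket{\psi_0}$ drawn from the pushforward} $\pi_\#(\text{design})$, which agrees with the uniform sphere measure up to degree-$k$ statistics; since the target expression in the Proposition is itself an expectation, this is exactly what is needed. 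Concretely I would express $\mathbb{E}[p(P)]$ as an integral against the joint design measure of a balanced degree-$\le k$ polynomial in the entries of $C,C',\overline C,\overline{C'}$ (the conditioning being absorbed into a smooth density times a delta supported on the fiber, handled via disintegration), and likewise for the right-hand side.

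\textbf{Step 2: Apply the $k$-design property gate-by-gate / ensemble-wide.} Once both sides are written as $\mathbb{E}_{\text{design}}[\,q(C,\overline C,C',\overline{C'})\,]$ and $\mathbb{E}_{\text{design}}[\,q'(X,\overline X, Y, \overline Y,\phi,\phi')\,]$ for explicit balanced polynomials $q,q'$ of degree $\le k$ in each argument separately, I invoke the defining property of the (exact) $k$-design: these expectations equal the corresponding Haar expectations $\mathbb{E}_{\mu_d}[q] $ and $\mathbb{E}_{\mu_{d-1}\times\mu_{d-1}\times\mu_1\times\mu_1}[q']$. For the $\eta$-approximate case each replacement costs an additive $O(\eta)$ (with the constant depending on $k$, $d$, and the number of monomials in $q$), which I would track to get the stated ``up to $O(\eta)$'' conclusion. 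Note that $X,Y$ being $k$-designs on $\U(d-1)$ individually is the precise content of ``moment-independent up to degree $k$'': the relevant degree-$\le k$ statistics of the pair $(X,Y)$ factor because the joint $k$-design on $\U(d)$ restricts to an (approximate) $k$-design on the stabilizer block, which I would justify by the same disintegration used for the single copy in \cref{thm:haar-block}.

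\textbf{Step 3: Conclude via the Haar identity.} By \cref{thm:haar-block}, the Haar expectations on the two sides coincide \emph{exactly}: there $P \overset{d}{=} \diag(1, Y^\dagger X)$ with $X,Y\sim\mu_{d-1}$, so $\mathbb{E}_{\mu_d}[p(P)\mid \text{cond.}] = \mathbb{E}[p(\diag(1,Y^\dagger X))]$ identically as polynomials in the entries. Chaining Steps 1--3 gives $\mathbb{E}_{\text{design}}[p(P)] = \mathbb{E}[p(\diag(1,Y^\dagger X))]$ with $X,Y$ being $k$-designs on $\U(d-1)$ (up to $O(\eta)$ corrections in the approximate case), which is the statement.

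The main obstacle I anticipate is Step 1: making the ``conditioning on a measure-zero fiber'' rigorous while still expressing everything as an honest degree-$\le k$ polynomial expectation against the design, so that Step 2 can be applied verbatim. The subtlety is that the regular conditional distribution is defined Haar-almost-everywhere but a $k$-design is a discrete measure, so one cannot literally condition it on a fiber; the fix is to never condition the design itself, but rather to (i) prove the block decomposition as an identity of polynomials in the Haar case, (ii) observe that the identity only involves balanced degree-$\le k$ data, and (iii) note that a $k$-design reproduces exactly those data --- i.e., to push all measure-theoretic delicacy into the Haar computation of \cref{thm:haar-block}, which is already done, and let the design inherit the conclusion at the level of moments. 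A secondary, more bookkeeping-type obstacle is verifying that the degree of $q'$ (which involves products of entries of $Y^\dagger X$) is genuinely $\le k$ when $p$ has degree $\le k$: each entry of $Y^\dagger X$ is degree $1$ in $X$ and degree $1$ in $\overline Y$, so a balanced degree-$\le k$ polynomial in the entries of $\diag(1,Y^\dagger X)$ is balanced degree-$\le k$ in each of $X,Y$ separately --- compatible with the design hypothesis --- and similarly $P=C'^\dagger C$ gives degree $\le k$ in each of $C, C'$; this should go through cleanly but needs to be stated.
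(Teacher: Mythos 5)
Your proposal matches the paper's (terse) proof sketch: both reduce the $k$-design case to the exact Haar block decomposition of Theorem~\ref{thm:haar-block} by observing that everything in sight is a balanced degree-$\le k$ polynomial, so the design reproduces the Haar value, and both handle conditioning via the pushforward $U\mapsto U\ket{0^n}$ being a projective $k$-design with moment-level independence on the stabilizer fiber. Your Step 1 makes explicit a point the paper glosses over --- that for an atomic $k$-design one cannot literally condition on a fixed fiber, so the identity is really a moment identity averaged over $\ket{\psi_0}$ against degree-$\le k$ test data --- but the overall route is the same.
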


\begin{proof}[Proof sketch]
Pushforward by $U\mapsto U\ket{0^n}$ gives a projective $k$-design on $\CP^{d-1}$. For the fiber $\mathrm{Stab}(\ket{0^n})\cong \U(1)\times\U(d-1)$, the conditional degree-$\le k$ moments are the Haar values. We are also guaranteed independence between the first and rest rows at the level of moments. For $\eta$-approximate designs, this expectation incur an $O(\eta)$ error.
\end{proof}

\begin{corollary}[Peaked ensemble under $k$-design sampling]
\label{cor:peaked-design}
Let $C,C'$ be i.i.d.\ from a unitary $k$-design on $\U(d)$ and condition on $C\ket{0^n}=C'\ket{0^n}$. Then for any degree-$\le k$ balanced polynomial $p$,
\[
  \mathbb{E}\,p\!\big(C'^{\dagger}C\big)
  \ =\ \mathbb{E}\,p\!\big(\,\diag(1,V)\,\big)\ ,
  \qquad V\ \text{a unitary $k$-design on }\U(d-1).
\]
\end{corollary}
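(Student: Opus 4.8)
The plan is to deduce this statement directly from Proposition \ref{prop:design-block}, which is the real content; the corollary is essentially a repackaging in the language that the later circuit-complexity arguments will consume. First I would observe that Proposition \ref{prop:design-block} already gives, for every balanced polynomial $p$ of degree $\le k$ in the entries of $P = C'^\dagger C$ and $\overline P$, the identity $\mathbb{E}[p(P)] = \mathbb{E}[p(\diag(1, Y^\dagger X))]$, where $X,Y$ are unitary $k$-designs on $\U(d-1)$ whose joint degree-$\le k$ moments agree with those of two independent Haar draws. So the only thing left to check is that $V := Y^\dagger X$ is itself a unitary $k$-design on $\U(d-1)$, i.e. that $\mathbb{E}[q(V)] = \mathbb{E}_{\mu_{d-1}}[q(\cdot)]$ for every balanced degree-$\le k$ polynomial $q$, and that this is exactly what appears inside the $\diag(1,\cdot)$ wrapper.

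Second, I would prove the $k$-design property of $Y^\dagger X$ by the standard moment argument: a balanced degree-$m$ polynomial in the entries of $V = Y^\dagger X$ and $\overline V$ expands, by multilinearity, into a fixed finite linear combination of products of entries of $X, \overline X, Y, \overline Y$ that is balanced and of total degree $\le m \le k$ in each of $X$ and $Y$ separately. Since $X$ and $Y$ are moment-independent up to degree $k$ and each is a $k$-design on $\U(d-1)$, the expectation of each such monomial equals the corresponding Haar expectation; summing, $\mathbb{E}[q(Y^\dagger X)] = \mathbb{E}_{X,Y \sim \mu_{d-1} \otimes \mu_{d-1}}[q(Y^\dagger X)] = \mathbb{E}_{V \sim \mu_{d-1}}[q(V)]$, the last equality by left/right invariance of Haar measure. (For an $\eta$-approximate design each monomial expectation is perturbed by $O(\eta)$, and since the number of monomials depends only on $m$ and $d$, the total error is still $O(\eta)$.) Third, I would note that passing from $V$ to $\diag(1,V)$ is harmless: any balanced degree-$\le k$ polynomial $p$ in the entries of $\diag(1,V)$ is, after substituting the constant values $1$ and $0$ into the first row and column, a balanced polynomial of degree $\le k$ in the entries of $V$ and $\overline V$, so the previous step applies verbatim. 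Combining the three observations gives $\mathbb{E}\,p(C'^\dagger C) = \mathbb{E}\,p(\diag(1,V))$ with $V$ a $k$-design on $\U(d-1)$, as claimed.

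There is no serious obstacle here — the corollary is a corollary. If anything, the one point requiring a little care is bookkeeping in the "balanced" condition: one must make sure that when a balanced polynomial in $V = Y^\dagger X$ is expanded into monomials in $X$ and $Y$, each resulting monomial is separately balanced in $X$ and in $Y$ (equal numbers of holomorphic and antiholomorphic factors), since that is what lets us invoke the $k$-design hypothesis on $X$ and $Y$ individually. This follows because each entry $V_{ij} = \sum_\ell \overline{Y_{\ell i}} X_{\ell j}$ contributes exactly one factor of $X$ and one of $\overline Y$, and each $\overline{V_{ij}}$ contributes one $\overline X$ and one $Y$, so a monomial with $a$ factors of $V$ and $a$ factors of $\overline V$ automatically has $a$ factors each of $X, \overline X, Y, \overline Y$. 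I would state this observation explicitly and then the rest is immediate.
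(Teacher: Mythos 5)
Your argument is the natural deduction from Proposition~\ref{prop:design-block}, which is evidently what the paper intends (it gives no separate proof of the corollary). Your first two steps are correct, and in particular your bookkeeping observation that each $V_{ij}=\sum_\ell \overline{Y_{\ell i}}X_{\ell j}$ contributes exactly one $X$ and one $\overline{Y}$, so that a balanced degree-$a$ monomial in $V,\overline V$ expands into monomials that are separately balanced of degree $a$ in $X$ and in $Y$, is exactly the point that makes the moment-matching go through.

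Step three has a small inaccuracy: after substituting $1$s and $0$s for the first row and column of $P=\diag(1,V)$, the resulting polynomial $q$ in $V,\overline V$ need \emph{not} be balanced. For instance $p(P)=P_{11}\overline{P_{22}}$ is balanced of degree one in $P$, but $p(\diag(1,V))=\overline{V_{11}}$ is unbalanced. The conclusion survives because unbalanced moments vanish on both sides: $\mathbb{E}_{\mu_{d-1}}[q(V)]=0$ by $\U(1)$-phase invariance of Haar on $\U(d-1)$, and the constructed $V=Y^\dagger X$ inherits this phase invariance from the right-invariance of the conditional law on the fiber $\mathrm{Stab}(\ket{0^n})\cong \U(1)\times\U(d-1)$ (multiplying the $\U(d-1)$ factor by a scalar phase preserves the conditional distribution, hence preserves the laws of $X$ and $Y$). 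Add that one sentence and the reduction to the balanced case is airtight; as written, the clause asserting that the substituted polynomial remains balanced is false even though the final equality is correct.
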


\begin{observation}[Normalized Hilbert--Schmidt inner product]
\label{obs:hs}
For $P=\diag(1,V)$ with $V\in\U(d-1)$, the normalized Hilbert--Schmidt inner product is
\[
  \langle C,C'\rangle_{\mathrm{HS}}:=\tfrac1d\,\Tr(C^\dagger C')=\tfrac1d\,\Tr(P)=\tfrac{1+\Tr(V)}{d}.
\]
Thus $\E[\langle C,C'\rangle_{\mathrm{HS}}]=\tfrac1d$ and Hilbert--Schmidt overlap being $\E\!\big[|\langle C,C'\rangle_{\mathrm{HS}}|^2\big]=\tfrac{2}{d^2}$, since $\E[\Tr(V)]=0$ and $\E[|\Tr(V)|^2]=1$ for Haar $V\in\U(d-1)$, and $k\geq2$ designs. 
\end{observation}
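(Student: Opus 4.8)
The plan is to collapse the whole statement onto two low-order moments of $\Tr(V)$, using the exact block form $P \overset{d}{=} \diag(1,V)$ supplied by \cref{cor:peaked-haar} in the Haar case and \cref{cor:peaked-design} in the design case, and then to evaluate those moments with the order-one Weingarten identity (equivalently, Schur orthogonality), after checking that a unitary $2$-design already reproduces them exactly.

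First I would record the algebraic reduction. Since $P=C'^{\dagger}C$, we have $C^{\dagger}C'=P^{\dagger}$, so $\langle C,C'\rangle_{\mathrm{HS}} = \tfrac1d\Tr(P^{\dagger}) = \tfrac1d\overline{\Tr(P)}$; because the ensemble of $V$ is invariant under $V\mapsto V^{\dagger}$, the law of $\overline{\Tr(P)}$ equals that of $\Tr(P)$, and both the mean and the second absolute moment are unchanged, so it suffices to work with $\Tr(P)=1+\Tr(V)$, which is the first displayed equality. For the mean I would invoke $\E[V_{ij}]=0$ — valid for Haar on $\U(d-1)$ and for any $k\ge1$ design — to get $\E[\Tr(V)]=0$, hence $\E[\langle C,C'\rangle_{\mathrm{HS}}] = \tfrac1d\bigl(1+\E[\Tr(V)]\bigr) = \tfrac1d$.

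For the second moment I would expand $|\Tr(P)|^2 = |1+\Tr(V)|^2 = 1 + 2\,\mathrm{Re}\,\Tr(V) + |\Tr(V)|^2$; the middle term has zero expectation by the first-moment computation, and $\E[|\Tr(V)|^2] = \sum_{i,j}\E[V_{ii}\overline{V_{jj}}] = \sum_i \tfrac{1}{d-1} = 1$ by the order-one Weingarten formula $\E[V_{ab}\overline{V_{cd}}]=\tfrac{\delta_{ac}\delta_{bd}}{d-1}$ (or, equivalently, because $V\mapsto\Tr V$ is the character of the irreducible defining representation of $\U(d-1)$). Thus $\E[|\Tr(P)|^2]=2$, giving $\E\!\big[|\langle C,C'\rangle_{\mathrm{HS}}|^2\big] = \tfrac{2}{d^2}$. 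I do not anticipate a genuine obstacle: the only points needing care are the conjugate/transpose bookkeeping above and the remark that $|\Tr(V)|^2$ is a balanced polynomial of total degree $2$ in the entries of $V$ and $\overline{V}$, so \cref{cor:peaked-design} with $k\ge2$ applies, with the usual $O(\eta)$ additive slack for $\eta$-approximate designs inherited from \cref{prop:design-block}.
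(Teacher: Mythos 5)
Your proof is correct and follows exactly the computation the paper sketches inline after the observation: reduce to $\E[\Tr V]$ and $\E[|\Tr V|^2]$ via $\Tr P = 1+\Tr V$, evaluate them by Weingarten/Schur orthogonality, and note that a $2$-design suffices since these are balanced degree-$\le 2$ statistics. One minor service your write-up provides is to flag that the paper's displayed equality $\tfrac1d\Tr(C^\dagger C')=\tfrac1d\Tr(P)$ should really be $\tfrac1d\overline{\Tr(P)}$ (a conjugate was dropped), which is immaterial for the two moments being computed, exactly as you observe.
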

One could further use this design property to bound the pair-wise correlation between gates in $C$ and $C'$. We discuss this in the Appx.~\ref{appx: gate}.
\subsection{Packing-based circuit lower bounds for the peaked ensemble}
Since peaked circuits form designs on the non-peaked dimensions, the ensemble naturally contains a large number of distinct circuits that lead to a circuit lower bound to implement via a counting argument. We analytically prove this with circuit packing here.
As previously, we fix $d=2^n$. Let $\|\cdot\|_F$ denote the Frobenius norm, and let $\mathcal{C}_{n,\le s}$ be the family of $n$-qubit circuits with at most $s$ two-qubit gates (allowing arbitrary single-qubit gates for free).

\begin{lemma}[State multiplicity and packing from a $k$-design]\label{lem:state-packing}
Let $V$ be drawn from a unitary $k$-design on $\U(d-1)$ (either exact or $\eta$-approximate). Fix any unit vector $\ket{\psi_0}\in\C^{d-1}$ and define $\ket{\varphi_V}:=V\ket{\psi_0}$.
Then, for any fixed constant $\delta\in(0,1)$, there is a $\delta$-separated subset $\{\ket{\varphi_j}\}_{j=1}^M\subset\CP^{d-2}$ with
\[
  M \ \ge\ c_\delta \binom{d+k-2}{k} \ \gtrsim\ c_\delta \frac{(d-1)^k}{k!}\qquad(d-1\ge k),
\]
where $c_\delta>0$ depends only on $\delta$. For $\eta$-approximate designs the same bound holds up to a multiplicative factor $1-O(\eta)$.
\end{lemma}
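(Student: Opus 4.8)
The plan is to prove \cref{lem:state-packing} by combining a moment/concentration estimate for the states $\ket{\varphi_V}=V\ket{\psi_0}$ coming from the $k$-design property with a volumetric (packing) argument on complex projective space $\CP^{d-2}$.

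\paragraph{Step 1: anti-concentration of overlaps from the $k$-design.}
First I would fix any unit vector $\ket{\chi}\in\C^{d-1}$ and study the random variable $|\braket{\chi}{\varphi_V}|^2=\braket{\chi}{V|\psi_0}\!\bra{\psi_0}V^\dagger|\chi}$. By left/right invariance of the Haar measure, for $V\sim\mu_{d-1}$ this is distributed as the squared modulus of a single coordinate of a Haar-random unit vector in $\C^{d-1}$, i.e.\ a $\mathrm{Beta}(1,d-2)$ variable, whose moments $\E\,|\braket{\chi}{\varphi_V}|^{2j}=\binom{d+j-2}{j}^{-1}$ are exactly degree-$2j$ balanced polynomials in the entries of $V,\overline V$. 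Hence for a $k$-design these moments are reproduced up to $j\le \lfloor k/2\rfloor$ (and up to an $O(\eta)$ error in the approximate case). From the $j=\lfloor k/2\rfloor$ moment bound and Markov's inequality I would extract a tail statement of the form $\Pr_V\!\big[\,|\braket{\chi}{\varphi_V}|^2 \ge t\,\big] \le \big(t\,(d-1)\big)^{-\lfloor k/2\rfloor}$ up to constants, valid for all fixed $\ket{\chi}$. (Equivalently one can phrase this as: the ``measure'' of the $\delta$-cap around $\ket{\chi}$ under the pushforward of the design is at most a $k$-dependent multiple of its Haar measure, which scales like $\delta^{\,d-1}$ — but for the counting below only the moment bound is needed.)

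\paragraph{Step 2: greedy packing on $\CP^{d-2}$.}
Next I would build the separated family greedily. Start with the support measure $\rho:=$ law of $\ket{\varphi_V}$ on $\CP^{d-2}$ (a projective $k$-design). Pick $\ket{\varphi_1}$ in the support; having chosen $\ket{\varphi_1},\dots,\ket{\varphi_m}$ that are pairwise $\delta$-separated (say in the Fubini--Study chordal metric, $\|\,\ket{\varphi_i}\!\bra{\varphi_i}-\ket{\varphi_j}\!\bra{\varphi_j}\,\|$ or equivalently $1-|\braket{\varphi_i}{\varphi_j}|^2\ge \delta'$ for a suitable $\delta'=\delta'(\delta)$), consider the union $B$ of the $\delta$-balls around them. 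By the union bound and Step 1 applied with $\ket{\chi}=\ket{\varphi_i}$, $\rho(B)\le m\cdot c'_\delta /\binom{d+\lfloor k/2\rfloor-2}{\lfloor k/2\rfloor}$ for an explicit constant $c'_\delta$. As long as $m < \binom{d+\lfloor k/2\rfloor-2}{\lfloor k/2\rfloor}/c'_\delta$ we have $\rho(B)<1$, so there remains a point of the support outside $B$, which we add as $\ket{\varphi_{m+1}}$. The process runs until $M\ge c_\delta\binom{d+\lfloor k/2\rfloor-2}{\lfloor k/2\rfloor}\gtrsim c_\delta (d-1)^{\lfloor k/2\rfloor}/\lfloor k/2\rfloor!$. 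To recover the stated bound with the full exponent $k$ rather than $\lfloor k/2\rfloor$, I would instead use the exact $j$-th moment identity $\E|\braket{\chi}{\varphi_V}|^{2j}=1/\binom{d+j-2}{j}$ only as motivation and instead bound the cap measure directly: for a projective $k$-design, $\rho$ of a Fubini--Study cap of radius $\delta$ equals its Haar measure up to the design error, and the Haar measure of such a cap is $(1-\cos^2\!\delta)^{d-1}\le \delta^{2(d-1)}$-type — then the packing number is $\gtrsim 1/\mathrm{vol}(\text{cap})$, but since we only need a lower bound polynomial-in-$(d-1)^k$ I will keep whichever of $\lfloor k/2\rfloor$ or $k$ the cleanest argument supplies and state the binomial form $\binom{d+k-2}{k}$ as the target, noting the constant $c_\delta$ absorbs parity issues. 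The $\eta$-approximate case changes $\rho(B)$ by an additive $O(m\eta)$, so choosing the stopping rule with a $1-O(\eta)$ slack yields the claimed multiplicative $1-O(\eta)$ loss.

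\paragraph{Step 3: from state multiplicity to circuit lower bound (context).}
Although the lemma statement itself only asserts the packing, the intended downstream use is a counting/volume argument: each circuit in $\mathcal C_{n,\le s}$ produces one output state $\ket{\varphi}$ from $\ket{\psi_0}$, and a standard $\epsilon$-net bound shows $|\mathcal C_{n,\le s}|\le (\text{gate-set size}\cdot n^2)^{s}\cdot(1/\epsilon)^{O(s)}$ can $\epsilon$-cover at most that many $\delta$-balls; matching this against $M\gtrsim (d-1)^k/k!$ forces $s=\tilde\Omega(nk)$ for the block, and the $R$-conjugation only adds $O(n\,2^n)$ which is negligible next to $4^n$ but must be checked not to dominate in the $k=\mathrm{poly}(n)$ regime of interest. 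I will carry that matching out where \cref{thm: incom} is proved; here I only need Step 2.

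\paragraph{Main obstacle.} The delicate point is Step 1/Step 2: for a genuine (not approximate) $k$-design the overlap random variable has only its first $\lfloor k/2\rfloor$ (real) moments pinned to Haar, so naively I get exponent $\lfloor k/2\rfloor$, not $k$; getting the clean $\binom{d+k-2}{k}$ requires either (a) using the monomial $|\braket{\chi}{\varphi_V}|^{2k}$ directly as a balanced degree-$2k$ polynomial and observing a $k$-design controls degree-$k$ in $U$ \emph{and} degree-$k$ in $\overline U$, i.e.\ degree $2k$ overall — which is exactly what the usual convention ``degree $\le k$ in each of $U,\overline U$'' gives — or (b) being content with the half-exponent and renaming $k\to 2k$. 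I expect to adopt convention (a), state it explicitly, and then the cap-measure bound and greedy construction go through routinely; the only remaining care is tracking the $\delta$-dependence of $c_\delta$ and confirming the Fubini--Study-versus-trace-norm metric conversion is harmless.
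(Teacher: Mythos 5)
Your proposal is essentially the same argument the paper uses (the proof appears in Appendix~\ref{sec: alter}, for the finite-design case): evaluate $\mathbb{E}\,|\!\braket{\chi}{\varphi_V}\!|^{2k}$ via the $k$-design property to get the Haar value $\binom{d+k-2}{k}^{-1}$, observe that inside a Fubini--Study $\delta$-cap the monomial is at least $(1-\delta^2)^k$ (Markov's inequality in disguise), conclude the cap has measure at most $(1-\delta^2)^{-k}\binom{d+k-2}{k}^{-1}(1+O(\eta))$, and finish with greedy packing. Your ``main obstacle'' about $\lfloor k/2\rfloor$ versus $k$ is a non-issue under the standard convention the paper adopts — a ``balanced polynomial of degree $\le k$'' means degree $\le k$ in $U$ and $\le k$ in $\overline U$, so $|\!\braket{\chi}{\varphi_V}\!|^{2k}$ is exactly admissible at degree $k$; your option (a) is the right resolution and is what the paper implicitly uses. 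The digression in Step~2 about the raw Haar cap measure $\sim\delta^{2(d-1)}$ should be cut: the bound the greedy argument needs (and the one the paper uses) is the design-controlled bound $\lesssim (1-\delta^2)^{-k}\binom{d+k-2}{k}^{-1}$, not the doubly-exponentially small Haar volume, because the latter would demand a $(d-1)$-design to justify. Once you commit to the degree-$k$ monomial and drop the $\lfloor k/2\rfloor$ detour, your Steps 1--2 are the paper's proof.
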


\begin{lemma}[Lifting separated states to separated peaked unitaries]\label{lem:lifting}
Fix $\ket{\psi_0}\in\C^d$ with $\|\ket{\psi_0}\|=1$ and $R\in\U(d)$ with $R\ket{0^n}=\ket{\psi_0}$. For each $V\in\U(d-1)$ set $P_V:=\,\diag(1,V)\,$. 
If $\{\ket{\varphi_j}\}_{j=1}^M\subset \ket{\psi_0}^\perp$ is $\delta$-separated, choose $V_j$ with $V_j\ket{\psi_0}=\ket{\varphi_j}$, and define $P_j:=P_{V_j}$. Then
\[
  \|P_i-P_j\|_F \ \ge\ \|\ket{\varphi_i}-\ket{\varphi_j}\|_2 \ \ge\ \delta
  \qquad(i\ne j).
\]
\end{lemma}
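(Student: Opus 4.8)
The plan is to realise the lift $V\mapsto P_V=\diag(1,V)$ as an isometric embedding of $\U(d-1)$ (with the Frobenius metric) into the operators on $\C^d$, and then to transport the $\delta$-separation of the states $\ket{\varphi_j}$ to a separation of the $P_j$ through the elementary chain $\|A\|_F\ge\|A\|_{\mathrm{op}}\ge\|A\ket{e}\|_2$, valid for any operator $A$ and any unit vector $\ket{e}$. Throughout I read $\ket{\varphi_j}=V_j\ket{e}$ for a fixed unit vector $\ket{e}\in\ket{\psi_0}^\perp$ (the vector denoted $\ket{\psi_0}$ in \cref{lem:state-packing}, under the identification $\ket{\psi_0}^\perp\cong\C^{d-1}$ induced by $R$), so that the $\ket{\varphi_j}$ are exactly the $\delta$-separated outputs produced by that lemma.

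First I would verify the lift is an isometry: for $V,V'\in\U(d-1)$ the block structure gives $P_V-P_{V'}=\diag(0,\,V-V')$, hence $\|P_V-P_{V'}\|_F=\|V-V'\|_F$; passing between the $\ket{\psi_0}$-adapted basis and the computational basis is conjugation by the fixed unitary $R$, which preserves $\|\cdot\|_F$, so nothing changes. Next, applying $\|A\|_F\ge\|A\|_{\mathrm{op}}\ge\|A\ket{e}\|_2$ with $A=V_i-V_j$ gives
\[
  \|P_i-P_j\|_F=\|V_i-V_j\|_F\ \ge\ \|(V_i-V_j)\ket{e}\|_2\ =\ \|\ket{\varphi_i}-\ket{\varphi_j}\|_2 ,
\]
which is the first asserted inequality; the particular choice of $V_j$ among the $\U(1)$-worth of unitaries sending $\ket{e}$ to $\ket{\varphi_j}$ is irrelevant, since only $(V_i-V_j)\ket{e}$ enters. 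For the second inequality, $\delta$-separation of $\{\ket{\varphi_j}\}\subset\CP^{d-2}$ means $\min_{\theta}\|\ket{\varphi_i}-e^{i\theta}\ket{\varphi_j}\|_2\ge\delta$ for $i\ne j$; evaluating the minimand at $\theta=0$ gives $\|\ket{\varphi_i}-\ket{\varphi_j}\|_2\ge\delta$ for the chosen representatives. (If one instead phrases separation via the chordal distance $\sqrt{1-|\braket{\varphi_i}{\varphi_j}|^2}$, the bound $\|\ket{\varphi_i}-\ket{\varphi_j}\|_2^2=2-2\,\mathrm{Re}\braket{\varphi_i}{\varphi_j}\ge 1-|\braket{\varphi_i}{\varphi_j}|^2$ recovers the same conclusion, absorbing the constant.) Chaining the two displays yields $\|P_i-P_j\|_F\ge\delta$.

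I do not expect a genuine obstacle: the content is just that (i) $\|V_i-V_j\|_F\ge\|(V_i-V_j)\ket{e}\|_2$, so separated images force separated $V$'s, and (ii) $V\mapsto\diag(1,V)$ is a Frobenius isometry. The only points deserving a word of care are the identification of $\ket{\psi_0}^\perp$ with $\C^{d-1}$ and of the reference vector $\ket{e}$, and the phase bookkeeping that turns separation in the projective metric into a literal Euclidean lower bound on the lifted unitaries — both dispatched above. Combined with the multiplicity count of \cref{lem:state-packing}, which supplies $M\gtrsim(d-1)^k/k!$ such $\ket{\varphi_j}$, this produces that many pairwise-$\delta$-separated peaked unitaries $P_j$, which is exactly the input to the packing/counting argument behind the $\tilde\Omega(kn)$ gate lower bound.
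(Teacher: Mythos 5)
The paper states Lemma~\ref{lem:lifting} without supplying a proof, so there is no internal argument to compare against; your proposal fills that gap with what is plainly the intended (and essentially unique) argument. The chain $\|P_i-P_j\|_F=\|V_i-V_j\|_F\ge\|(V_i-V_j)\ket{e}\|_2=\|\ket{\varphi_i}-\ket{\varphi_j}\|_2\ge\delta$ is correct, and you are right to flag and repair the paper's notational overload of $\ket{\psi_0}$ (which denotes a vector in $\C^d$ in this lemma but plays the role of a reference vector in $\C^{d-1}\cong\ket{\psi_0}^\perp$ carried over from Lemma~\ref{lem:state-packing}); the phase/metric bookkeeping for projective $\delta$-separation is also handled properly in both the $\min_\theta$ and chordal-distance readings. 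The only small slip is the aside that there is a ``$\U(1)$-worth'' of unitaries $V_j$ sending $\ket{e}$ to $\ket{\varphi_j}$: that fiber is actually a coset of the stabilizer of $\ket{e}$, i.e.\ a $\U(d-2)$-worth of choices, but as you note this freedom never enters because only $(V_i-V_j)\ket{e}$ appears, so the conclusion is unaffected.
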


\begin{proposition}[Covering number of $s$-gate circuits (layout-agnostic)]\label{prop:covering}
There exist universal constants $C,\kappa>0$ such that for every $\varepsilon\in(0,1/2]$,
\[
  \mathcal{N}\!\bigl(\varepsilon;\,\mathcal{C}_{n,\le s},\,\|\cdot\|_F\bigr)
  \ \le\ \Bigl(\tfrac{C\,n^2\,s}{\varepsilon}\Bigr)^{\kappa s}.
\]
Consequently, any $\varepsilon$-separated subset of $\mathcal{C}_{n,\le s}$ has size at most $(C n^2 s/\varepsilon)^{\kappa s}$.
\end{proposition}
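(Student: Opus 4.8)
The plan is to construct an explicit $\varepsilon$-net for $\mathcal{C}_{n,\le s}$ in three moves: (i) put every circuit into a \emph{normal form} carrying only a bounded amount of discrete and continuous data; (ii) bound how far the realized unitary can move when that data is perturbed; and (iii) take a product of standard volumetric nets for the individual gate parameters and multiply by the number of discrete configurations. The ``consequently'' clause is then automatic: in an $\varepsilon$-separated set, distinct points lie in distinct $\varepsilon/2$-balls of the net, so its size is at most $\mathcal{N}(\varepsilon/2;\mathcal{C}_{n,\le s},\|\cdot\|_F)$, which has the stated form after replacing $C$ by $2C$.

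\emph{Normal form.} A circuit in $\mathcal{C}_{n,\le s}$ is an alternating product of single-qubit layers and at most $s$ two-qubit gates. Since a single-qubit gate supported off a two-qubit gate commutes with it, I would sweep every single-qubit gate to one side, absorbing into a two-qubit gate any single-qubit gate that crosses it on a shared wire (a two-qubit gate is an arbitrary element of $\U(4)$ anyway). This yields $U=\bigl(\bigotimes_{q=1}^{n}h_q\bigr)\bigl(G_s\otimes I\bigr)\cdots\bigl(G_1\otimes I\bigr)$ with each $G_i\in\U(4)$ acting on one of the fewer than $n^2$ qubit pairs and $h_q\in\U(2)$. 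So a circuit is specified by its discrete placement data (at most $(n^2)^s$ choices) together with continuous data $(\vec G,\vec h)\in\U(4)^s\times\U(2)^n$.

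\emph{Stability and assembly.} Replacing one factor of $(\vec G,\vec h)$ at a time and telescoping — using unitary invariance, submultiplicativity of the operator norm, and $\|A\otimes I\|_{\mathrm{op}}=\|A\|_{\mathrm{op}}$ — gives $\|U-U'\|_{\mathrm{op}}\le\sum_{i}\|G_i-G_i'\|_{\mathrm{op}}+\sum_{q}\|h_q-h_q'\|_{\mathrm{op}}$. Because the paper's Frobenius/Hilbert--Schmidt norm is the normalized one ($\tfrac1d\Tr$; cf.\ \cref{obs:hs}), $\|A\|_F\le\|A\|_{\mathrm{op}}$, so it suffices to make the right-hand side at most $\varepsilon$. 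Covering $\U(4)$ to operator-norm scale $\varepsilon/(2s)$ and $\U(2)$ to scale $\varepsilon/(2n)$ costs at most $(C_1 s/\varepsilon)^{16}$ and $(C_1 n/\varepsilon)^{4}$ per gate (volumetric bounds for compact matrix groups of real dimension $16$ and $4$). Multiplying through,
\[
  \mathcal{N}(\varepsilon;\mathcal{C}_{n,\le s},\|\cdot\|_F)\ \le\ (n^2)^s\,(C_1 s/\varepsilon)^{16s}\,(C_1 n/\varepsilon)^{4n},
\]
and in the regime $s\gtrsim n$ — the only one relevant, since \cref{lem:state-packing} and \cref{lem:lifting} will force $s=\tilde\Omega(nk)$ — the last factor is $\le(C_1 s/\varepsilon)^{4s}$ and the bound collapses to $(Cn^2 s/\varepsilon)^{\kappa s}$ for universal $C,\kappa$.

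The obstacle is essentially bookkeeping, in two places. First, the leftover single-qubit layer carries $\Theta(n)$ real parameters, so the net's exponent is really $\Theta(s+n)$; this matches the claimed $\Theta(s)$ exponent only for $s=\Omega(n)$, which I would flag as the operative regime (alternatively: a circuit leaving some qubit untouched by any two-qubit gate factorizes across that qubit, whereas the peaked targets $\diag(1,V)$ of \cref{lem:lifting} with $V$ a design do not, so $s=\Omega(n)$ is forced in any case). Second, one must ensure that a perturbation of a single two-qubit gate does not inflate by a dimension factor $2^{n/2}$ when viewed on the full $2^n$-dimensional space; this is exactly why the telescoping is run in operator norm — where tensoring with the identity is an isometry — and only afterwards transferred to the normalized Frobenius norm via $\|\cdot\|_F\le\|\cdot\|_{\mathrm{op}}$, rather than estimating Frobenius distances of the tensored gates directly.
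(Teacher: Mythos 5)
The paper states \cref{prop:covering} without proof, so there is no line-by-line comparison to make; your construction (normal form with at most $(n^2)^s$ placement choices, per-gate volumetric nets, telescoping in operator norm, then transferring to the target metric) is exactly the standard argument the paper is implicitly invoking, and it mirrors the paper's own appendix analogue, Lemma~\ref{lem:circuit-lb}, which likewise discretizes gate-by-gate and telescopes in operator norm. Your caveat about the leftover single-qubit layer — that the exponent is really $\Theta(s+n)$, matching the stated $\Theta(s)$ only when $s=\Omega(n)$ — is accurate and worth keeping, since the proposition as written allows arbitrary free single-qubit gates on qubits untouched by any two-qubit gate.

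The one genuine gap is the norm-transfer step. You justify $\|A\|_F\le\|A\|_{\mathrm{op}}$ by asserting that the paper's $\|\cdot\|_F$ is the normalized Hilbert--Schmidt norm, citing \cref{obs:hs}; but that observation normalizes only a particular inner product, the packing section says plainly ``let $\|\cdot\|_F$ denote the Frobenius norm,'' and \cref{lem:lifting} forces the unnormalized reading, because the chain $\|P_i-P_j\|_F=\|\diag(0,V_i-V_j)\|_F\ge\|(V_i-V_j)\ket{\psi_0}\|_2\ge\delta$ needs the unnormalized norm (the normalized one would yield only $\delta/\sqrt{d}$). Under the unnormalized Frobenius norm the proposition as stated is actually false: already for $s=1$, the family $\{G\otimes I_{2^{n-2}}:G\in\U(4)\}$ on a fixed pair has covering number at least $(c\,2^{(n-2)/2}/\varepsilon)^{16}$, since tensoring with the identity scales Frobenius distances by $2^{(n-2)/2}$, and this is exponential in $n$, exceeding $(Cn^2s/\varepsilon)^{\kappa s}$ for any universal $\kappa$. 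So your argument does not (and cannot) prove the statement in the paper's literal metric; what it proves is the operator-norm (equivalently, normalized-HS) version. That is in fact the version the downstream results need: the separation in \cref{lem:lifting} also holds in operator norm, since $\|\diag(0,V_i-V_j)\|_{\mathrm{op}}\ge\|(V_i-V_j)\ket{\psi_0}\|_2\ge\delta$, and with both separation and covering read in operator norm, \cref{thm:packing-lb} and its corollaries go through unchanged. Present this step as a correction — restate the proposition (and the distance used in \cref{thm:packing-lb}) in operator norm — rather than as a reading of the paper's norm convention; as written, the sentence claiming the paper's Frobenius norm is normalized is the step a referee would reject.
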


As a remark: If the two-qubit interaction pattern is fixed (e.g., brickwork on a line), the $n^2$ factor can be absorbed into constants, yielding $\mathcal{N}(\varepsilon)\le (C s/\varepsilon)^{\kappa s}$. Next, we turn this counting into a circuit lower bound by comparing the number of `unique' circuits in the peaked ensemble and the number of circuits reachable using $s$ gates

\begin{theorem}[Packing into circuit lower bound]\label{thm:packing-lb}
Let $P$ be drawn from the peaked ensemble $P=\,\diag(1,V)\,$ where $R\ket{0^n}$ is (projectively) Haar and $V$ is drawn from a unitary $k$-design on $\U(d-1)$, independently of $\ket{\psi_0}$. 
Fix constants $\delta\in(0,1)$ and $\varepsilon\in(0,\delta/3]$. Then
\[
  \Pr\!\Big[\ \dist\big(P,\mathcal{C}_{n,\le s}\big)\le\varepsilon\ \Big]
  \ \le\
  \frac{(C n^2 s/\varepsilon)^{\kappa s}}{\,c_\delta\,\binom{d+k-2}{k}}\;+\;O(\varepsilon).
\]
In particular, if
\[
  (C n^2 s/\varepsilon)^{\kappa s}\ \le\ \tfrac12\,c_\delta\,\binom{d+k-2}{k},
\]
then at least a $1/2$-fraction of the ensemble requires more than $s$ two-qubit gates to achieve $\varepsilon$ accuracy.
\end{theorem}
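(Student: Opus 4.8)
The plan is to combine the three preceding ingredients—the block decomposition of Theorem~\ref{thm:haar-block}/Corollary~\ref{cor:peaked-design}, the state-multiplicity bound of Lemma~\ref{lem:state-packing} with the lifting of Lemma~\ref{lem:lifting}, and the covering-number bound of Proposition~\ref{prop:covering}—into a clean union-bound/pigeonhole argument. First I would fix the constants $\delta\in(0,1)$ and $\varepsilon\in(0,\delta/3]$ and invoke Lemma~\ref{lem:state-packing} applied to the $k$-design $V$ on $\U(d-1)$, with base vector $\ket{\psi_0}$ (the conditioned first column): this yields a $\delta$-separated family $\{\ket{\varphi_j}\}_{j=1}^M$ in $\CP^{d-2}$ with $M\ge c_\delta\binom{d+k-2}{k}$. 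Then Lemma~\ref{lem:lifting} lifts these to peaked unitaries $P_j=\diag(1,V_j)$ that are pairwise $\delta$-separated in Frobenius norm, so the $\varepsilon$-balls $B_F(P_j,\varepsilon/2)$ are disjoint (since $\varepsilon\le\delta/3<\delta/2$), and likewise the $\varepsilon$-balls $B_F(P_j,\varepsilon)$ overlap each $\mathcal{C}_{n,\le s}$-centered $\varepsilon$-ball in a controlled way.

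The core step is the counting inequality. By Proposition~\ref{prop:covering} the set $\mathcal{C}_{n,\le s}$ admits an $\varepsilon$-net of size $N:=(Cn^2 s/\varepsilon)^{\kappa s}$; hence the $\varepsilon$-neighborhood $\{U:\dist(U,\mathcal{C}_{n,\le s})\le\varepsilon\}$ is covered by $N$ balls of radius $2\varepsilon$ (or one can work with a $3\varepsilon$-net directly). Since the $P_j$ are $\delta$-separated and $3\varepsilon\le\delta$, each such small ball contains at most one $P_j$; therefore at most $N$ of the $M$ indices $j$ can have $\dist(P_j,\mathcal{C}_{n,\le s})\le\varepsilon$. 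The remaining piece is to convert this deterministic statement about the discrete family $\{P_j\}$ into a statement about a random draw $P\overset{d}{=}\diag(1,V)$: here I would use that $V\ket{\psi_0}$ is (up to degree-$k$ moments) distributed like the design, so the probability that $P$ lands within $\varepsilon$ of $\mathcal{C}_{n,\le s}$ is, up to the $O(\varepsilon)$ slack coming from the projective/Haar nature of $\ket{\psi_0}$ and the $\eta$-approximate-design error, bounded by the fraction $N/M\le (Cn^2 s/\varepsilon)^{\kappa s}/\big(c_\delta\binom{d+k-2}{k}\big)$. The ``in particular'' clause is then immediate: if $N\le\tfrac12 c_\delta\binom{d+k-2}{k}$ the bound on the right is $\le\tfrac12+O(\varepsilon)$, so a constant fraction of the ensemble is at Frobenius-distance $>\varepsilon$ from every $s$-gate circuit, i.e.\ requires more than $s$ two-qubit gates at accuracy $\varepsilon$.

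The main obstacle I anticipate is making the passage from the separated \emph{sub-family} $\{P_j\}$ to the \emph{random} $P$ fully rigorous while honestly accounting for the $O(\varepsilon)$ error term—the $k$-design only controls degree-$\le k$ balanced moments, not the full distribution, so one cannot directly talk about ``the probability that $V\ket{\psi_0}$ equals $\ket{\varphi_j}$.'' The right way around this is the approach already implicit in Lemma~\ref{lem:state-packing}: the separated points $\ket{\varphi_j}$ are realized as a large packing \emph{inside the support structure dictated by the moments} (e.g.\ via a Levy-type concentration / volumetric argument on $\CP^{d-2}$ that only uses the second moment of the overlap, which a $k\ge2$ design reproduces), so ``$P$ is within $\varepsilon$ of $\mathcal{C}_{n,\le s}$'' is an event whose probability is controlled by comparing the number of net points of $\mathcal{C}_{n,\le s}$ against the effective number of disjoint regions the design mass is spread over. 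I would phrase the final bound as a Markov/pigeonhole inequality on the induced measure on $\CP^{d-2}$: the measure of the $\varepsilon$-neighborhood of the (at most $N$) net points is at most $N$ times the max single-point mass, which the near-uniformity from the design (Observation~\ref{obs:hs}-style second-moment control, plus Lemma~\ref{lem:state-packing}) caps at $\big(c_\delta\binom{d+k-2}{k}\big)^{-1}+O(\varepsilon)$. Assembling these gives exactly the displayed inequality.
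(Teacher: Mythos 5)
Your proposal follows the natural route laid out by the supporting lemmas and is consistent with the paper's proof of the Haar-case analog (Theorem~\ref{cor:haar-4n}, which bounds covering number against per-ball measure), so the overall approach is right. The theorem itself is stated without proof in the paper, but your argument is clearly the intended one.

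One caution: the intermediate pigeonhole step you set up first — lifting the $M$ separated points $P_j$ and arguing ``at most $N$ of the $M$ can be near $\mathcal{C}_{n,\le s}$'' — is a true statement about the discrete packing, but it does \emph{not} by itself imply the probability bound $\le N/M$ for a random draw $P$, since $P$ lands on a continuum rather than on the finite packing. You acknowledge this, and your eventual resolution (bound the $k$-design mass of each of the $N$ net balls by a single-ball cap, then union bound) is the correct mechanism. Be precise about where that cap comes from: second-moment control (Observation~\ref{obs:hs}) alone would only give a $\sim 1/d^2$ bound, not $\sim 1/\binom{d+k-2}{k}$; you need the full $k$-th moment bound $\E[|\!\braket{v}{V\psi_0}\!|^{2k}]\le k!/(d-1)^k$ combined with Markov at threshold $1-(2\varepsilon)^2$, i.e.\ exactly the calculation underlying Lemma~\ref{lem:state-packing} / Lemma~\ref{rem:designs}. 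Finally, the $O(\varepsilon)$ slack belongs to the theorem's overall bound, not to the per-ball cap as you phrased it — putting $+O(\varepsilon)$ inside the per-ball mass and then multiplying by $N$ net points would blow it up to $N\cdot O(\varepsilon)$. It actually enters as a multiplicative $(1-O(\varepsilon^2))^{-k}$ correction to the Markov step (plus the $O(\eta)$ error for approximate designs), which can then be written as the stated additive $O(\varepsilon)$.
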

From this counting argument, two corollaries follow: the first is an an average case circuit complexity lower bound, and the second bounds the `typical case' circuit complexity of a random peaked circuit.
\begin{corollary}[Average-case lower bound]\label{cor:avg-lb}
For $d=2^n$ and any fixed $\varepsilon\in(0,1/10]$,
\[
  \mathbb{E}\big[\,\mathrm{GateCount}_\varepsilon(P)\,\big]\ \ge\ \Omega\!\Big(\frac{k\,n}{\log(k n)}\Big),
\]
for the peaked ensemble built from a unitary $k$-design on $\U(d-1)$.
\end{corollary}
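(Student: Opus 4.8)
The plan is to convert the high-probability statement of Theorem~\ref{thm:packing-lb} into a bound on the expectation by choosing the gate-budget parameter $s$ just below the threshold where the packing inequality $(Cn^2s/\varepsilon)^{\kappa s}\le \tfrac12 c_\delta\binom{d+k-2}{k}$ holds, and then noting that the gate count is nonnegative so only the event "requires more than $s$ gates'' contributes. Concretely, fix $\varepsilon=1/10$ and $\delta=3\varepsilon=3/10$ (any admissible constants work). Using $\binom{d+k-2}{k}\gtrsim (d-1)^k/k!$ and $d-1\ge 2^n/2$, take logarithms of the threshold inequality: it is satisfied as long as $\kappa s\,\log(Cn^2 s/\varepsilon)\le \log\!\big(\tfrac12 c_\delta\big)+k\log(d-1)-\log(k!)$, i.e. roughly $s\log(n^2 s)\lesssim kn$. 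Solving this for the largest valid $s$ gives $s=\Theta\!\big(kn/\log(kn)\big)$: indeed if $s=c\,kn/\log(kn)$ for a small enough constant $c$, then $\log(Cn^2s/\varepsilon)=O(\log(kn))$ (since $s\le kn$ and $n^2\le (kn)^2$), so the left side is $O(c\,kn)$, which is at most the right side $\Theta(kn)$ for $c$ small.

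Next, with $s=s_\star:=c\,kn/\log(kn)$ chosen this way, Theorem~\ref{thm:packing-lb} gives $\Pr[\dist(P,\mathcal C_{n,\le s_\star})\le\varepsilon]\le \tfrac12+O(\varepsilon)$, and by taking $\varepsilon$ a sufficiently small constant (absorbing it into the choice of $\delta$ via $\varepsilon\le\delta/3$) this is at most, say, $3/4$. Equivalently $\Pr[\mathrm{GateCount}_\varepsilon(P)>s_\star]\ge 1/4$. Since $\mathrm{GateCount}_\varepsilon(P)\ge 0$ always, Markov's inequality in the reverse direction yields
\[
  \mathbb{E}\big[\mathrm{GateCount}_\varepsilon(P)\big]\ \ge\ s_\star\cdot\Pr\big[\mathrm{GateCount}_\varepsilon(P)>s_\star\big]\ \ge\ \tfrac14\,s_\star\ =\ \Omega\!\Big(\frac{kn}{\log(kn)}\Big),
\]
which is the claim.

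A couple of technical points to handle carefully. First, the $O(\varepsilon)$ additive slack in Theorem~\ref{thm:packing-lb} (coming from the "projectively Haar'' $R\ket{0^n}$ contribution and, if relevant, the $O(\eta)$ design-approximation error) must be a genuine constant strictly below the gap between $1$ and the probability mass we want; this is why we fix $\varepsilon$ to a small constant like $1/10$ rather than letting it shrink, and why the statement is phrased for $\varepsilon\in(0,1/10]$ — monotonicity of $\mathrm{GateCount}_\varepsilon$ in $\varepsilon$ (smaller $\varepsilon$ needs at least as many gates) lets us conclude for all such $\varepsilon$ from the single worst case. Second, one should check the regime $d-1\ge k$ required by Lemma~\ref{lem:state-packing} is consistent with $s_\star$ being positive and with $kn/\log(kn)$ being the right order; since $k\le \poly(n)$ and $d=2^n$ this is automatic, and for $k=O(1)$ the bound degrades gracefully to the expected $\Omega(n/\log n)$.

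The main obstacle is purely bookkeeping: inverting the transcendental inequality $s\log(n^2s/\varepsilon)\lesssim k\log d$ to extract the clean form $s_\star=\Theta(kn/\log(kn))$ with the logarithm in the denominator, and verifying that the lower-order terms ($\log(1/c_\delta)$, $\log k!$ via Stirling $\approx k\log k$, the $n^2$ inside the log) are all dominated so they only affect constants. None of this is deep, but getting the $\log(kn)$ in the denominator exactly right — rather than $\log n$ or $\log k$ alone — requires noting that $s_\star\le kn$ forces $\log(n^2 s_\star/\varepsilon)=\Theta(\log(kn))$ uniformly in the relevant parameter range.
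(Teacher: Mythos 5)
Your proof is correct and follows essentially the same approach as the paper: choose $s_\star=\Theta(kn/\log(kn))$ so the packing threshold inequality of Theorem~\ref{thm:packing-lb} holds, conclude that a constant fraction of the ensemble needs more than $s_\star$ gates, and apply the reverse-Markov bound $\mathbb{E}[\mathrm{GateCount}_\varepsilon(P)]\ge s_\star\Pr[\mathrm{GateCount}_\varepsilon(P)>s_\star]$. The only difference is cosmetic: the paper asserts a $1/2$ fraction directly, while you more carefully account for the additive $O(\varepsilon)$ slack in Theorem~\ref{thm:packing-lb} and settle for a $1/4$ fraction, and you explicitly invoke monotonicity of $\mathrm{GateCount}_\varepsilon$ in $\varepsilon$ to cover the whole range $(0,1/10]$ — both harmless refinements that do not change the $\Omega(kn/\log(kn))$ conclusion.
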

\begin{proof}
Fix $\delta=1/3$ and any constant $\varepsilon\in(0,\delta/3]$. Let
\[
s_\star := \big\lfloor \alpha\,\tfrac{k\,n}{\log(k n)}\big\rfloor
\]
for a sufficiently small universal constant $\alpha>0$. By Theorem~\ref{thm:packing-lb},
\[
(C n^2 s_\star/\varepsilon)^{\kappa s_\star}\ \le\ \tfrac12\,c_\delta\,\binom{d+k-2}{k}
\]
for all large $n,k$, hence at least a $1/2$-fraction of the ensemble requires more than $s_\star$ two-qubit gates to achieve $\varepsilon$ accuracy. Therefore
\[
\mathbb{E}\big[\mathrm{GateCount}_\varepsilon(P)\big]
\ \ge\ s_\star\cdot \Pr\!\big[\mathrm{GateCount}_\varepsilon(P)\ge s_\star\big]
\ \ge\ \tfrac12\,s_\star
\ =\ \Omega\!\Big(\tfrac{k\,n}{\log(k n)}\Big).
\]
Alternatively, applying Lemma~\ref{lem:whp} with $s=c_0\,\tfrac{k n}{\log(k n)}$ gives
$\Pr[\mathrm{GateCount}_\varepsilon(P)\le s]\le e^{-c_1 k n}$ and hence
$\mathbb{E}[\mathrm{GateCount}_\varepsilon(P)]\ge s(1-e^{-c_1 k n})
=\Omega\!\big(\tfrac{k n}{\log(k n)}\big)$.
\end{proof}
\begin{corollary}[High-probability lower bound]\label{lem:whp}
There exist universal constants $c_0,c_1,c_2,\varepsilon_0>0$ such that the following holds.
Let $\varepsilon\in(0,\varepsilon_0]$ and $s\le c_0\,\dfrac{k\,n}{\log(k n)}$. For the peaked ensemble from an (exact or sufficiently accurate) unitary $k$-design on $\U(d-1)$,
\[
  \Pr\!\Big[\ \mathrm{GateCount}_\varepsilon(P)\ \le\ s\ \Big]
  \ \le\ \exp\!\big(-c_1\,k\,n\big),
\]
hence
\[
  \Pr\!\Big[\ \mathrm{GateCount}_\varepsilon(P)\ \ge\ c_2\,\tfrac{k\,n}{\log(k n)}\ \Big]\ \ge\ 1-\exp\!\big(-c_1\,k\,n\big).
\]
\end{corollary}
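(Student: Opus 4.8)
The plan is to strengthen the $1/2$-fraction statement of \cref{thm:packing-lb} to an exponentially small failure probability by an $\varepsilon$-net and union-bound argument, in which the per-center probability is controlled by a polynomial-method small-ball estimate that invokes only moments of degree $\le k$; such moments transfer from Haar exactly by \cref{cor:peaked-design} (up to an $O(\eta)$ correction for $\eta$-approximate designs via \cref{prop:design-block}). Concretely, $P$ fixes $\ket{0^n}$ under the $\delta=1$ postselection used in this section, and by \cref{cor:peaked-design} every balanced polynomial of total degree $\le k$ in the entries of $P$ has the same expectation as for $\diag(1,V)$ with $V$ a unitary $k$-design on $\U(d-1)$; since every inequality below touches $P$ only through degree-$\le k$ moments (via Markov), I may simply compute with $P=\diag(e^{i\theta},V)$.

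First I fix $\varepsilon\le\varepsilon_0$ and $s\le c_0\,kn/\log(kn)$ and take, via \cref{prop:covering}, an $\varepsilon$-net $\mathcal{M}$ of $\mathcal{C}_{n,\le s}$ in Frobenius norm with $N:=|\mathcal{M}|\le (Cn^2s/\varepsilon)^{\kappa s}$. If $\mathrm{GateCount}_\varepsilon(P)\le s$, then $\|P-U\|_F\le\varepsilon$ for some $U\in\mathcal{C}_{n,\le s}$, hence $\|P-U'\|_F\le 2\varepsilon$ for some $U'\in\mathcal{M}$, so
\[
  \Pr\!\big[\mathrm{GateCount}_\varepsilon(P)\le s\big]\ \le\ N\cdot\max_{U'\in\mathcal{M}}\Pr\!\big[\|P-U'\|_F\le 2\varepsilon\big].
\]
For the per-center estimate I pick any fixed unit vector $\ket{w}\in\ket{0^n}^{\perp}$ (so $P\ket{w}=V\ket{w}$), and for a fixed $U'$ set $\ket{\phi}:=U'\ket{w}$, again a fixed unit vector. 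Then $\|P-U'\|_F\le 2\varepsilon$ forces $\|V\ket{w}-\ket{\phi}\|\le 2\varepsilon$, hence $|\langle\phi|V|w\rangle|^{2}\ge 1-4\varepsilon^{2}\ge\tfrac12$ for $\varepsilon_0$ small. Now $|\langle\phi|V|w\rangle|^{2m}$ is a balanced polynomial of total degree $2m$ in the entries of $V$, so for $2m\le k$ its expectation equals the Haar value $\E_{\mathrm{Haar}(\U(d-1))}|V_{11}|^{2m}=\binom{d-2+m}{m}^{-1}$ (up to $O(\eta)$), and Markov with $m=\lfloor k/2\rfloor$ gives
\[
  \Pr\!\big[\|P-U'\|_F\le 2\varepsilon\big]\ \le\ 2^{m}\binom{d-2+m}{m}^{-1}\ \le\ \Big(\tfrac{2k}{d-2}\Big)^{m}\ \le\ 2^{-\Omega(kn)},
\]
where the last step uses $d-2\ge 2^{n-1}$ and $\log k=o(n)$ in the regime of interest (in particular $k=\poly(n)$).

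It remains to combine the two bounds: since $s\le c_0\,kn/\log(kn)$ and $\log(1/\varepsilon)=O(\log(kn))$, one has $\log N\le\kappa s\,\log(Cn^2s/\varepsilon)=O(\kappa c_0\,kn)$, so choosing the universal constant $c_0$ small enough makes $N\cdot 2^{-\Omega(kn)}\le e^{-c_1 kn}$ for a universal $c_1>0$ and all large $n$, which is the first claimed bound. The ``hence'' part follows immediately by taking $s=\lfloor c_0\,kn/\log(kn)\rfloor$ and any $c_2<c_0$, since then $\mathrm{GateCount}_\varepsilon(P)>s\ge c_2\,kn/\log(kn)$ outside an event of probability $\le e^{-c_1 kn}$.

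I expect the delicate point to be the balance in the last step: the union bound loses a factor $2^{\Theta(c_0 kn)}$, so the per-center small-ball probability must be driven below $2^{-(1-o(1))\,kn}$, yet we are restricted to degree-$\le k$ moments because $P$ is only a $k$-design on the non-peaked block. The Haar estimate $\binom{d-2+\lfloor k/2\rfloor}{\lfloor k/2\rfloor}^{-1}$ combined with $d-2\gtrsim 2^{n}$ is exactly strong enough provided $\log k=o(n)$, and for $\eta$-approximate designs one must additionally demand $\eta\lesssim 2^{-kn}$ (this is what ``sufficiently accurate'' means) so that the accumulated $O(\eta)$ moment errors stay negligible. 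A minor point worth verifying carefully is the elementary identity $\E_{\mathrm{Haar}(\U(D))}|U_{11}|^{2m}=\binom{D-1+m}{m}^{-1}$ and that it transfers verbatim through the first-column conditioning by \cref{cor:peaked-design}.
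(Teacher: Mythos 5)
Your proof is correct, and it takes a genuinely different route from the paper's. The paper's own proof simply invokes \cref{thm:packing-lb} (a packing-vs-covering ratio bound) and massages the exponent; your proof reproves the exponential tail from scratch via an $\varepsilon$-net and union bound, with the per-center probability controlled by a single low-degree Markov bound on the moment $\mathbb{E}\,|\langle\phi|V|w\rangle|^{2m}$, transferred from Haar through \cref{cor:peaked-design}. What your route buys is a cleaner and more defensible bound: the statement of \cref{thm:packing-lb} carries an additive $+O(\varepsilon)$ term, and since the corollary wants to fix $\varepsilon$ at a constant, that additive term is $\Theta(1)$ and cannot be ``absorbed into constants'' to reach a conclusion of the form $e^{-c_1 kn}$ --- the paper's one-line dismissal of this term is the weakest step in its own argument, and your moment/small-ball approach sidesteps it entirely by never invoking \cref{thm:packing-lb}. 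Your approach is also more self-contained, needing only \cref{prop:covering} for the net and \cref{cor:peaked-design} for moment transfer, together with the exact Haar formula $\mathbb{E}|U_{11}|^{2m}=\binom{D-1+m}{m}^{-1}$, which you correctly flag and verify (it is the $m$-th moment of a $\mathrm{Beta}(1,D-1)$ random variable).

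One small remark for polish: your restriction $m=\lfloor k/2\rfloor$ follows the paper's stated (nonstandard) convention ``balanced polynomial of \emph{total} degree $\le k$,'' but the paper uses the standard convention elsewhere --- for instance the $\binom{d+k-2}{k}$ in \cref{lem:state-packing} and \cref{thm:packing-lb} comes from a spherical $k$-design, which needs moments of degree $k$ in each of $U$ and $\bar U$. Under the standard convention you could take $m=k$ and recover the full $k\log(d-1)$ in the exponent rather than $\tfrac{k}{2}\log(d-1)$; either way the result is absorbed into $c_0,c_1$. You also correctly note the implicit regime $\log k = o(n)$ (so that $m\log m = o(mn)$); this is exactly the same implicit regime the paper's proof needs to control its $O(k\log k)$ remainder, so no generality is lost.
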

\begin{proof}
Fix $\delta=1/3$ and a constant $\varepsilon\in(0,\delta/3]$. By Theorem~\ref{thm:packing-lb},
\[
\Pr\!\big[\mathrm{dist}(P,\mathcal C_{n,\le s})\le\varepsilon\big]
\;\le\;
\frac{(C n^2 s/\varepsilon)^{\kappa s}}{c_\delta\,\binom{d+k-2}{k}}+O(\varepsilon).
\]
Using $\binom{d+k-2}{k}\ge (d-1)^k/k!$ and $d=2^n$, the exponent of the RHS is
\[
\kappa s\log(C n^2 s/\varepsilon)\;-\;k\log(d-1)\;+\;O(k\log k)
= \kappa s\log(C n^2 s/\varepsilon)\;-\;\Theta(kn)\;+\;O(k\log k).
\]
If $s\le c_0\,\frac{kn}{\log(kn)}$ with $c_0$ small enough, the exponent is $\le -c_1 kn$ for universal $c_1>0$, so the RHS is $\le e^{-c_1 kn}$ (absorbing the $O(\varepsilon)$ term into constants by taking $\varepsilon$ fixed). This gives
\[
\Pr\!\big[\mathrm{GateCount}_\varepsilon(P)\le s\big]\;\le\;e^{-c_1 kn},
\]
which is the claim. \qedhere
\end{proof}
Meanwhile, when the unitary circuits are drawn from truly Haar-random distribution instead of designs, the following theorem can be proved similarly:
\begin{theorem}[Haar-scale lower bound for the peaked Haar ensemble]\label{cor:haar-4n}
Let $P=\,\diag(1,V)\,$ with $V\sim\mu_{d-1}$ (Haar). For any fixed $\varepsilon\in(0,1/10]$ there exist universal constants $c_3,c_4>0$ such that, with probability at least $1-\exp\!\big(-c_3 d^2\big)$,
\[
  \mathrm{GateCount}_\varepsilon(P)\ \ge\ c_4\,d^2 \ =\ c_4\,4^n.
\]
\end{theorem}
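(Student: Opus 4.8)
The plan is to run the same packing/covering scheme as in \cref{thm:packing-lb,lem:whp}, but to replace the finite family of well-separated outputs by the entire Haar bulk of $\U(d-1)$. The state-packing of \cref{lem:state-packing} only has $\exp(\Theta(d))$ elements and so cannot push a failure probability below $\exp(-\Theta(d))$; what makes the $4^n$ scaling possible is that $\U(d-1)$ itself carries $\exp(\Theta(d^2\log d))$ pairwise-separated unitaries, and correspondingly its constant-radius balls have super-exponentially small Haar mass. By \cref{cor:peaked-haar} we may work with $P=\diag(1,V)$, $V\sim\mu_{d-1}$, directly. Fix $\varepsilon\in(0,1/10]$ and a gate budget $s$, to be chosen of order $4^n$. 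Since $\mathcal C_{n,\le s}\subseteq\U(d)$ and $\mathrm{GateCount}_\varepsilon(P)\le s$ is equivalent to $\dist\bigl(P,\mathcal C_{n,\le s}\bigr)\le\varepsilon$, \cref{prop:covering} (in its separated-set form) lets us take $\mathcal E$ to be a maximal $\varepsilon$-separated subset of $\mathcal C_{n,\le s}$ --- an $\varepsilon$-net of $\mathcal C_{n,\le s}$ with centers in $\mathcal C_{n,\le s}\subseteq\U(d)$ --- with $|\mathcal E|\le(Cn^2 s/\varepsilon)^{\kappa s}$, so that $\dist(P,\mathcal C_{n,\le s})\le\varepsilon$ forces $\|P-U_0\|_F\le 2\varepsilon$ for some $U_0\in\mathcal E$. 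A union bound therefore gives
\[
  \Pr\bigl[\mathrm{GateCount}_\varepsilon(P)\le s\bigr]\ \le\ |\mathcal E|\cdot\sup_{U_0\in\U(d)}\ \Pr_{V\sim\mu_{d-1}}\bigl[\,\|P-U_0\|_F\le 2\varepsilon\,\bigr].
\]

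For the small-ball factor, fix $U_0$ and let $W$ be its lower-right $(d-1)\times(d-1)$ block in the splitting $\C^d=\C\oplus\C^{d-1}$; since $\|P-U_0\|_F^2\ge\|V-W\|_F^2$, the event $\{\|P-U_0\|_F\le 2\varepsilon\}$ forces $V$ into $\{V'\in\U(d-1):\|V'-W\|_F\le 2\varepsilon\}$, a subset of $\U(d-1)$ that, if nonempty, is contained in a geodesic ball of radius $O(\varepsilon)$ about some $V_0\in\U(d-1)$. The quantitative heart of the proof is the Haar small-ball bound on the unitary group,
\[
  \mu_m\bigl(B(V_0,r)\bigr)\ \le\ \Bigl(\frac{C_0\,r}{\sqrt m}\Bigr)^{m^2}\qquad(m:=d-1,\ r>0,\ C_0\ \text{absolute}),
\]
where $B(V_0,r)$ is the geodesic ball in the bi-invariant Frobenius metric. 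This follows by volume comparison: that metric makes $\U(m)$ a compact manifold with $\mathrm{Ric}\ge 0$, so Bishop--Gromov bounds $\mathrm{Vol}\bigl(B(V_0,r)\bigr)$ by the Euclidean value $\omega_{m^2}r^{m^2}$ (with $\omega_D$ the volume of the unit ball in $\R^D$ and $D=m^2=\dim\U(m)$), while the total volume satisfies $\log\mathrm{Vol}\bigl(\U(m)\bigr)=-\tfrac12 m^2\log m+O(m^2)$ by the standard asymptotics for the classical compact groups; dividing gives the displayed bound (equivalently it is dual to the metric-entropy lower bound $\log\mathcal M(\U(m),\|\cdot\|_F,r)\ge m^2\log(c\sqrt m/r)$). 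The decisive feature is the $\sqrt m=\sqrt{d-1}$ in the denominator, which comes from $\U(d-1)$ having Frobenius-diameter $\Theta(\sqrt{d-1})$: a ball of \emph{constant} radius has Haar mass $\exp(-\Theta(m^2\log m))=\exp(-\Theta(n\,4^n))$.

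Combining the two estimates and taking logarithms with $d=2^n$: the small-ball factor contributes $-(d-1)^2\bigl(\tfrac12\log(d-1)-O(1)\bigr)=-\Theta(n\,4^n)$, whereas for $s=\lfloor c_4 4^n\rfloor$ the net factor contributes $\kappa s\log(Cn^2 s/\varepsilon)=\Theta(c_4\,n\,4^n)$ (its dominant term being $\kappa s\cdot 2\log(d-1)$, since $\log(Cn^2/\varepsilon)=O(\log n)$). Choosing $c_4>0$ a sufficiently small universal constant makes the negative contribution dominate by a constant factor, so the probability is at most $\exp(-c_3\,n\,4^n)\le\exp(-c_3\,4^n)=\exp(-c_3 d^2)$. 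Hence $\mathrm{GateCount}_\varepsilon(P)>s$ with probability at least $1-\exp(-c_3 d^2)$, which after absorbing the floor into the constant is the claimed $\mathrm{GateCount}_\varepsilon(P)\ge c_4\,4^n=c_4 d^2$; it holds for every $\varepsilon\in(0,1/10]$ because $\mathrm{GateCount}_\varepsilon$ is non-increasing in $\varepsilon$, so one may simply take $\varepsilon=1/10$.

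The one genuinely load-bearing step, and where I would concentrate the effort, is the small-ball estimate with the correct $1/\sqrt{d-1}$ normalization: the cruder bound $\mu_{d-1}(B(V_0,r))\le(C_0 r)^{(d-1)^2}$ (no $\sqrt{d-1}$) would dominate the covering factor only for $s=O(4^n/n)$, yielding the weaker $\tilde\Omega(4^n)$, so one must actually carry through the Riemannian volume-comparison computation (or, equivalently, a packing-number lower bound for $\U(d-1)$ at constant Frobenius radius). A minor caveat also deserves a line: the block $W$ of a worst-case $U_0$ need not be unitary, but this is harmless, since either the relevant set is nonempty --- in which case it still sits inside a slightly enlarged geodesic ball of $\U(d-1)$, as used above --- or else the probability in question is $0$.
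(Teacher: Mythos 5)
Your proposal follows the same broad strategy as the paper -- bound $\Pr[\dist(P,\mathcal C_{n,\le s})\le\varepsilon]$ by the ratio of a covering number for $\mathcal C_{n,\le s}$ to a Haar small-ball / packing estimate on $\U(d-1)$, then choose $s=\Theta(d^2)$ so the Haar term dominates -- but you have correctly identified and repaired a quantitative gap in the paper's version. The paper asserts a packing bound $M\ge(c/\varepsilon)^{N^2}$ (equivalently, $\mu(B(r))\le(r/c)^{N^2}$) with $N=d-1$ and then sets $s=\alpha d^2$; however, when you take logs the covering term is $\kappa s\log(Cn^2 s/\varepsilon)=\Theta(\alpha\,n\,d^2)$ (the $\log s\approx 2n\log 2$ factor is not negotiable), whereas $N^2\log(c/\varepsilon)=\Theta(d^2)$ only. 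Those two can only be balanced with $\alpha=O(1/n)$, i.e.\ the paper's stated packing bound proves only $\mathrm{GateCount}_\varepsilon(P)\ge\Omega(4^n/n)$, not $\ge c_4\,4^n$. Your sharper small-ball estimate $\mu_m(B(V_0,r))\le(C_0 r/\sqrt m)^{m^2}$ is exactly the missing ingredient: the extra $\tfrac12 m^2\log m=\Theta(n\,4^n)$ in the exponent is what allows $s$ to be a genuine constant times $4^n$. The volume-comparison derivation (Ricci $\ge 0$ for the bi-invariant metric, so Bishop--Gromov gives the Euclidean upper bound $\omega_{m^2}r^{m^2}$, combined with $\log\mathrm{Vol}(\U(m))=-\tfrac12 m^2\log m+O(m^2)$ from Barnes-$G$ asymptotics) is a sound way to obtain it, and the resulting arithmetic $\kappa c_4\cdot 2n\log 2<\tfrac12 n\log 2$ correctly fixes $c_4<1/(4\kappa)$. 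Your two side remarks -- that the lower-right block $W$ of $U_0$ need not be unitary, and that monotonicity of $\mathrm{GateCount}_\varepsilon$ in $\varepsilon$ lets you fix $\varepsilon=1/10$ -- are both correct and close genuine edge cases. In short: this is the argument the paper should have given; the only change I would suggest is to state the final exponent as the stronger $-\Theta(n\,d^2)$ before weakening to $-c_3 d^2$, and to note explicitly that the standard (non-$\sqrt N$-normalized) packing bound would not suffice.
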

\begin{proof}
Fix $\varepsilon\in(0,1/10]$ and write $N:=d-1$. By a standard volumetric
packing bound on compact Lie groups (bi-invariant metric from $\|\cdot\|_F$),
there exists a $\tfrac{\varepsilon}{2}$-separated set
$\{V_j\}_{j=1}^M\subset \U(N)$ with
\[
  M \ \ge\ \Big(\tfrac{c}{\varepsilon}\Big)^{N^2}
\]
for a universal constant $c>0$. Define $P_j:=R\,\diag(1,V_j)\,R^\dagger$.
Conjugation preserves $\|\cdot\|_F$, and
$\|\diag(1,V_i)-\diag(1,V_j)\|_F=\|V_i-V_j\|_F$, so
$\{P_j\}$ is $\tfrac{\varepsilon}{2}$-separated as well.

Let $\mathcal{C}_{n,\le s}$ be $n$-qubit circuits with $\le s$ two-qubit gates.
By Proposition~\ref{prop:covering} (layout-agnostic form),
there exists an $\varepsilon/2$-net of $\mathcal{C}_{n,\le s}$ of size at most
$(C n^2 s/\varepsilon)^{\kappa s}$. For Haar $V$ (hence Haar $P$ on the peaked
manifold), bi-invariance implies that the measure of any $\varepsilon/2$-ball is
at most $1/M$. Therefore
\[
  \Pr\!\big[\dist(P,\mathcal{C}_{n,\le s})\le \varepsilon\big]
  \ \le\ \frac{(C n^2 s/\varepsilon)^{\kappa s}}{M}
  \ \le\ \exp\!\Big(\kappa s\log(C n^2 s/\varepsilon)\;-\;N^2\log(c/\varepsilon)\Big).
\]
Choose $s=\alpha d^2$ with $\alpha>0$ small enough (depending only on
$C,\kappa,c,\varepsilon$). Since $N^2=(d-1)^2=d^2+O(d)$,
the exponent is $\le -c_3 d^2$ for some universal $c_3>0$, hence
\[
  \Pr\!\big[\mathrm{GateCount}_\varepsilon(P)\le \alpha d^2\big]\ \le\ e^{-c_3 d^2}.
\]
Renaming $c_4:=\alpha$ yields the claim:
with probability at least $1-e^{-c_3 d^2}$,
$\mathrm{GateCount}_\varepsilon(P)\ge c_4 d^2=c_4 4^n$.
\end{proof}

All results above remain valid for $\eta$-approximate unitary $k$-designs on $\U(d-1)$, up to replacing $c_\delta$ by $c_\delta(1-O(\eta))$ in Lemma~\ref{lem:state-packing} and absorbing an additive $O(\eta)$ in the probability bounds of Theorem~\ref{thm:packing-lb}. In Appx.~\ref{sec: alter} we provide an alternative proof for the same incompressibility theorem.

\section{Hardness of simulating peaked circuits}~\label{sec: hardness}
In the previous section we proved that \emph{random peaked circuits} (RPCs) obtained by postselecting from a $k$-design requires at least circuit complexity \emph{linear} in $k$ to implement. Since local random circuits generate approximate unitary $k$-designs in shallow depth $O(\log n \cdot k)$~\cite{schuster2025random}, this gives a tight bound in $k$ dependence for RPCs and establishes the near-term feasibility of preparing them on current devices.

At the same time, the linear dependence in $k$ scaling is a necessary (though not obviously sufficient) condition for classical simulation to be hard in the regime of interest: picking $k = \log(n)$ already make naive simulation algorithms hard.

To leverage RPCs for \emph{verifiable quantum advantage}, we therefore make the target task explicit and focus on \emph{strong simulation}: given a circuit $P$ and bit string $x\in\{0,1\}^n$, estimate the single-output weight
\[
p_x(P)\;:=\;|\!\braket{x}{P\,|0^n}\!|^2
\]
to prescribed precision (e.g., exponentially small additive error). To this end, we show that simulating RPCs is, both in worst case and average case, as hard as simulating RQCs.
\subsection{`Almost exact' simulation is \#P Hard}
\subsubsection{Worst case hardness via a peaked embedding}
First, how hard is it to simulate peaked circuits in the worst case? 
We first observe that computing a single output probabilities exactly is \#P-complete for generic random quantum circuits, and this hardness carries over to peaked circuits, as any circuit can be converted into a peaked circuit with adding merely one ancilla:
Let $C$ be any $n$-qubit circuit drawn from a polynomial-time samplable distribution over universal gate sets, and define the peaked embedding on one ancilla:
\[
P\;:=\; \big(\ket{0}\!\bra{0}\otimes I \;+\; \ket{1}\!\bra{1}\otimes C\big)\,(H\otimes I),
\qquad\text{acting on }\ket{0}\ket{0^n}.
\]
Measuring in the computational basis gives
\[
\Pr[0,0^n]=\tfrac{1}{2},\qquad 
\Pr[1,x]=\tfrac{1}{2}\,p_x(C),\quad
\Pr[0,x\neq 0^n]=0,
\]
so conditioned on the ancilla outcome $1$ we recover $p_x(C)$ exactly.

\begin{lemma}[Strong-simulation hardness transfers]
\label{prop:peak-strong}
There exist a polynomial-sized peaked circuit $P$ and target string $x$ such that exactly evaluating $p_x(P)$ is \#P-hard
\end{lemma}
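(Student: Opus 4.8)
The plan is to reduce from the known \#P-hardness of exactly computing a single output amplitude (equivalently output probability) of a polynomial-size quantum circuit over a universal gate set — a statement that is standard, e.g.\ it follows from the fact that $\#P$-complete quantities such as permanents or $\mathrm{GapP}$ functions can be encoded as $2^{n}\langle x|C|0^n\rangle$ for an efficiently constructible $C$. So let $C$ be such a worst-case instance: exactly evaluating $p_x(C)=|\langle x|C|0^n\rangle|^2$ is \#P-hard. I would then invoke exactly the peaked embedding displayed just above the lemma statement: set $P:=(\ket{0}\!\bra{0}\otimes I + \ket{1}\!\bra{1}\otimes C)(H\otimes I)$ on $n+1$ qubits. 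This $P$ is built from $C$ plus one Hadamard and one layer of controls, hence still polynomial-sized, and it is manifestly a legitimate unitary circuit (the controlled-$C$ is implementable with $O(1)$ overhead per gate of $C$ using standard controlization of a universal gate set).

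The core of the argument is the probability bookkeeping already written out: acting on $\ket{0}\ket{0^n}$, the leading $H$ produces $\tfrac{1}{\sqrt2}(\ket{0}+\ket{1})\ket{0^n}$, and the controlled-$C$ leaves the $\ket{0}$ branch untouched while mapping the $\ket{1}$ branch to $\ket{1}(C\ket{0^n})$. Measuring in the computational basis then gives $\Pr[0,0^n]=\tfrac12$, $\Pr[0,x']=0$ for $x'\neq 0^n$, and $\Pr[1,x]=\tfrac12 |\langle x|C|0^n\rangle|^2 = \tfrac12 p_x(C)$. In particular $p_{\max}(P)\ge \tfrac12$, so $P$ is (more than) peaked in the sense of \cref{def: peaked}; this is what lets us call it a \emph{peaked} circuit instance. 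Thus for the specific string $(1,x)$ we have $p_{(1,x)}(P)=\tfrac12 p_x(C)$, so any exact evaluation of $p_{(1,x)}(P)$ yields $p_x(C)$ exactly by multiplying by $2$. Since the latter is \#P-hard and the reduction is polynomial-time (indeed the map $C\mapsto P$ and $x\mapsto(1,x)$ is explicit), exactly evaluating $p_{\bullet}(P)$ on this family is \#P-hard, proving the lemma.

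There is essentially no deep obstacle here — the statement is a warm-up/worst-case anchor for the subsequent average-case result — but the one point that needs a clean sentence is the choice of the base hardness fact and ensuring the resulting $P$ still lies in the claimed class: I would state precisely that we take $C$ from a universal gate set so that $2^{O(n)}\langle x|C|0^n\rangle$ realizes a \#P-complete (or $\mathrm{GapP}$) function, cite the standard source (e.g.\ the post-selection / $\mathsf{BQP}\subseteq\mathsf{P}^{\#P}$ folklore, or the Terhal--DiVincenzo / Bremner--Jozsa--Shepherd style encoding), and note that controlization of a universal gate set incurs only constant-factor overhead so $P$ remains polynomial-sized. A second minor subtlety worth a clause: the lemma only asks for \emph{existence} of a peaked $P$ and string $x$ with \#P-hard exact $p_x(P)$, so we do not need the embedding to preserve any average-case or ``random-looking'' structure — that refinement is deferred to the perturbed-ensemble argument of \cref{thm: hard}. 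Hence the proof is just: pick the worst-case hard $C$, apply the displayed embedding, read off $p_{(1,x)}(P)=\tfrac12 p_x(C)$, and conclude.
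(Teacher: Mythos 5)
Your proof is correct and follows essentially the same route as the paper: reduce from the worst-case \#P-hardness of evaluating $p_x(C)$ for a universal-gate-set circuit, apply the displayed controlled-$C$ peaked embedding, and read off $p_{(1,x)}(P)=\tfrac12 p_x(C)$ with $p_{\max}(P)\ge\tfrac12$ guaranteeing peakedness. Your added remark that only existence is needed (so no average-case structure must be preserved) is a sensible clarification but not a deviation.
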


\begin{proof}[Proof sketch]
For any $x\in\{0,1\}^n$,
\[
p_x(C)\;=\;2\,p_{1,x}(P).
\]
Hence, exactly computing $p_{1,x}(P)$ is polynomial-time equivalent to exactly computing $p_x(C)$. In particular, if computing $p_x(C)$ is \#P-hard (worst case or average case over the ensemble of $C$), then computing $p_{1,x}(P)$ is also \#P-hard. The controlled implementation increases size/depth by only a constant factor.

The displayed distribution is immediate from linearity. The identity $p_x(C)=2\,p_{1,x}(P)$ yields an equivalence for exact probability computation. Known results give \#P-hardness of exactly computing $p_x(C)$ for worst-case circuits. The controlled-$C$ can be realized by adding a control to each gate of $C$, incurring a constant-factor overhead.
\end{proof}
\subsubsection{Worst-to-average reduction with a polynomial method}
Furthermore, for random-circuit ensembles there is a worst-to-average-case reduction establishing that computing typical output probabilities is also \#P-hard. This average case hardness gives a strong analytical guarantee for the quantum advantage in RQC sampling. Does this hardness result still hold for RPCs? We give an affirmative answer here.

We prove a worst-to-average reduction with a polynomial method. Remarkably, even when the peaked string is known, estimating its peakedness is still hard on average. W.l.o.g, we set $x_\star = 0^n$. First always assume we are working with an architecture $\mathcal{A}$ with $\text{poly} (n)$ gates such that there exist an instance peaked circuit $P^* =  G_m^* \cdots G_1^*$ whose output amplitude on $x_\star$ is \#P-hard to compute. Next, we embed a tiny bit of knowledge from $P^*$ into the random peaked distribution, $\nu_\delta$, defining a distribution of so-called $\theta$-perturbed peaked circuits, which satisfies $P(0) = P$ (the random peaked circuit) and $P(2\pi) = P^*$ (the hard circuit).

\begin{definition}[$\theta$-perturbed peaked circuits]~\label{def: theta}
Let $P^* = G_m^* \cdots G_1^*$ be a quantum circuit whose output amplitude is \#P-hard to compute (the worst-case instance), and $P = G_m \cdots G_1$ a “peaked” circuit sampled from a distribution $\nu$ defined by \cref{def: pcc}.

Define, for each gate $j$, a Hermitian generator
\[
H_j := \frac{1}{2\pi i} \log(G_j^* G_j^{-1}),
\]
and for $\theta \in [0, 2\pi]$, set
\[
G_j(\theta) := G_j \cdot e^{-i\theta H_j}.
\]

Then, the interpolated circuit is
\[
P(\theta) := G_m(\theta) \cdots G_1(\theta),
\]

The $\theta$-perturbed peaked circuit distribution is defined by sampling $P$ as above, and then forming $P(\theta)$ as above for $\theta \in [0, \Tilde{\theta}]$.
\end{definition}

The polynomial method exploits the fact that, when interpolating between a random “peaked” quantum circuit and a worst-case hard instance, the output amplitude as a function of the interpolation parameter $\theta$ is a low-degree polynomial in $\theta$. If a classical algorithm could efficiently compute output probabilities for most random (peaked) circuits, then, by evaluating the polynomial at enough points and interpolating, it could also efficiently compute the output for the worst-case hard circuit. This establishes a worst-to-average-case reduction: efficient average-case simulation would imply efficient worst-case simulation, which is widely believed to be impossible. To use the polynomial method, we need to cast the exponential from each ``pull back'' gate as a polynomial. Following Bouland et al.~\cite{bouland2018quantum}, we write down the Taylor series expansion of each gate and then truncate:
\begin{definition}[$(\theta, K)$-truncated perturbed random peaked circuit]~\label{def: theta_k}
Using the same definition from \cref{def: theta}, the $(\theta, K)$-truncated perturbed random peaked circuit is then
\[
P^{(K)}(\theta) := G_m^{(K)}(\theta) \cdots G_1^{(K)}(\theta).
\]
where \[
G_j^{(K)}(\theta) := G_j \cdot \sum_i^K ({-i\theta H_j})^i.
\]
By construction, $P^{(K)}(0) = P$ (the original peaked circuit), and as $K \to \infty$, $P^{(K)}(\theta) \to P(\theta)$, the analytic $\theta$-perturbed interpolation between hard and peaked circuits.

\end{definition} 
It would be important to check if the $(\theta, K)$-truncated perturbed RPCs are still peaked:
\begin{fact}[$\theta$-perturbed random peaked circuits are peaked.]
Consider a quantum circuit $C$ composed of $m$ gates acting on $n$ qubits, and define a perturbed circuit $C(\theta)$ in which each gate $G_j$ is replaced by $G_j(\theta) = e^{-i\theta H_j} G_j$, with $H_j$ a Hermitian operator satisfying $\|H_j\| \leq 1$. Let $U = C$ and $U(\theta) = C(\theta)$ denote the unitaries implemented by the original and perturbed circuits, respectively.
\end{fact}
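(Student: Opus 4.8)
The conclusion we are after is the natural completion of the statement: $\lVert U(\theta)-U\rVert \le m\theta$ in operator norm, and consequently, if $U$ is $\delta$-peaked on $x_\star$ then $U(\theta)$ (and, with minor care, its truncation $P^{(K)}(\theta)$) stays $1/\poly(n)$-peaked on the \emph{same} string $x_\star$ for every $\theta$ in the range $[0,\tilde\theta]$ used in \cref{def: theta}. The plan is a standard telescoping (hybrid) estimate, followed by transferring the norm bound to the peak amplitude, and finally a short argument for the truncated (non-unitary) case.

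\emph{Step 1 (hybrid bound).} Writing $U=G_m\cdots G_1$, $U(\theta)=G_m(\theta)\cdots G_1(\theta)$ with $G_j(\theta)=e^{-i\theta H_j}G_j$, I would expand
\[
U(\theta)-U \;=\; \sum_{j=1}^{m}\Big(\textstyle\prod_{l=j+1}^{m}G_l(\theta)\Big)\,\big(G_j(\theta)-G_j\big)\,\Big(\textstyle\prod_{l=1}^{j-1}G_l\Big).
\]
Every factor except $G_j(\theta)-G_j$ is unitary, so $\lVert U(\theta)-U\rVert \le \sum_{j=1}^{m}\lVert G_j(\theta)-G_j\rVert$. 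For a single gate, $\lVert G_j(\theta)-G_j\rVert=\lVert e^{-i\theta H_j}-I\rVert=\max_{\lambda\in\mathrm{spec}(H_j)}\lvert e^{-i\theta\lambda}-1\rvert\le \theta\lVert H_j\rVert\le \theta$, using the hypothesis $\lVert H_j\rVert\le 1$ (the explicit $H_j=\tfrac{1}{2\pi i}\log(G_j^*G_j^{-1})$ of \cref{def: theta} in fact has $\lVert H_j\rVert\le\tfrac12$). Hence $\lVert U(\theta)-U\rVert\le m\theta$.

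\emph{Step 2 (peakedness survives).} If $U$ is $\delta$-peaked on $x_\star$, i.e. $\lvert\langle x_\star|U|0^n\rangle\rvert\ge\sqrt\delta$, then $\lvert\langle x_\star|U(\theta)|0^n\rangle\rvert\ge\sqrt\delta-\lVert U(\theta)-U\rVert\ge\sqrt\delta-m\theta$, so $p_{x_\star}(U(\theta))\ge(\sqrt\delta-m\theta)_+^2$. Since $m=\poly(n)$ and $\delta=1/\poly(n)$, taking $\tilde\theta\le\tfrac12\sqrt\delta/m=1/\poly(n)$ gives $p_{x_\star}(U(\theta))\ge\delta/4=1/\poly(n)$ for all $\theta\in[0,\tilde\theta]$, which is the claim.

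\emph{Step 3 (truncation).} For $P^{(K)}(\theta)$ from \cref{def: theta_k} I would run the same hybrid argument with the extra term $\lVert G_j^{(K)}(\theta)-G_j(\theta)\rVert$ bounded by the Taylor tail, which for $\theta\lVert H_j\rVert\le 1$ is at most $\theta^{K+1}$ (up to the normalization used in the definition), summing to $\lVert P^{(K)}(\theta)-P(\theta)\rVert\le m\,\theta^{K+1}$; since $P^{(K)}(\theta)$ need not be unitary, I would also note $\lVert P^{(K)}(\theta)\ket{0^n}\rVert\ge 1-m\theta^{K+1}$ so that the normalized output still has peak weight $\Omega(\delta)$ for every $K\ge1$ once $\tilde\theta$ is chosen as above. \emph{Main obstacle.} The estimate is elementary; the only real subtlety is quantitative bookkeeping — the perturbation accumulates \emph{linearly} in the gate count, forcing $\theta$ to be inverse-polynomial rather than constant, so one must check this shrinkage is simultaneously (i) small enough to keep a $1/\poly(n)$ peak and (ii) large enough for the downstream polynomial method, which only needs to interpolate a degree-$O(mK)$ polynomial across a width-$1/\poly(n)$ window; a secondary minor point is reconciling the non-unitary $P^{(K)}(\theta)$ with \cref{def: peaked}, which the normalization bound above handles.
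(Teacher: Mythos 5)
Your proof is correct and follows essentially the same route as the paper's: a telescoping (hybrid) bound $\lVert U(\theta)-U\rVert \le m\theta$, transferred to the peak and closed by taking $\theta = 1/\poly(n)$ small compared to $\delta/m$. The only cosmetic differences are that you transfer the operator-norm bound to the peak at the amplitude level (giving a threshold $\theta \lesssim \sqrt\delta/m$, marginally less restrictive than the paper's TVD-based $\theta \ll \delta/m$, though both are $1/\poly(n)$) and that you fold in the $K$-truncation estimate which the paper defers to a separate Fact.
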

\begin{proof}
It follows from standard operator norm inequalities that
\begin{equation}
\| P - P(\theta) \|_{\mathrm{op}} \leq m |\theta| + O(m\theta^2)
\end{equation}
for sufficiently small $\theta$. Here, the bound accumulates linearly in the number of perturbed gates $m$.

Given any input state $|\psi\rangle$, let $p(x) = |\langle x| U |\psi\rangle|^2$ and $q(x) = |\langle x| P(\theta) |\psi\rangle|^2$ denote the output probability distributions over measurement outcomes $x$ in the computational basis. By standard linear algebra arguments, the total variation distance between $p$ and $q$ is bounded as
\begin{equation}
\| p - q \|_1 \leq 2 \| P - P(\theta) \|_{\mathrm{op}} \leq 2 m |\theta| + O(m\theta^2) \, .
\end{equation}

This shows that for our construction, the output distribution of the $\theta$-perturbed circuit remains close to that of the original circuit for sufficiently small $\theta$. In particular, the distance vanishes linearly with $\theta$ and the number of perturbed gates $m$. It suffices to set $\theta\ll \delta/m$ to retain peakedness in the distribution.
\end{proof}

\begin{fact}[$(\theta, K)$-truncated perturbed RPCs are peaked]
Choosing $1/\tilde\theta \gg m/\delta$ and $K = \text{poly}(n)$. The circuits defined in \cref{def: theta_k} are at least $1/\text{poly}$-peaked.
\end{fact}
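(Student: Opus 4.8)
The plan is to piggyback on the analytic Fact just established. That Fact already shows the exactly-interpolated circuit $P(\theta)$ of \cref{def: theta} is $\Omega(\delta)$-peaked on $x_\star$ whenever $\theta\le\tilde\theta$ with $1/\tilde\theta\gg m/\delta$, because then $\|P-P(\theta)\|_{\mathrm{op}}\le m\theta+O(m\theta^2)\ll\delta$. So it suffices to show the degree-$K$ truncation $P^{(K)}(\theta)$ of \cref{def: theta_k} is close in operator norm to $P(\theta)$, uniformly over $\theta\in[0,\tilde\theta]$; a triangle inequality together with a transfer from operator-norm closeness to closeness of the single amplitude $\langle x_\star|\,\cdot\,|0^n\rangle$ then finishes the proof.

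First I would record the elementary per-gate estimates. Each generator $H_j=\tfrac{1}{2\pi i}\log(\cdot)$ is a principal logarithm of a unitary, hence $\|H_j\|_{\mathrm{op}}\le 1$. Reading $G_j^{(K)}(\theta)$ as the degree-$K$ Taylor polynomial of the unitary $e^{-i\theta H_j}$ (the reading forced by the stated convergence $P^{(K)}(\theta)\to P(\theta)$), the Taylor tail obeys
\[
\|G_j^{(K)}(\theta)-G_j(\theta)\|_{\mathrm{op}}\ \le\ \sum_{i>K}\frac{(\theta\|H_j\|)^{i}}{i!}\ \le\ \frac{e\,(\theta\|H_j\|)^{K+1}}{(K+1)!}\qquad(\theta\le 1),
\]
which is super-exponentially small in $K$, and each truncated factor has operator norm at most $1+\tfrac{e}{(K+1)!}$ since $e^{-i\theta H_j}$ is unitary. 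A standard telescoping (hybrid) expansion over the $m$ gates then gives $\|P^{(K)}(\theta)-P(\theta)\|_{\mathrm{op}}\le m\,(1+o(1))\,\tfrac{e}{(K+1)!}$, which is $2^{-\poly(n)}$ once $K=\poly(n)$ (indeed $K=\Theta(\log(m/\delta))$ already makes it $\ll\delta$). Alternatively, one can bypass $P(\theta)$ entirely: each truncated factor differs from $G_j$ by at most $e^{\theta\|H_j\|}-1\le e\theta$, so the same telescoping gives $\|P^{(K)}(\theta)-P\|_{\mathrm{op}}=O(m\theta)\ll\delta$ directly for any $K\ge 1$ — the requirement $K=\poly(n)$ in the statement is really imposed by the downstream polynomial-method degree $mK$, not by peakedness as such.

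Combining the estimates, $\|P^{(K)}(\theta)-P\|_{\mathrm{op}}\le\eta$ with $\eta=O(m\tilde\theta)+2^{-\poly(n)}\ll\delta$, uniformly in $\theta\in[0,\tilde\theta]$. Since $P\sim\nu$ is $\delta$-peaked on $x_\star$ by construction (\cref{def: pcc}), $|\langle x_\star|P|0^n\rangle|\ge\sqrt\delta$, and therefore $|\langle x_\star|P^{(K)}(\theta)|0^n\rangle|\ge\sqrt\delta-\eta\ge\tfrac12\sqrt\delta$ (using $\eta\ll\delta\le\sqrt\delta$ for $\delta=1/\poly(n)$). Hence $p_{\max}(P^{(K)}(\theta))\ge p_{x_\star}(P^{(K)}(\theta))=|\langle x_\star|P^{(K)}(\theta)|0^n\rangle|^2\ge\delta/4=1/\poly(n)$, which is the claim. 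The only point where genuine (if light) care is needed — \emph{the main obstacle} — is that $P^{(K)}(\theta)$ is \emph{not} unitary: one must check that truncation does not inflate the $m$-fold operator-norm product (it does not, since each factor overshoots $1$ only by $O(1/(K+1)!)$, resp.\ $O(\theta)$, so the product stays $1+o(1)$), and the notion of being peaked must be read off the unnormalized amplitude $\langle x_\star|P^{(K)}(\theta)|0^n\rangle$ rather than a probability vector — though one may additionally note $\sum_x|\langle x|P^{(K)}(\theta)|0^n\rangle|^2=1+O(\eta)$, so the amplitudes still form an essentially normalized distribution.
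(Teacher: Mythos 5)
Your proof is correct and it takes a genuinely different route from the paper. The paper simply defers to Bouland et al.'s Feynman path-integral calculation, quoting an error bound of $2^{O(mn)}/(K!)^m$ on a single output weight and stopping there. You instead give a self-contained operator-norm telescoping argument: bound the Taylor tail of each truncated gate by $e/(K+1)!$ (or, bypassing $P(\theta)$, each factor's departure from $G_j$ by $e\theta$), control the product norm since each factor overshoots $1$ only by $O(1/(K+1)!)$, and transfer the operator-norm bound to the designated amplitude $\langle x_\star|P^{(K)}(\theta)|0^n\rangle$. This buys you three things the paper's one-liner does not: (i) no black-box citation is needed; (ii) you make explicit the typo in \cref{def: theta_k} (the sum $\sum_{i\le K}(-i\theta H_j)^i$ must carry the $1/i!$ factors or the stated convergence to $P(\theta)$ is simply false, and without factorials your tail estimates would not hold); and (iii) you correctly observe that $K=\poly(n)$ is imposed by the downstream polynomial-degree budget rather than by peakedness — any $K\ge 1$ already suffices once $m\tilde\theta\ll\delta$. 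Your flagging of the non-unitarity of $P^{(K)}(\theta)$ and the attendant reading of ``peaked'' via the raw amplitude is the right caveat, and you close it correctly by noting the factors' norms stay $1+o(1)$.

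One cosmetic nit: the chain $\eta\ll\delta\le\sqrt\delta$ does the job but it is cleaner to say directly that $\eta\le\tfrac12\sqrt\delta$ suffices for $p_{x_\star}(P^{(K)}(\theta))\ge\delta/4$; since you already have $\eta=O(m\tilde\theta)+2^{-\poly(n)}$ and the hypothesis $1/\tilde\theta\gg m/\delta$ forces $m\tilde\theta\ll\delta\le\sqrt\delta$, this is immediate.
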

This can be proved using a Feynman path integral method and is presented in Bouland et al. paper~\cite{bouland2018quantum}. Compared to the un-truncated circuit, the $(\theta, K)$-truncation gives error at most $\frac{2^{O(mn)}}{(K!)^m}$ on any single output weight. Choosing $k=1/\poly(n)$ suffices to make the difference exponentially small.

\begin{theorem}
    ``Almost exactly simulating'' (that is, computing $p_0(P)$ with error $2^{-\text{poly}(n)}$) for $P$ sampled from $(\theta, K)$-truncated perturbed RPCs with probability $>8/9$ is \#P hard.
\end{theorem}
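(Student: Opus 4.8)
The plan is to run the standard Bouland--Fefferman--Nirkhe--Vazirani polynomial interpolation argument, adapted to the peaked setting, with the extra observation that the truncation already handled in the preceding facts does not interfere with the reduction. First I would fix the worst-case hard instance $P^*=G_m^*\cdots G_1^*$ guaranteed by \cref{prop:peak-strong}, whose amplitude $\langle 0^n|P^*|0^n\rangle$ is $\#P$-hard to compute (after the one-ancilla embedding, $p_0$ rather than a generic $p_x$ suffices, and the peaked string is known to be $0^n$). Consider the untruncated interpolation $P(\theta)=G_m(\theta)\cdots G_1(\theta)$ from \cref{def: theta}, and note that the amplitude $q(\theta):=\langle 0^n|P(\theta)|0^n\rangle$ is, after expanding each $e^{-i\theta H_j}$ as a power series, an entire function of $\theta$; its $(\theta,K)$-truncated version $q_K(\theta):=\langle 0^n|P^{(K)}(\theta)|0^n\rangle$ is a genuine polynomial in $\theta$ of degree at most $mK=\poly(n)$. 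The key quantitative inputs are already in the excerpt: the truncation error $|q(\theta)-q_K(\theta)|$ and the operator-norm bound $\|P-P(\theta)\|_{\mathrm{op}}\le m|\theta|+O(m\theta^2)$, which together ensure $P^{(K)}(\theta)$ stays in the support (or a $2^{-\poly(n)}$-neighborhood in total variation) of the $(\theta,K)$-truncated perturbed RPC distribution for $\theta\in[0,\tilde\theta]$ with $\tilde\theta$ chosen as in the Facts.

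Next I would set up the reduction by contradiction: suppose a classical algorithm $\mathcal{A}$ computes $p_0(P)$ to additive error $2^{-\poly(n)}$ for a $>8/9$ fraction of circuits $P$ drawn from the $(\theta,K)$-truncated perturbed RPC distribution. Pick $N:=mK+1$ distinct small real evaluation points $\theta_1,\dots,\theta_N$ in, say, $[0,\tilde\theta]$. For each $\theta_\ell$, the circuit $P^{(K)}(\theta_\ell)$ is distributed (over the randomness of $P\sim\nu$, equivalently of $C'$) within $2^{-\poly(n)}$ total-variation distance of the base distribution, so $\mathcal{A}$ succeeds on each fixed $\theta_\ell$ with probability $\ge 8/9-2^{-\poly(n)}$; a union bound over the $N=\poly(n)$ points gives that with probability $\ge 1 - N(1/9 + 2^{-\poly(n)})$... which is not yet enough, so instead I would use the now-standard Berlekamp--Welch / robust rational-interpolation trick: take $N$ somewhat larger than $mK+1$ (still $\poly(n)$), and argue that as long as $\mathcal{A}$ is correct on at least a $(1-\gamma)$ fraction of the chosen points for a small constant $\gamma$ — which holds with overwhelming probability by a Markov/Chernoff bound on the count of failures, since each point independently (after the TV-closeness reduction) fails with probability at most $1/9+o(1)$ — one can recover the degree-$mK$ polynomial $q_K$ exactly (up to the $2^{-\poly(n)}$ per-point error) via error-correcting polynomial interpolation. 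Evaluating the recovered polynomial at $\theta=0$ would then yield $q_K(0)=p_0(P)$ — useless — so the actual target is $\theta$ corresponding to the hard instance. Here I must be slightly careful: in the truncated model $P^{(K)}(\theta)$ does not exactly equal $P^*$ at any finite $\theta$, but $P^{(K)}(\tilde\theta)$ together with the scaling $\tilde\theta$ chosen so that the pullback generators $H_j$ interpolate all the way to $G_j^*$ means that extrapolating $q_K$ to the relevant argument, and bounding the truncation gap by $2^{O(mn)}/(K!)^m = 2^{-\poly(n)}$ via the Fact, recovers $\langle 0^n|P^*|0^n\rangle$ to within $2^{-\poly(n)}$ — enough to pin down the $\#P$-hard quantity exactly since amplitudes of polynomial circuits over a fixed gate set have $\poly(n)$-bit entries (or can be taken over a gate set with algebraic entries and rounded). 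This contradicts $\#P \not\subseteq \mathsf{BPP}$ unless $\mathsf{P}^{\#P}$ collapses, establishing the claimed hardness.

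I expect the main obstacle to be the bookkeeping that ties the \emph{truncated} interpolation $P^{(K)}(\theta)$ — which is what the sampled distribution actually produces, and which is only a polynomial rather than the true analytic amplitude — to the \emph{exact} worst-case value $\langle 0^n|P^*|0^n\rangle$, while simultaneously keeping every error term ($\mathcal{A}$'s $2^{-\poly(n)}$ additive slack, the $2^{O(mn)}/(K!)^m$ truncation error, the condition number of Lagrange extrapolation from $[0,\tilde\theta]$ out to the hard point, and the $2^{-\poly(n)}$ TV-closeness of $P^{(K)}(\theta_\ell)$ to the nominal distribution) all below the $2^{-\poly(n)}$ threshold needed to read off an exact $\#P$-hard integer. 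The extrapolation condition number is the subtle piece: Lagrange interpolation from $\poly(n)$ points in a tiny interval to a point outside it blows up errors by $2^{\poly(n)}$, so one either chooses $K$ large enough (still $\poly(n)$, since $(K!)^m$ beats $2^{O(mn)}$ already at $K=\poly(n)$, and beats the extra $2^{\poly(n)}$ condition-number factor at a slightly larger but still polynomial $K$) or, following Bouland et al., places the evaluation points and the target within a common $O(1/\poly)$-sized region and shifts the hardness onto a rescaled generator — either way the polynomial degree and bit-complexity remain $\poly(n)$. Once the constants are balanced, the $8/9$ success probability is comfortably in the regime where robust interpolation goes through (it tolerates a constant fraction of corrupted points well below $1/2$), so the probabilistic part of the argument is routine; the care is entirely in the error budget.
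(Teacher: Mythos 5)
Your proposal follows the same BFNV polynomial-interpolation route as the paper: fix a $\#P$-hard peaked instance via the one-ancilla embedding, interpolate along $\theta$, sample many $\theta_\ell$ in $[0,\tilde\theta]$, reconstruct the polynomial with Berlekamp--Welch, and extrapolate to the hard instance while tracking the condition number of Lagrange extrapolation. The structure matches the paper's proof essentially step for step, including the use of the Markov/counting argument to find a good $P$ and the explicit error-budget accounting.

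Two small imprecisions worth flagging. First, you set $q_K(\theta):=\langle 0^n|P^{(K)}(\theta)|0^n\rangle$ and claim to recover ``the degree-$mK$ polynomial $q_K$'' from the oracle data; but the oracle $\mathcal{A}$ returns the \emph{probability} $p_0$, not the amplitude, so what you actually interpolate is $|q_K(\theta)|^2$, a degree-$2mK$ polynomial in real $\theta$ (matching the paper's $d=2mK$). You cannot recover the amplitude from probability samples without a phase ambiguity, so this switch matters for the degree count and hence the number of sample points. Second, the TV-closeness detour is unnecessary: the evaluation points $\theta_\ell$ are drawn from the same $\theta$-marginal as the $(\theta,K)$-perturbed RPC distribution, so for a good $P$ the fraction of bad $\theta_\ell$ follows directly from the conditional success probability, no approximate-distribution comparison needed. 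On the extrapolation condition number, your worry is real and your two escape hatches are the right ones, but note that increasing $K$ inflates both the truncation accuracy and the condition-number factor $(2/\tilde\theta)^{\Theta(mK)}$; the resolution is to choose the $\poly(n)$ in the $2^{-\poly(n)}$ accuracy of the oracle to dominate the blow-up, which is what the paper implicitly does when it asserts $\varepsilon/\sqrt{k}\cdot(|2-\tilde\theta|/\tilde\theta)^{mK}=2^{-\poly(n)}$.
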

\begin{proof}
    First observe that there are two randomness in generating $P$ and randomness in picking $\theta$. Assuming there exist some machine $\mathcal{O}$ that compute a random $P^{(K)}(\theta)$ w.p. $2/3$. Then from a counting argument, for at least $2/3$ choice of $P$, $O$ correctly computes the amplitude for $P^{(K)}(\theta)$ w.p. $\geq 2/3$.
    Now let's fix $P$ and fix $k$ distinct evaluation points
   $$
     \theta_1,\dots,\theta_k \;\in\; \Bigl[0,\tfrac1{\mathrm{poly}(n)}\Bigr).
   $$

   Define an oracle $O'$ which, on input $\ell\in\{1,\dots,k\}$, queries $O(\theta_\ell)$. Now $O'$ performs a reconstruction of the polynomial via Berlekamp–Welch. Using the $k$ pairs
   $\bigl\{(\theta_\ell,O(\theta_\ell))\bigr\}_{\ell=1}^k$, $O'$ invokes the Berlekamp–Welch algorithm to recover the unique degree-$d$ polynomial
   $\tilde q$ (with $d=2mK$) that agrees with at least $\tfrac{k+d}2$ of these points. It then outputs $\tilde q(1)$.

\medskip

Take $k = 100\,mK$, by a Markov‐inequality argument, with high probability at least $\tfrac{k+d}2$ of the samples $\{\theta_\ell\}$ land in the “good” regime, so Berlekamp–Welch can be used to recover the true polynomial $q$. 
By assumption, each data point might have some small additive noise. $\tilde q(1)$. From a standard extrapolation amplifier analysis, we know that 
    $$|\tilde q(1)- q(1)| \approx \frac{\varepsilon}{\sqrt{k}} \times \left( \frac{|2 - \tilde\theta|}{\tilde\theta} \right)^{mK}$$
Since $\varepsilon = 2^{-\poly(n)}$, this error will also be $2^{-\poly(n)}$ small. 

 Hence $\tilde q(1) \approx q(1) = p_0\bigl(C'(1)\bigr)$.
 Since at least $2/3$ of the choices of $P$ are “good,” repeating this whole process $O(1)$ times and taking a majority vote still yields $p_0(P^{(K)}(1))$ within a $2^{-\poly(n)}$ additive error.

\end{proof}

Our proof shows that simulating random peaked circuits on average is ``as hard as'' simulating random quantum circuits. The robustness here is not very optimal, as in peaked circuit sampling, one would like the hardness to be $1/\poly(n)$ additive error.
The robustness here can be improved slightly with a noise-robust version~\cite{bouland2022noise} paper, assuming the classical challenger is given access to an NP oracle.
Then, the additive error may be improved to $2^{O(m)}$ using this method, but further improvement seems unlikely. Ultimately, as the range of data is small, this cannot be done because the extrapolation requires some extreme precision. Therefore a completely different proof strategy is desired for proving average hardness with $1/\poly(n)$ additive error. Nevertheless, we show evidence that the currently widely used simulation algorithms should fail.
\subsection{Additive inverse-poly error is PromiseBQP-complete}
\label{subsec:additive-bqp}

\begin{proposition}[Transmission-gap decision]
We fist observe that even relaxing the condition to $1/\poly(n)$ additive error, simulating peaked circuits is still at least BQP-hard, as essentially, every `useful' problem in BQP is a $1/\poly(n)$ peaked circuit.
\label{prop:transmission-gap-add}
Let $U$ be a $\poly(n)$-size quantum circuit on $n$ qubits and define
$p := \big|\!\braket{0^n}{U|0^n}\!\big|^2$. The promise problem of deciding whether
$p \ge 2/3$ or $p \le 1/3$ is \emph{PromiseBQP}-complete (e.g., via mappings from massive $\phi^4$ theory scattering/vacuum probabilities~\cite{jordan2018bqp}).
\end{proposition}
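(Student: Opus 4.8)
The plan is to prove the two directions separately: \emph{membership} in PromiseBQP by direct sampling, and \emph{PromiseBQP-hardness} by a compute--copy--uncompute reduction from an arbitrary promise problem in the class.

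For membership, I would use that $p = |\!\braket{0^n}{U|0^n}\!|^2$ is exactly the probability that measuring $U\ket{0^n}$ in the computational basis yields the string $0^n$. A quantum verifier runs $U$ on $\ket{0^n}$ a constant number of times $m$, records the empirical frequency $\hat p$ of the outcome $0^n$, and accepts iff $\hat p \ge 1/2$. A Hoeffding bound gives $\Pr[|\hat p - p|>1/6]\le 2e^{-m/18}$, so $m = O(1)$ repetitions already separate $p\ge 2/3$ from $p\le 1/3$ with bounded error. Since $U$ has $\poly(n)$ size, the whole procedure is a $\poly(n)$-size quantum algorithm, placing the promise problem in PromiseBQP.

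For hardness, let $\Pi$ be an arbitrary PromiseBQP problem with uniform verifier family $\{V_x\}$ on $n=\poly(|x|)$ qubits, qubit $1$ the answer qubit, and acceptance probability $a(x):=\big\|(\bra{1}_1\otimes I)\,V_x\ket{0^n}\big\|^2$. First I would amplify in the standard way --- run $t=\Theta(n)$ independent copies of $V_x$ and take a reversible majority vote --- so that, at polynomial overhead, $a(x)\ge 1-2^{-n}$ on yes-instances and $a(x)\le 2^{-n}$ on no-instances. Writing $V_x\ket{0^n}=\sqrt{a}\,\ket{\psi_{\mathrm{acc}}} + \sqrt{1-a}\,\ket{\psi_{\mathrm{rej}}}$ with $\ket{\psi_{\mathrm{acc}}}$ supported on $\ket{1}_1\otimes(\cdot)$ and $\ket{\psi_{\mathrm{rej}}}$ on $\ket{0}_1\otimes(\cdot)$, I build the $(n{+}1)$-qubit circuit
\[
  U_x \;:=\; \bigl(I^{\otimes n}\otimes X\bigr)\,\cdot\,\bigl(V_x^{\dagger}\otimes I\bigr)\,\cdot\,\mathrm{CNOT}_{1\to n+1}\,\cdot\,\bigl(V_x\otimes I\bigr),
\]
with the last qubit an ancilla initialized to $\ket{0}$. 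Tracking the state through compute, copy-the-answer-bit, uncompute, and the final $X$ on the ancilla shows that $\braket{0^{n+1}}{U_x|0^{n+1}} = \sqrt{a}\,\overline{\braket{\psi_{\mathrm{acc}}}{V_x|0^n}} = a(x)$, so $p_0(U_x)=a(x)^2$; this is $\ge (1-2^{-n})^2 \ge 2/3$ on yes-instances and $\le 2^{-2n}\le 1/3$ on no-instances. Hence $x\mapsto U_x$ is a polynomial-time many-one reduction from $\Pi$ to our promise problem, and PromiseBQP-hardness follows; combined with membership this gives PromiseBQP-completeness. (Alternatively one may simply cite the BQP-completeness of estimating massive $\phi^4$ vacuum amplitudes~\cite{jordan2018bqp} for the hardness direction.)

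There is no deep obstacle here --- the entire argument is folklore --- and the only point needing care is the uncompute bookkeeping: verifying that the amplitude on $\ket{0^{n+1}}$ after the ancilla flip is exactly $a(x)$ (rather than, say, $(1-a(x))^2$, or an amplitude between two distinct strings, which is what appears if the ancilla is not flipped), and checking that the amplification step leaves the $2/3$ versus $1/3$ gap intact with room to spare. One should also note that PromiseBQP-hardness is meant with respect to classical polynomial-time (Karp) reductions, which is precisely what $x\mapsto U_x$ is.
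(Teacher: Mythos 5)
Your proof is correct, and it takes a genuinely different and more self-contained route than the paper. The paper does not actually prove Proposition~3.5; it treats the statement as known and points to the $\phi^4$-theory/vacuum-amplitude result of Jordan et al.\ for an instantiation. You instead give the folklore direct argument: membership by sampling plus Hoeffding, and hardness via an amplified compute--copy--uncompute gadget. I checked the bookkeeping. Writing $V_x\ket{0^n}=\sqrt{a}\,\ket{1}\!\otimes\!\ket{\phi_1}/\|\phi_1\|+\sqrt{1-a}\,\ket{0}\!\otimes\!\ket{\phi_0}/\|\phi_0\|$ (with $\sqrt{a}$ real and nonnegative by phase choice), your $U_x$ sends $\ket{0^{n+1}}$ to $\sqrt{a}\,(V_x^\dagger\ket{\psi_{\mathrm{acc}}})\ket{0}+\sqrt{1-a}\,(V_x^\dagger\ket{\psi_{\mathrm{rej}}})\ket{1}$, so $\braket{0^{n+1}}{U_x|0^{n+1}}=\sqrt{a}\cdot\overline{\braket{\psi_{\mathrm{acc}}}{V_x|0^n}}=a$ and hence $p_0(U_x)=a^2$, which after the $\Theta(n)$-fold amplification lands in $[(1-2^{-n})^2,1]$ or $[0,2^{-2n}]$, comfortably inside the $2/3$ versus $1/3$ gap. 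The tradeoff is essentially aesthetic: the paper's citation conveys that the hard instance can come from a natural physical model (a point the paper cares about for quantum-advantage motivation), whereas your proof makes the completeness self-contained and makes explicit that the reduction is a classical Karp reduction, without importing field-theory machinery. Both are valid; yours would make the lemma independent of~\cite{jordan2018bqp}.
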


\begin{lemma}[Peaked-weight estimation under inverse-polynomial additive error]
\label{lem:peaked-weight-add}
Let $\epsilon(n)\le 1/\poly(n)$. For any $\poly(n)$-size peaked circuit $U_{\rm peak}$ with designated outcome $y_\star$ and
probability $p_{\max}:=\Pr[y_\star]$, the following estimation problem is \emph{PromiseBQP}-complete: output $\tilde p$ such that
$|\tilde p -p_{\max}| \le \epsilon(n)$.
\end{lemma}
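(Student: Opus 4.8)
The plan is to prove the two inclusions separately. Membership in \emph{PromiseBQP} is the easy direction: on any instance $(U_{\rm peak},y_\star)$ satisfying the promise, a quantum computer runs $U_{\rm peak}$ on the all-zeros state, measures in the computational basis, and returns the empirical frequency of the outcome $y_\star$ over $N=O(\epsilon^{-2}\log\delta^{-1})$ independent repetitions; a Hoeffding bound gives additive error at most $\epsilon$ with confidence $1-\delta$, which is efficient since $\epsilon$ is inverse-polynomial. (Amplitude estimation would achieve the same with $O(1/\epsilon)$ repetitions, but it is not needed, and peakedness of the circuit is not even used here.) The substance is therefore \emph{PromiseBQP}-hardness.

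For hardness I would reduce from the transmission-gap problem of \cref{prop:transmission-gap-add}. Given a $\poly(n)$-size circuit $U$ with the promise $p:=|\langle 0^n|U|0^n\rangle|^{2}\ge 2/3$ or $p\le 1/3$, form the ``compute--kick--uncompute'' gadget on $n+1$ qubits
\[
  U_{\rm peak}\ :=\ U^{\dagger}\,\Lambda_{\beta}\,U,
\]
where $\Lambda_{\beta}$ applies the single-qubit rotation $R_y(\beta)$ to a fresh ancilla conditioned on the $n$ work qubits all being $0$, and $\beta\in(0,1)$ is a fixed small constant; declare $y_\star=0^{n+1}$. Writing $U\ket{0^n}=\sqrt{p}\,\ket{0^n}+\sqrt{1-p}\,\ket{\perp}$ with $\braket{0^n}{\perp}=0$, the three layers expand cleanly: the vectors $\ket{\mu}:=\sqrt{p}\,U^{\dagger}\ket{0^n}$ and $\ket{\nu}:=\sqrt{1-p}\,U^{\dagger}\ket{\perp}$ are orthogonal with $\ket{\mu}+\ket{\nu}=U^{\dagger}U\ket{0^n}=\ket{0^n}$, so $\braket{0^n}{\mu}=p$ and $\braket{0^n}{\nu}=1-p$, and a one-line amplitude computation yields
\[
  \Pr[y_\star]\ =\ \bigl(1-p\,(1-\cos\tfrac{\beta}{2})\bigr)^{2}\ =:\ f(p).
\]

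The main obstacle is that the reduction must output a \emph{valid} instance: $U_{\rm peak}$ has to be peaked and $y_\star$ must be its peak in \emph{both} branches of the promise, not only when $p$ is large. This is exactly what the uncompute structure buys: for $\beta$ a sufficiently small constant, $f(p)\ge\bigl(1-(1-\cos\tfrac{\beta}{2})\bigr)^{2}>\tfrac12$ uniformly in $p\in[0,1]$, so $\Pr[y_\star]>1/2$ always, hence $y_\star=0^{n+1}$ is the unique maximizer and $U_{\rm peak}$ is (super-)peaked. Moreover $f$ is strictly decreasing on $[0,1]$ with a constant separation, $f(1/3)-f(2/3)=\tfrac13\,(1-\cos\tfrac{\beta}{2})\bigl(2-(1-\cos\tfrac{\beta}{2})\bigr)=\Omega(1)$, so any additive-$\epsilon$ estimate of $p_{\max}=\Pr[y_\star]$ with $\epsilon$ below half this separation --- in particular any $\epsilon=1/\poly(n)$ once $n$ is large enough --- decides whether $p\ge 2/3$ or $p\le 1/3$. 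Since the gadget has size $\poly(n)$ (the multi-controlled $R_y$ costs $O(n)$ elementary gates and $U^{\dagger}$ only doubles the gate count), the map $U\mapsto(U_{\rm peak},y_\star)$ is polynomial-time, and together with the membership direction this gives \emph{PromiseBQP}-completeness. The only steps needing care are the orthogonality bookkeeping for $\ket{\mu},\ket{\nu}$ in the amplitude calculation and the choice of $\beta$ making $f>1/2$ uniformly; the rest is routine.
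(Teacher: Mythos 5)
Your proof is correct, but the hardness reduction uses a genuinely different gadget from the paper's. The paper takes $U_{\rm peak}:=\bigl(\ket{0}\!\bra{0}\otimes I + \ket{1}\!\bra{1}\otimes U\bigr)(H\otimes I)$ and designates $y_\star=(1,0^n)$, so $\Pr[y_\star]=p/2$ is \emph{linear} in $p$ and the $\tfrac12\cdot\tfrac13=\Omega(1)$ gap is immediate. Note, though, that in that construction $y_\star$ is \emph{not} where the output distribution peaks --- the peak sits at $(0,0^n)$ with weight $1/2$ --- so the circuit is $\tfrac12$-peaked but the symbol $p_{\max}$ in the lemma is being (re)used for the weight of a non-peak string. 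Your compute--kick--uncompute gadget $U^\dagger\Lambda_\beta U$ fixes this cosmetic mismatch: for $\beta\in(0,1)$ one has $\Pr[0^{n+1}]=f(p)\ge\cos^2(\beta/2)>1/2$ uniformly over $p\in[0,1]$, so $y_\star=0^{n+1}$ really is the unique maximizer and $p_{\max}$ truly denotes the peak weight. The amplitude computation is sound (the assumption $\langle 0^n|U|0^n\rangle=\sqrt{p}$ real is WLOG because $U^\dagger\Lambda_\beta U$ is invariant under a global phase of $U$, and in fact the phase cancels automatically via $a\bar a=|a|^2$), $f$ is strictly decreasing with constant separation $f(1/3)-f(2/3)=\tfrac13(1-\cos\tfrac\beta2)\bigl(2-(1-\cos\tfrac\beta2)\bigr)=\Omega(1)$, and the gadget is $\poly(n)$-size. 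The trade-off is that your construction requires both $U$ and $U^\dagger$ plus an $n$-controlled rotation (the paper needs only controlled-$U$), and your $f(p)$ is quadratic rather than linear, but this buys a cleaner instance where the designated string coincides with the actual peak --- arguably the more natural form of the lemma.
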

\begin{proof}
\emph{Hardness.} Reduce from Prop.~\ref{prop:transmission-gap-add}. Given $U$ with $p=\big|\!\braket{0^n}{U|0^n}\!\big|^2$, define
\[
U_{\rm peak}:=\bigl(\ketbra{0}{0}\!\otimes I + \ketbra{1}{1}\!\otimes U\bigr)\,(H\otimes I),\quad\text{on }\ket{0}\ket{0^n}.
\]
Measuring in the computational basis yields $p_{\max}:=\Pr[(1,0^n)]=\tfrac12\,p$. Thus $p\in\{\le \tfrac13,\ \ge \tfrac23\}$ iff
$p_{\max}\in\{\le \tfrac16,\ \ge \tfrac13\}$, a constant gap $\Delta=\tfrac{1}{6}$. Any $\epsilon(n)\le 1/\poly(n)\ll \Delta$
lets us decide which side of the gap we are in via a fixed threshold (e.g., $1/4$), proving PromiseBQP-hardness.

\emph{Membership.} A quantum computer can estimate $p_{\max}$ to additive error $\epsilon(n)$ in $\poly(n)$ time by either
(i) direct sampling with $O(1/\epsilon(n)^2)$ shots and Chernoff bounds, or
(ii) amplitude estimation achieving $O(1/\epsilon(n))$ query complexity.
Hence the problem lies in PromiseBQP.

\end{proof}
We now show that, on average, there exists threshold $\delta_\star$ such that for random peaked circuits with $\delta \ge \delta_\star$, random peaked circuits cannot be efficiently simulated by the class of classical \emph{sequential simulators}.

\begin{definition}[Sequential simulator]\label{def:simulator}
A sequential simulator
is any classical algorithm that processes a quantum circuit
$U = G_m \cdots G_1$ gate-by-gate (or layer-by-layer) while maintaining,
after gate $i$, a classical description of an approximate state
$\ket{\psi^{\,i}_{\mathrm{approx}}}\simeq\ket{\psi^{\,i}_{\mathrm{ideal}}}:=G_i\cdots G_1\ket{\psi_{\mathrm{in}}}$.
We assume \emph{monotone fidelity}: for every $i<m$,
\begin{equation}\label{eq:monotone-fid}
\bigl|\braket{\psi^{\,i+1}_{\mathrm{approx}}}{\psi^{\,i+1}_{\mathrm{ideal}}}\bigr|^2
\;\le\;
\bigl|\braket{\psi^{\,i}_{\mathrm{approx}}}{\psi^{\,i}_{\mathrm{ideal}}}\bigr|^2.
\end{equation}
\end{definition}

The monotonic‑fidelity assumption above characterizes a “direct” simulator that updates the state gate‑by‑gate and performs only local truncations. Ordinary floating‑point Schrödinger simulators and the standard TEBD/MPS tensor‑network implementations automatically satisfy this property. For example: 
\begin{itemize}
    \item stabilizer‑rank truncation (discarding small‑weight non‑Clifford components after each gate) and
    \item projected entangled‑pair (PEPS) time evolution with fixed bond dimension, where every truncation step can only decrease the fidelity with the ideal quantum state.
\end{itemize}

The intuition is that the peaked circuit from our construction would first go through an `anti-concentration' phase due to the application of $C\ket{0^n}$, making it extremely hard to keep track of. Specifically, there exist a multiplicative fidelity threshold beyond which the simulator could not reach, due to the following two assumptions from random circuits:

\begin{assumption}[Anti-concentration of RQC]\label{fact:ac}
There exist constants $\alpha,\beta>0$ such that for $C\sim\mu$ and uniformly random $x\in\{0,1\}^n$~\footnote{See, e.g.~\cite{hangleiter2018anticoncentration} for a formal proof},
\[
\Pr\!\bigl[p_x(C)\ge \alpha/2^n\bigr]\ \ge\ \beta.
\]
\end{assumption}

\begin{assumption}[Average-case approximate hardness of RQC]\label{fact:mult-hard}
Fix any $\varepsilon_{\mathrm{mult}}(n)=1/\mathrm{poly}(n)$.  
Any probabilistic polynomial-time classical algorithm that, on a $1/\mathrm{poly}(n)$ fraction of pairs
$(C,x)\sim\mu\times\{0,1\}^n$, outputs $\widetilde{p}(x)$ with
\[
(1-\varepsilon_{\mathrm{mult}})\,p_x(C)\ \le\ \widetilde{p}(x)\ \le\ (1+\varepsilon_{\mathrm{mult}})\,p_x(C)
\]
would imply $\#P\subseteq\text BPP$~\footnote{See, e.g.~\cite{bouland2018quantum} for a qualitatively similar statement}.
\end{assumption}

As before, define $F(\psi,\phi)=|\braket{\psi}{\phi}|^2$ and $\mathrm{T}(\psi,\phi)=\sqrt{1-F(\psi,\phi)}$.

\begin{lemma}[Fidelity from many multiplicative amplitudes]\label{lem:fid-to-mult}
Let $\ket{\psi}=C\ket{0^n}$, define a simulator output $\ket{\phi}$, and let $S:=\{x:\ p_\psi(x)\ge \alpha/2^n\}$.
With probability at least $\beta$ (over $C$), we have $|S|\ge \beta\,2^n$, and for at least a $\beta/2$ fraction of $x\in S$,
\[
\frac{|p_\phi(x)-p_\psi(x)|}{p_\psi(x)}\ \le\ \frac{4\,\sqrt{1-F(\psi,\phi)}}{\alpha\beta}\,.
\]
\end{lemma}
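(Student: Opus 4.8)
The plan is to prove the two assertions of the lemma separately: the size bound $|S|\ge\beta 2^n$ is a direct repackaging of anti-concentration, and the multiplicative-accuracy bound follows from a single trace-distance estimate plus a one-line counting argument, with the constant $4$ calibrated to make the counting come out right.

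For the first part I would unpack \cref{fact:ac}. Writing $g(C):=\Pr_{x}\bigl[p_\psi(x)\ge\alpha/2^n\bigr]$ for $\ket\psi=C\ket{0^n}$, anti-concentration gives $\mathbb{E}_C[g(C)]\ge\beta$ when the stated probability is read jointly over $(C,x)$, and a reverse-Markov step $\beta\le\mathbb{E}_C[g]\le 1\cdot\Pr_C[g\ge\beta/2]+\beta/2$ yields $\Pr_C[g(C)\ge\beta/2]\ge\beta/2$; if one instead reads \cref{fact:ac} as already holding for a $\ge\beta$ fraction of circuits the claim is immediate. In either reading, after renaming the absolute constants $\alpha,\beta$ I may condition on a ``good'' circuit $C$ with $|S|\ge\beta 2^n$ and carry this through to the second part.

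For the second part I would set $T:=\mathrm{T}(\psi,\phi)=\sqrt{1-F(\psi,\phi)}$ and take $\ket\phi$ normalized (replacing it by its renormalization if necessary, so that $p_\phi$ is an honest distribution). The only analytic input is the Fuchs--van de Graaf identity for pure states together with contractivity of trace distance under a computational-basis measurement, giving
\[
\sum_{x\in\{0,1\}^n}\bigl|p_\psi(x)-p_\phi(x)\bigr|\ \le\ \bigl\|\,\ketbra{\psi}{\psi}-\ketbra{\phi}{\phi}\,\bigr\|_1\ =\ 2\sqrt{1-F(\psi,\phi)}\ =\ 2T .
\]
This is an $\ell_1$ ``error budget'' of $2T$ shared among all amplitude errors. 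Calling $x$ \emph{bad} when $|p_\phi(x)-p_\psi(x)|>\tfrac{4T}{\alpha\beta}\,p_\psi(x)$, a bad $x\in S$ has $p_\psi(x)\ge\alpha/2^n$ and hence $|p_\phi(x)-p_\psi(x)|>\tfrac{4T}{\beta 2^n}$; if $B$ counts bad strings in $S$, summing this strict inequality against the budget gives $B\cdot\tfrac{4T}{\beta 2^n}\le 2T$, i.e.\ $B\le\tfrac{\beta}{2}2^n\le\tfrac12|S|$. Therefore at least $|S|-B\ge\tfrac12|S|$ strings in $S$ (in particular at least a $\beta/2$ fraction of $S$, and an absolute count of at least $\beta 2^n/2$) satisfy $\frac{|p_\phi(x)-p_\psi(x)|}{p_\psi(x)}\le\frac{4T}{\alpha\beta}=\frac{4\sqrt{1-F(\psi,\phi)}}{\alpha\beta}$, which is the assertion.

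I do not expect a genuine obstacle here; if anything needs care it is (i) the exact reading of \cref{fact:ac}, where a joint-probability version costs at most constant factors in $\alpha,\beta$ via the reverse-Markov step above, and (ii) making sure the measurement-contraction step is applied to $p_\psi,p_\phi$ as genuine probability distributions, which is why $\ket\phi$ should be taken normalized so that $\sum_x p_\phi(x)=1$. The constant $4$ is calibrated so that $B\le\tfrac\beta2 2^n$ coincides exactly with $\tfrac12|S|$, leaving no slack; if a bound over the full cube $\{0,1\}^n$ were preferred, the same count already gives at least $\beta 2^n/2$ good strings outright.
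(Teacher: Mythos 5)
Your proposal is correct and follows essentially the same route as the paper: Fuchs--van de Graaf plus measurement contractivity gives the $\ell_1$ budget $\sum_x|p_\psi(x)-p_\phi(x)|\le 2\sqrt{1-F}$, and the paper's ``Markov's inequality'' step is exactly your explicit bad-string counting, both yielding that at most $\beta\,2^n/2\le|S|/2$ strings in $S$ can violate the $4T/(\alpha\beta)$ bound. Your side remark about the first assertion is also the right catch: since Assumption~\ref{fact:ac} is phrased as a joint probability over $(C,x)$, the ``$|S|\ge\beta 2^n$ with probability $\ge\beta$'' claim only follows after a reverse-Markov step that degrades the constants (e.g.\ to $\beta/2$ and $\beta/2$), which the paper's one-line proof glosses over; this is a constant-factor renaming, not a substantive gap, and is consistent with the loose ``$\beta/2$ fraction'' conclusion the lemma ultimately asserts.
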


\begin{proof}
Data processing gives $\mathrm{TV}(p_\psi,p_\phi)\le \mathrm{T}(\psi,\phi)$ for pure states.
Summing $|p_\phi-p_\psi|$ over $S$ and dividing by the lower bound $\alpha/2^n$ yields the average relative-error bound; applying Assumption~\ref{fact:ac} and Markov’s inequality gives the desired result.
\end{proof}
For peaked circuits with peakedness close to 1, one could prove that an accurate estimation to its peakedness also means an accurate estimation on the state fidelity.

\begin{lemma}[Peak-to-fidelity lower bound]\label{lem:peak-to-fid}
Let the ideal RPC output be
$\ket{\psi_{\mathrm{out}}}=\sqrt{\delta}\ket{0^n}+\sqrt{1-\delta}\ket{\chi}$ with $\delta\in(0,1)$,
and let a simulator output $\ket{\phi_{\mathrm{out}}}$ with $\tilde p_0=|\!\braket{0^n}{\phi_{\mathrm{out}}}\!|^2$
satisfying $|\tilde p_0-\delta|\le \varepsilon_{\mathrm{add}}$ (additive).
Then
\[
F\bigl(\psi_{\mathrm{out}},\phi_{\mathrm{out}}\bigr)\ \ge\ 
F_{\min}(\delta,\varepsilon_{\mathrm{add}})
:=\Bigl(\sqrt{\delta(\delta-\varepsilon_{\mathrm{add}})}-\sqrt{(1-\delta)\bigl(1-\delta+\varepsilon_{\mathrm{add}}\bigr)}\Bigr)^2,
\]
and, for all $\delta\in(0,1)$, $\varepsilon_{\mathrm{add}}\in[0,\delta]$,
\begin{equation}\label{eq:1minusFmin}
1-F_{\min}(\delta,\varepsilon_{\mathrm{add}})\ \le\ 4(1-\delta)\ +\ 2\varepsilon_{\mathrm{add}}.
\end{equation}
\end{lemma}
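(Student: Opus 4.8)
The plan is to write both states in the two-dimensional subspace spanned by $\ket{0^n}$ and a suitable unit vector, reduce the fidelity to a function of four real amplitudes and one relative phase, and minimize. Concretely, first I would decompose $\ket{\phi_{\mathrm{out}}} = \sqrt{\tilde p_0}\,e^{i\gamma}\ket{0^n} + \sqrt{1-\tilde p_0}\,\ket{\chi'}$ with $\braket{0^n}{\chi'}=0$, so that $\braket{\psi_{\mathrm{out}}}{\phi_{\mathrm{out}}} = \sqrt{\delta \tilde p_0}\,e^{i\gamma} + \sqrt{(1-\delta)(1-\tilde p_0)}\,\braket{\chi}{\chi'}$. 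Since $|\braket{\chi}{\chi'}|\le 1$ and the phases $\gamma$ and $\arg\braket{\chi}{\chi'}$ are unconstrained, the worst case is when the two terms point in opposite directions with $|\braket{\chi}{\chi'}|=1$, giving
\[
F(\psi_{\mathrm{out}},\phi_{\mathrm{out}}) \ \ge\ \Bigl(\sqrt{\delta \tilde p_0}-\sqrt{(1-\delta)(1-\tilde p_0)}\Bigr)^2
\]
whenever the right-hand side is the square of a nonnegative number (i.e.\ $\sqrt{\delta\tilde p_0}\ge\sqrt{(1-\delta)(1-\tilde p_0)}$, which holds in the regime of interest where $\delta$ is close to $1$ and $\varepsilon_{\mathrm{add}}$ is small).

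Next I would use the constraint $|\tilde p_0 - \delta|\le\varepsilon_{\mathrm{add}}$ to lower-bound this expression. The map $\tilde p_0 \mapsto \sqrt{\delta \tilde p_0}$ is increasing and $\tilde p_0\mapsto\sqrt{(1-\delta)(1-\tilde p_0)}$ is decreasing, so the bracket is minimized at $\tilde p_0 = \delta - \varepsilon_{\mathrm{add}}$, yielding
\[
F(\psi_{\mathrm{out}},\phi_{\mathrm{out}})\ \ge\ \Bigl(\sqrt{\delta(\delta-\varepsilon_{\mathrm{add}})}-\sqrt{(1-\delta)(1-\delta+\varepsilon_{\mathrm{add}})}\Bigr)^2 = F_{\min}(\delta,\varepsilon_{\mathrm{add}}),
\]
which is exactly the claimed bound.

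For the second inequality~\eqref{eq:1minusFmin}, I would expand $1-F_{\min}$ directly. Writing $a=\sqrt{\delta(\delta-\varepsilon_{\mathrm{add}})}$ and $b=\sqrt{(1-\delta)(1-\delta+\varepsilon_{\mathrm{add}})}$, we have $1-F_{\min} = 1-(a-b)^2 = 1 - a^2 + 2ab - b^2$. Now $a^2 = \delta^2-\delta\varepsilon_{\mathrm{add}}$ and $b^2 = (1-\delta)^2+(1-\delta)\varepsilon_{\mathrm{add}}$, so $1-a^2-b^2 = 1-\delta^2-(1-\delta)^2 + \delta\varepsilon_{\mathrm{add}}-(1-\delta)\varepsilon_{\mathrm{add}} = 2\delta(1-\delta) + (2\delta-1)\varepsilon_{\mathrm{add}}$. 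Adding the cross term, $2ab = 2\sqrt{\delta(1-\delta)(\delta-\varepsilon_{\mathrm{add}})(1-\delta+\varepsilon_{\mathrm{add}})} \le 2\sqrt{\delta(1-\delta)}\cdot\sqrt{(\delta-\varepsilon_{\mathrm{add}})(1-\delta+\varepsilon_{\mathrm{add}})}$; bounding $(\delta-\varepsilon_{\mathrm{add}})(1-\delta+\varepsilon_{\mathrm{add}})\le \delta\cdot 1$ crudely (or more carefully keeping $\sqrt{\delta(1-\delta)}$ factors) gives $2ab \le 2\sqrt{\delta(1-\delta)}\le 2(1-\delta)+\tfrac12$ — too lossy, so instead I would keep $2ab\le 2\delta(1-\delta)+O(\varepsilon_{\mathrm{add}})$ by noting $ab\le \tfrac12(a^2\cdot\tfrac{1-\delta}{\delta}+b^2\cdot\tfrac{\delta}{1-\delta})$ is awkward; cleaner is to bound $2ab \le a^2\frac{1-\delta}{\delta} + b^2\frac{\delta}{1-\delta}$ only near $\delta\to 1$. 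The robust route is: since $a-b\ge 0$, $1-F_{\min}=(1-(a-b))(1+(a-b))\le 2(1-(a-b))$, and $1-(a-b) = (1-a)+b \le (1-a^2) + b \le \delta\varepsilon_{\mathrm{add}} + (1-\delta)+\tfrac12\varepsilon_{\mathrm{add}} \le (1-\delta)+\varepsilon_{\mathrm{add}}$ using $1-a\le 1-a^2$ and $b\le\sqrt{1-\delta}\cdot 1\le$ — here I need $b\le$ something linear, and $b=\sqrt{(1-\delta)(1-\delta+\varepsilon_{\mathrm{add}})}\le\sqrt{1-\delta}\cdot\sqrt{1-\delta+\varepsilon_{\mathrm{add}}}$, which when $\varepsilon_{\mathrm{add}}\le\delta$ and $\delta$ near $1$ is $\le 2(1-\delta)+\varepsilon_{\mathrm{add}}$ after a short case split; multiplying by $2$ gives $1-F_{\min}\le 4(1-\delta)+2\varepsilon_{\mathrm{add}}$.

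The main obstacle I anticipate is the final linear bound~\eqref{eq:1minusFmin}: the factor $\sqrt{1-\delta}$ appearing in the cross term $b$ is not linear in $(1-\delta)$ for general $\delta$, so one cannot get a clean $4(1-\delta)+2\varepsilon_{\mathrm{add}}$ bound that holds for \emph{all} $\delta\in(0,1)$ without either (a) restricting attention to the regime $\delta\ge 1/2$ (where the $(a-b)\ge 0$ assumption is anyway needed for the first part), in which $\sqrt{1-\delta}\le\sqrt{1/2}$ can be absorbed, or (b) being more careful and observing that $1-F_{\min}\le 1 - F_{\min}$ is vacuously at most $1$ while $4(1-\delta)+2\varepsilon_{\mathrm{add}}\ge$ something, and splitting into the case $1-\delta\ge c$ (bound trivial) and $1-\delta$ small (bound from the expansion). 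I would carry out case (a) explicitly since the lemma is only invoked for $\delta$ close to $1$, and note the general case follows by the trivial bound $1-F_{\min}\le 1$ whenever $4(1-\delta)+2\varepsilon_{\mathrm{add}}\ge 1$.
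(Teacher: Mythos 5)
Your derivation of the explicit lower bound $F\ge F_{\min}$ is the same as the paper's: decompose both states in $\mathrm{span}\{\ket{0^n}\}\oplus\ket{0^n}^\perp$, observe the overlap is minimized when the orthogonal parts are antiparallel, then minimize over $\tilde p_0\in[\delta-\varepsilon_{\rm add},\delta+\varepsilon_{\rm add}]$ using monotonicity. Your caveat that one needs $\sqrt{\delta\tilde p_0}\ge\sqrt{(1-\delta)(1-\tilde p_0)}$ at the left endpoint (equivalently $\delta\ge(1+\varepsilon_{\rm add})/2$) is a genuine observation that the paper leaves implicit; without it the function $g(\tilde p_0)=\sqrt{\delta\tilde p_0}-\sqrt{(1-\delta)(1-\tilde p_0)}$ could change sign over the interval and drive $F$ to zero, falsifying the stated bound. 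Since the paper only invokes this lemma with $\delta$ near $1$, this is harmless.

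For the linear bound~\eqref{eq:1minusFmin} there is a real gap. The chain you commit to --- $1-F_{\min}\le 2\bigl(1-(a-b)\bigr)$, then $1-a\le 1-a^2$ and $b\le 2(1-\delta)+\varepsilon_{\rm add}$ --- loses a factor of two twice: $1-a\le 1-a^2$ discards a factor $(1+a)\approx 2$, and the bound $b\le 2(1-\delta)+\varepsilon_{\rm add}$ is a factor two worse than the optimal $b\le(1-\delta)+\varepsilon_{\rm add}/2$. Carried out carefully this route lands at roughly $8(1-\delta)+4\varepsilon_{\rm add}$, not the claimed $4(1-\delta)+2\varepsilon_{\rm add}$. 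Also, your intermediate claim $1-a^2\le \delta\varepsilon_{\rm add}+(1-\delta)$ is false as written, since $1-a^2=(1-\delta)(1+\delta)+\delta\varepsilon_{\rm add}$ and $(1-\delta)(1+\delta)>1-\delta$. And your worry that $\sqrt{1-\delta}$ cannot be absorbed into a linear bound is misplaced: the offending object is $b=\sqrt{(1-\delta)(1-\delta+\varepsilon_{\rm add})}$, which by AM--GM satisfies $b\le(1-\delta)+\varepsilon_{\rm add}/2$ for all $\delta\in(0,1)$, so the $\sqrt{1-\delta}$ never appears alone.

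A clean way to obtain exactly $4(1-\delta)+2\varepsilon_{\rm add}$, uniformly in $\delta$, is to expand $1-F_{\min}=1-a^2-b^2+2ab$ and use weighted Young: $2ab\le\tfrac{1-\delta}{\delta}a^2+\tfrac{\delta}{1-\delta}b^2=(1-\delta)(\delta-\varepsilon_{\rm add})+\delta(1-\delta+\varepsilon_{\rm add})=2\delta(1-\delta)+(2\delta-1)\varepsilon_{\rm add}$; a direct computation gives $1-a^2-b^2=2\delta(1-\delta)+(2\delta-1)\varepsilon_{\rm add}$, so $1-F_{\min}\le 4\delta(1-\delta)+2(2\delta-1)\varepsilon_{\rm add}\le 4(1-\delta)+2\varepsilon_{\rm add}$, valid for all $\delta\in(0,1)$ and $\varepsilon_{\rm add}\in[0,\delta]$ with no case splits. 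For context, the paper's own proof sketch writes $a\ge\delta-\varepsilon_{\rm add}/2$, which has the inequality reversed (in fact $a\le\delta-\varepsilon_{\rm add}/2$ and $a\ge\delta-\varepsilon_{\rm add}$); the stated conclusion is nonetheless correct and the Young-inequality argument above gives it directly.
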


\begin{proof}
The expression $F_{\min}$ is achieved by choosing the orthogonal parts antiparallel,
$\braket{\chi}{\chi'}=-1$. For \eqref{eq:1minusFmin}, write
$a=\sqrt{\delta(\delta-\varepsilon_{\mathrm{add}})}$, $b=\sqrt{(1-\delta)(1-\delta+\varepsilon_{\mathrm{add}})}$,
use $\sqrt{t(1-x)}\ge t-\frac{x}{2}$ and $\sqrt{u(u+\varepsilon)}\le u+\frac{\varepsilon}{2}$ to get
$a\ge \delta-\varepsilon_{\mathrm{add}}/2$, $b\le (1-\delta)+\varepsilon_{\mathrm{add}}/2$, so
$F_{\min}=(a-b)^2\ge (2\delta-1-\varepsilon_{\mathrm{add}})^2$ and hence
$1-F_{\min}\le 1-(2\delta-1-\varepsilon_{\mathrm{add}})^2\le 4(1-\delta)+2\varepsilon_{\mathrm{add}}$.
\end{proof}

Putting everything together, we show that, if a RPC has peakedness very close to 1, then there is a non-negligible probability that it becomes hard for any classical sequential simulator.

\begin{theorem}[Peakedness barrier for sequential simulators]\label{thm:master}
Fix $\varepsilon_{\mathrm{mult}}(n)=1/\mathrm{poly}(n)$ from Fact~\ref{fact:mult-hard} and define
\[
\tau_\star(n)\ :=\ \Bigl(\frac{\varepsilon_{\mathrm{mult}}\alpha\beta}{4}\Bigr)^2,\qquad
\delta_\star(n)\ :=\ 1-\frac{\tau_\star(n)}{8}.
\]
Let $\varepsilon_{\mathrm{add}}(n)\le \tau_\star(n)/4$.
Then there is no polynomial-time sequential simulator (Def.~\ref{def:simulator}) that, on more than a $1/\mathrm{poly}(n)$ fraction of RPC instances in $\mathcal{F}_n$ with $\delta\ge\delta_\star(n)$, outputs a state $\ket{\phi_{\mathrm{out}}}$ whose peak estimate satisfies
\[
|\tilde p_0-\delta|\ \le\ \varepsilon_{\mathrm{add}}(n).
\]
\end{theorem}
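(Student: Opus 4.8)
The plan is to run the following contradiction argument. Suppose toward a contradiction that a polynomial-time sequential simulator $\mathcal{S}$ (Def.~\ref{def:simulator}) exists that, on more than a $1/\mathrm{poly}(n)$ fraction of RPC instances with $\delta\ge\delta_\star(n)$, outputs a state $\ket{\phi_{\mathrm{out}}}$ whose peak estimate $\tilde p_0$ satisfies $|\tilde p_0-\delta|\le\varepsilon_{\mathrm{add}}(n)$. First I would use Lemma~\ref{lem:peak-to-fid}: for such an instance, an accurate peak estimate forces $F(\psi_{\mathrm{out}},\phi_{\mathrm{out}})\ge F_{\min}(\delta,\varepsilon_{\mathrm{add}})$, and by \eqref{eq:1minusFmin} together with $\delta\ge\delta_\star = 1-\tau_\star/8$ and $\varepsilon_{\mathrm{add}}\le\tau_\star/4$, this gives $1-F(\psi_{\mathrm{out}},\phi_{\mathrm{out}})\le 4(1-\delta)+2\varepsilon_{\mathrm{add}}\le \tau_\star/2+\tau_\star/2=\tau_\star$. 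So the final simulated state is $\tau_\star$-close in (one minus) fidelity to the true RPC output.

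Next I would invoke the monotone-fidelity assumption \eqref{eq:monotone-fid} and the structure of the RPC construction (Def.~\ref{def: pcc}), namely that $P=C'^\dagger C$ factors through the intermediate state $\ket{\psi_{\mathrm{mid}}}:=C\ket{0^n}$. Since fidelity with the ideal trajectory only decreases along the simulation, the fidelity of the simulator's internal state right after the $C$-segment, call it $\ket{\phi_{\mathrm{mid}}}$, must satisfy $F(\psi_{\mathrm{mid}},\phi_{\mathrm{mid}})\ge F(\psi_{\mathrm{out}},\phi_{\mathrm{out}})\ge 1-\tau_\star$. Thus, on a $1/\mathrm{poly}(n)$ fraction of instances, the simulator produces an efficiently-describable classical state whose fidelity with $C\ket{0^n}$ is at least $1-\tau_\star$. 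Because $C\sim\mu$ (a $k$-design, and we may take the Haar/high-design idealization here), this $\ket{\phi_{\mathrm{mid}}}$ is a classical witness of near-unit fidelity to a random-circuit output state.

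The final step converts this into a violation of Assumption~\ref{fact:mult-hard}. Feeding $\ket{\psi}=\psi_{\mathrm{mid}}$ and $\ket{\phi}=\phi_{\mathrm{mid}}$ into Lemma~\ref{lem:fid-to-mult}, the bound $1-F\le\tau_\star$ yields $4\sqrt{1-F}/(\alpha\beta)\le 4\sqrt{\tau_\star}/(\alpha\beta)=\varepsilon_{\mathrm{mult}}$ by the definition of $\tau_\star$. Hence, with probability $\ge\beta$ over $C$ the heavy set $S$ has $|S|\ge\beta 2^n$, and on at least a $\beta/2$ fraction of $x\in S$ the simulator's amplitudes $p_{\phi_{\mathrm{mid}}}(x)$ are within multiplicative error $\varepsilon_{\mathrm{mult}}$ of $p_x(C)$; since the simulator holds an explicit classical description of $\ket{\phi_{\mathrm{mid}}}$, these amplitudes can be read off (or sampled/estimated) in $\mathrm{poly}(n)$ time, and $(x,C)$ with $x$ uniform over $S$ still hits a $1/\mathrm{poly}(n)$ fraction of all $(C,x)$ pairs. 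Chaining the fractions (the $1/\mathrm{poly}(n)$ fraction of good RPC instances, times $\beta$, times $\beta/2$, times $|S|/2^n$) keeps everything $1/\mathrm{poly}(n)$, so we obtain a polynomial-time classical algorithm meeting the hypothesis of Assumption~\ref{fact:mult-hard}, forcing $\#P\subseteq\mathrm{BPP}$ — the desired contradiction.

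\textbf{Main obstacle.} The delicate point is not the fidelity bookkeeping but making the reduction's randomness line up: one must argue that the RPC instance distribution on the $C$-segment genuinely induces (at least up to the relevant degree-$k$ moments, via Corollary~\ref{cor:peaked-design}) the same distribution over $C\ket{0^n}$ as $C\sim\mu$ used in Assumptions~\ref{fact:ac}–\ref{fact:mult-hard}, and that "$1/\mathrm{poly}(n)$ fraction of RPC instances" translates to a $1/\mathrm{poly}(n)$ fraction of $(C,x)$ pairs after conditioning on the heavy set $S$ — the set $S$ itself depends on $C$, so care is needed to avoid a circular or measure-zero selection. A secondary subtlety is that extracting a single amplitude $p_{\phi_{\mathrm{mid}}}(x)$ from the simulator's state representation must itself be classically efficient; this is immediate for Schrödinger/stabilizer-rank representations but should be stated as part of what "sequential simulator" provides (a computable amplitude oracle), which I would fold into Def.~\ref{def:simulator}'s interface.
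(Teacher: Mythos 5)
Your proof follows essentially the same route as the paper's: invoke Lemma~\ref{lem:peak-to-fid} to turn the additive peak-estimate guarantee into a fidelity bound $1-F(\psi_{\mathrm{out}},\phi_{\mathrm{out}})\le\tau_\star$, propagate it backwards through the $C'^\dagger$ half via the monotone-fidelity property to get $F(C\ket{0^n},\phi_{\mathrm{int}})\ge 1-\tau_\star$, then apply Lemma~\ref{lem:fid-to-mult} on the anti-concentrated slice to produce a multiplicative $(1\pm\varepsilon_{\mathrm{mult}})$ estimator of $p_x(C)$ on a $1/\poly(n)$ fraction of $(C,x)$ pairs, contradicting Assumption~\ref{fact:mult-hard}. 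The caveats you flag at the end (distribution-matching of the $C$-marginal, correlation between $S$ and $C$, and efficient amplitude read-out from the simulator's classical representation) are real but are likewise left implicit in the paper's proof, so they do not mark a divergence in approach.
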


\begin{proof}
Assume for contradiction such a simulator $\mathcal{S}$ exists.
For an instance with $\delta\ge\delta_\star$ and $|\tilde p_0-\delta|\le \varepsilon_{\mathrm{add}}$,
Lemma~\ref{lem:peak-to-fid} and the choice $\delta_\star=1-\tau_\star/8$, $\varepsilon_{\mathrm{add}}\le\tau_\star/4$
give
\[
1-F\bigl(\psi_{\mathrm{out}},\phi_{\mathrm{out}}\bigr)\ \le\ 4(1-\delta)+2\varepsilon_{\mathrm{add}}
\ \le\ 4\cdot \frac{\tau_\star}{8}+2\cdot \frac{\tau_\star}{4}\ =\ \tau_\star,
\]
so $F(\psi_{\mathrm{out}},\phi_{\mathrm{out}})\ge 1-\tau_\star$.
By unitary invariance of fidelity and monotonicity~\eqref{eq:monotone-fid}, the simulator’s
intermediate approximation $\ket{\phi_{\mathrm{int}}}$ (after $C$) satisfies
\[
F\bigl(C\ket{0^n},\ \ket{\phi_{\mathrm{int}}}\bigr)\ \ge\ F\bigl(\psi_{\mathrm{out}},\phi_{\mathrm{out}}\bigr)\ \ge\ 1-\tau_\star.
\]
Applying Lemma~\ref{lem:fid-to-mult} with $\tau=\tau_\star$ shows that, for at least a constant fraction of basis strings $x$ in the anti-concentrated slice (Assumption~\ref{fact:ac}), the resulting output probabilities satisfy
\[
\frac{|p_{\phi_{\mathrm{int}}}(x)-p_x(C)|}{p_x(C)}\ \le\ \frac{4\sqrt{\tau_\star}}{\alpha\beta}
\ =\ \varepsilon_{\mathrm{mult}}.
\]
Thus $\mathcal{S}$ yields a multiplicative $(1\pm\varepsilon_{\mathrm{mult}})$ approximation to $p_x(C)$ on a $1/\mathrm{poly}(n)$ fraction of pairs $(C,x)$, contradicting Assumption~\ref{fact:mult-hard}.
\end{proof}

With stronger assumptions on the circuit output distribution and the simulator model, our bounds can plausibly be strengthened. Motivated by this, we propose the following average-case hardness conjecture:

\begin{conjecture}~\label{conj:bqp_hardness}
    Computing peakedness for a desired string to 1/$\poly(n)$ additive error for random peaked circuits with amplitude $1/\poly(n)$ is average-case BQP-complete.
\end{conjecture}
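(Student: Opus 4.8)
The membership (upper-bound) direction is the easy half and is already contained in Lemma~\ref{lem:peaked-weight-add}: given $P$ and the designated string $x_\star$, a quantum computer prepares $P\ket{0^n}$ and estimates $|\!\braket{x_\star}{P\,|0^n}\!|^2$ to additive error $\varepsilon=1/\poly(n)$ with $O(1/\varepsilon^2)$ samples (or $O(1/\varepsilon)$ via amplitude estimation), so the task lies in PromiseBQP on every instance. The whole content of the conjecture is therefore the \emph{average-case hardness} direction: any classical algorithm that, on a $1/\poly(n)$ fraction of circuits drawn from $\nu_\delta$ with $\delta=1/\poly(n)$, returns the peak weight to $1/\poly(n)$ additive error can be bootstrapped into a classical decider for every PromiseBQP language, i.e.\ PromiseBQP $\subseteq$ PromiseBPP. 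The plan is to prove this through a \emph{worst-to-average reduction at inverse-polynomial precision}.

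The polynomial-interpolation machinery behind the $2^{-\poly(n)}$-error \#P-hardness theorem is \emph{not} the right tool, for exactly the reason flagged in the text: Berlekamp--Welch extrapolation from $[0,\tilde\theta)$ to $\theta=2\pi$ blows additive noise up by $\sim(|2-\tilde\theta|/\tilde\theta)^{mK}$, which is only harmless when the per-point error is exponentially small. So I would instead pursue a \emph{hiding / planting} argument. Fix a PromiseBQP-complete problem with a constant promise gap -- e.g.\ the $\phi^4$ vacuum/scattering-probability problems of~\cite{jordan2018bqp} used in Proposition~\ref{prop:transmission-gap-add} -- with $\poly(n)$-size verifier circuit $U$, so the answer is whether $|\!\braket{0^n}{U|0^n}\!|^2$ is $\ge 2/3$ or $\le 1/3$. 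Amplify $U$ to a $\delta$-peaked circuit $U_{\mathrm{peak}}$ exactly as in Lemma~\ref{lem:peaked-weight-add} (controlled-$U$ on one ancilla, so the peak weight on $(1,0^n)$ is $\tfrac12|\!\braket{0^n}{U|0^n}\!|^2$, a constant-gap quantity, optionally amplified by parallel repetition and majority). Now \emph{randomize} $U_{\mathrm{peak}}$ so its induced distribution is indistinguishable from $\nu_\delta$ while the peak string and its gap-separated weight survive: draw a random design circuit $D$ from the $\nu_\delta$ architecture, and form the planted circuit $P_{\mathrm{plant}}=D^\dagger\,(W\cdot U_{\mathrm{peak}})$, where $W$ is an auxiliary random design layer chosen (i) to absorb the architectural mismatch between $U_{\mathrm{peak}}$ and a genuine $\nu_\delta$-sample and (ii) to make the first column of $W\cdot U_{\mathrm{peak}}$ projectively indistinguishable, up to the design order $k$, from $C'\ket{0^n}$ for a fresh design circuit $C'$. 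Invoking the block--design decomposition (Corollary~\ref{cor:peaked-design}, built on Theorem~\ref{thm:haar-block} and Proposition~\ref{prop:design-block}) one argues that the degree-$\le k$ statistics of $P_{\mathrm{plant}}$ match those of $\nu_\delta$; since the peak string and a $\Theta(1)$ gap are preserved, the assumed average-case solver must, on a $1/\poly(n)$ fraction of planted inputs, return the peak weight accurately enough to read off which side of the gap it lies on, and $\poly(n)$ independent plantings plus a majority vote then decide the original PromiseBQP instance classically.

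The \textbf{main obstacle} is the indistinguishability step: proving that $\{P_{\mathrm{plant}}\}$ cannot be separated from $\nu_\delta$ by \emph{any} polynomial-time classical algorithm, not merely by bounded-degree moment tests. Design-theoretic moment matching controls only balanced polynomials of degree $\le k$, so it rules out a natural family of distinguishers but not arbitrary ones; upgrading ``moment-indistinguishable'' to ``computationally indistinguishable'' appears to need a genuinely new, cryptographic-flavored assumption -- essentially that a random $\nu_\delta$-architecture circuit carrying a planted $\poly(n)$-size subroutine is hard to tell apart from a fresh random circuit, in the spirit of pseudorandom-unitary and random-circuit-hiding hypotheses. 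A secondary difficulty is routing the peak \emph{string} through the randomization cleanly: $D^\dagger$ and $W$ may be correlated with $U_{\mathrm{peak}}$ only through the single rank-one direction carrying the peak -- which is exactly the structure the postselection construction enforces for $(C,C')$, so the geometry is compatible, but forcing the conditional law of the planted construction to \emph{literally} coincide with $\nu_\delta$ (not just match low moments) will likely demand either a statistical-closeness argument with carefully tuned approximate-design parameters $(k,\eta)$ or a reframing as a security reduction. I view Theorem~\ref{thm:master} (the sequential-simulator barrier) together with the incompressibility result (Theorem~\ref{thm: incom}) as the current partial evidence: they eliminate the obvious gate-by-gate classical strategies and show the circuits are complex enough not to collapse, but a full proof of the conjecture must remove the sequential restriction, which is precisely where the hiding assumption above would have to do the work.
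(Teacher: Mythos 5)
This is a \emph{conjecture} in the paper, not a theorem, and the paper offers no proof --- it supplies only circumstantial evidence: the sequential-simulator barrier (Theorem~\ref{thm:master}), incompressibility (Theorem~\ref{thm: incom}), and informal arguments that MPS contraction and the Bravyi--Gosset--Liu peaked-shallow-circuit algorithm both fail. There is therefore no proof in the paper against which to compare your attempt; what you should be checking is whether you have actually closed the gap the authors leave open. You have not, and to your credit you say so explicitly. Your membership half is correct and matches Lemma~\ref{lem:peaked-weight-add} essentially verbatim. Your diagnosis of why the paper's polynomial-interpolation route cannot be pushed from $2^{-\poly(n)}$ to $1/\poly(n)$ additive error --- Berlekamp--Welch extrapolation from a short window near $\theta=0$ out to $\theta=2\pi$ amplifies noise by $(|2-\tilde\theta|/\tilde\theta)^{mK}$ --- is the same observation the paper makes after its \#P-hardness theorem. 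And your identification of the decisive obstacle is correct and precise: $k$-design moment matching (Corollary~\ref{cor:peaked-design}) only controls balanced degree-$\le k$ statistics, so it rules out a family of low-degree distinguishers but says nothing against an arbitrary polynomial-time algorithm; upgrading ``moment-indistinguishable'' to ``computationally indistinguishable'' would require a new pseudorandomness-style hypothesis that the paper neither states nor proves. So your proposal is a reasonable research plan, not a proof --- which is the correct posture toward a statement the paper itself flags as open.

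One additional subtlety in your planting scheme worth flagging: even granting indistinguishability, an average-case solver is only guaranteed to succeed on a $1/\poly(n)$ \emph{fraction} of $\nu_\delta$. For $\poly(n)$ independent plantings plus majority vote to decide a fixed PromiseBQP instance, you need the planted distribution to overlap the solver's ``good set'' with probability bounded away from the failure rate --- i.e., you need a random self-reduction, not just indistinguishability in distribution. If indistinguishability holds at the level of total variation distance $\eta=o(1/\poly(n))$, this does follow, but you would need to verify the approximate-design parameters $(k,\eta)$ support that quantitative claim; otherwise the ``good fraction'' of the solver could conceivably avoid the planted instances entirely. This is downstream of your main obstacle but is a distinct step that a full proof would have to nail down.
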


Below we examine two classical-simulation strategies, either generic or specifically designed for peaked circuits, and explain
why each fails to achieve a 1/$\poly(n)$-additive estimate under reasonable assumptions, providing
circumstantial evidence for Conjecture~\ref{conj:bqp_hardness}.

\paragraph{(i) MPS contraction.}
Even for one–dimensional circuits of depth $T=O(\poly(n))$ the Schmidt rank of
the evolved state grows as $\chi=\exp(\Theta(T))$ after the first
anti-concentration layer~\cite{cirac2021matrix}. Contracting such a network
exactly costs $\chi^{3}$ per slice, i.e.\ $\exp(\poly(n))$ time. Even
approximate MPS truncation fails because $C\ket{0}$ is an intermediate state that is almost maximally entangled and cannot be compressed without losing fidelity extensively.

\paragraph{(ii) Peaked shallow circuit simulation.}
The recent “peaked shallow quantum circuits” algorithm~\cite{bravyi2023classical} by Bravyi, Gosset, and Liu relies heavily on the lightcone argument. They prove that if each output bit depends on only $O(1)$ other bits (which is true for constant-depth circuits), then almost all probability mass in the Pauli basis lies inside a Hamming ball of small radius and thus permit efficient simulation. In the our setting, however, the intermediate state carries weights in exponentially many basis states, so the Pauli list therefore blows up to $2^{\Theta(n)}$, and the algorithm no
longer runs in sub-exponential time.

\medskip
Taken together, these inapproximability results suggest that any
polynomial-time classical routine for $1/\poly(n)$-additive estimation of
$p_{C}(0^n)$ would be either impossible or require a fundamentally new idea, lending credence to Conjecture~\ref{conj:bqp_hardness}.

%This is also supported by numerical evidence, where we run $10^8$ random events on a 10-qubit circuit, choose the most peaked circuit, and study its distribution. Test the frame potential, very similar to that of a random circuit.

%Directly preparing these random Hamiltonians are hard but we can run a optimization. Since we are randomizing over the initalization points, each circuit should be far from each other
%\paragraph{(i) Feynman–type path integrals.}

%Here is a sketch of the proof idea: the random peaked circuits, by our construction, consist of two parts: first a random circuit and then. Assume that one could not approximate 

\subsection{Compiler as an obfuscator}
 Here we give another piece of evidence why deciding the peakedness of $CC'$ is hard, even though their unitary matrices share some elements in common. In particular, we show that, once the unitary is compiled into a quantum circuit, it is very hard to decide whether two elements in these circuits are correlated without extensively considering all other elements in the matrix.

\begin{lemma}
Given a standard compiler that converts a unitary to an exponential-sized circuit, it is impossible to tell whether $C_{ij}$ and $C'_{ij}$ match on a particular value without accessing the compiler exponentially many times.
\end{lemma}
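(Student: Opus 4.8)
The plan is to model the ``standard compiler'' as a black-box map $\mathcal{K}:\mathrm{U}(d)\to\mathcal{C}_{\exp(n)}$ that, queried on a pair of indices, reveals one gate (or one local parameter) of the compiled circuit, and to show that the value of any single compiled-circuit element is information-theoretically independent of the single matrix entry $C_{ij}$ once all other entries are left free — so that distinguishing ``$C_{ij}=C'_{ij}$'' from ``$C_{ij}\neq C'_{ij}$'' cannot be done with a sub-exponential number of such queries. First I would set up an adversary/indistinguishability framework: fix all but one entry of $C$ (respecting unitarity, which forces a whole row/column to co-vary), and observe that the fibre of unitaries consistent with a given partial specification is a positive-dimensional manifold (by the same disintegration/stabilizer argument used in \cref{thm:haar-block}, conditioning on a proper subset of entries leaves a $\mathrm{U}(d-r)$-worth of freedom). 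The compiler, being a deterministic function of the \emph{full} unitary, produces a circuit whose every local gate depends — generically — on \emph{all} entries; so reading off $C_{ij}$ from the circuit requires pinning down enough gates to reconstruct the relevant submatrix, and a counting/dimension argument shows that $o(d^2)$ gate reads leave the target entry undetermined over a set of consistent unitaries of full measure.

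The key steps, in order: (1) formalize the query model — each oracle call returns one of the $\exp(n)$ gates of $\mathcal{K}(C)$ — and define the decision task as distinguishing the distribution where $C'$ is an independent Haar draw conditioned on agreeing with $C$ in its first column (the RPC setting) from one where additionally $C'_{ij}=C_{ij}$ is enforced. (2) Show that after $q$ queries the transcript is consistent with a large family of unitaries: invoke a hybrid argument, replacing revealed gates one at a time, and bound the statistical distance between the two induced transcript distributions by something like $q/\mathrm{poly}(d)$ using the fact that each single-gate read constrains only $O(1)$ real parameters of the $d^2$-dimensional unitary manifold. (3) Conclude that unless $q=\Omega(d^{c})=\exp(\Omega(n))$ for some constant $c>0$, the two distributions are statistically close, hence no decider succeeds with non-negligible advantage. (4) Package this as: ``without $\exp(\Omega(n))$ accesses to the compiler, the matching bit $[C_{ij}=C'_{ij}]$ is undetermined,'' which is the claimed impossibility.

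The main obstacle is step (2): real compilers are not arbitrary black boxes, and one must argue that \emph{some} universal, deterministic synthesis procedure has the ``spreading'' property that no small set of its output gates reveals a designated matrix entry — equivalently, that the inverse map (gates $\to$ entry) has no low-locality shortcut. I would handle this by either (a) restricting to a concrete natural compiler (e.g.\ QR/cosine–sine decomposition, or Solovay–Kitaev-style synthesis) and showing explicitly that each matrix entry is a non-trivial function of polynomially-many successive gates whose composition mixes all rows, so that fewer than that many reads leave it free; or (b) stating the lemma for a random-compiler model (the compiler first conjugates by a secret Haar-random basis change before synthesizing), where the spreading property is automatic from \cref{prop:design-block}-style moment bounds. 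A secondary subtlety is that fixing a single entry of a unitary is not a ``free'' conditioning — unitarity couples it to its row and column — so the dimension count in step (2) must be done on the correct constraint manifold; this is routine but must be stated carefully to avoid an off-by-a-row error in the $\Omega(d^2)$ bound.
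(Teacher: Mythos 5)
Your option (a) is essentially the route the paper takes: it instantiates the ``standard compiler'' as the Givens/QR (two-level) decomposition $C = G_1 G_2 \cdots G_N$ with $N = O(2^{2n})$ rotations, and argues that because each elimination step recursively updates all remaining rows and columns, a single matrix entry $C_{ij}$ is a nonlocal function of essentially all $N$ rotation angles; since postselection makes generic entries of $C$ and $C'$ differ, comparing $C_{ij}$ with $C'_{ij}$ then requires reconstructing the entries from all the gate parameters, i.e.\ exponentially many compiler reads. Your query-oracle plus indistinguishability-game framing is a cleaner and more honest \emph{formalization} of the same intuition, and you correctly put your finger on the gap that the paper leaves open: the ``spreading'' property --- that no $o(N)$ subset of the compiled gates determines a designated matrix entry --- is simply asserted in the paper (``each matrix element is an entangled function of exponentially many parameters''), not proved. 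Your step (2) hybrid argument is the right shape for closing it, but note that the compiler is deterministic, so the hybrids must be taken over the randomness of $(C,C')$ rather than over the gates themselves; and for the Givens compiler specifically, the first few rotations \emph{do} expose entries of the first column exactly, so the spreading claim needs to be stated for generic $(i,j)$ or with the basis-randomizing preprocessing of your option (b), which the paper does not use but which would actually make the argument cleaner. In short: same core idea and same concrete compiler, a more rigorous framing on your side, and the same unproved lemma (spreading) sitting at the bottom of both.
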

\begin{proof}
This can be shown by considering the standard Givens rotation decomposition for a generic $n$-qubit unitary $C$:

$$
C = G_1 G_2 \cdots G_{N}
$$

where each $G_k$ is a two-level unitary (Givens rotation), and $N \sim O(2^{2n})$ for a generic $2^n \times 2^n$ unitary.

Suppose we want to determine whether the $(i,j)$-th entries of $C$ and $C'$ are equal, i.e., $C_{ij} = C'_{ij}$ or perhaps both take on a particular value. In practice, the matrix element $C_{ij}$ is a complicated function of all the angles in the decomposition:

$$
C_{ij} = f(\{\theta_k\}_{k=1}^{N}),
$$

where each $\theta_k$ is a rotation parameter appearing in the gate sequence output by the compiler. The explicit dependence is highly nonlocal: even a single row or column operation mixes the entire remaining row or column, so that after a few steps, each matrix element is an entangled function of exponentially many parameters. Specifically, in QR or Givens decomposition, the elimination of each entry updates all subsequent rows \& columns recursively, and thus the computation of $C_{ij}$ depends on all previous rotations.

Due to our postselection procedure, the vast majority of the entries of $C_{ij}$ and $C'_{ij}$ will be different. Therefore, to compare $C_{ij}$ and $C'_{ij}$, or to determine if they match a given value, it is not sufficient to locally inspect a small number of gates or entries; in the generic case, you must know all $2^{2n}$ gate parameters or, equivalently, perform an exponential number of queries to reconstruct these entries. Hence, the only generic way to determine if $C_{ij} = C'_{ij}$ is to simulate the full action of both circuits or to explicitly reconstruct the matrices, which is an exponentially hard task for large $n$.
\end{proof}
The bottom line is, even a standard compiler provides a strong obfuscation effect: unless you perform exponentially many queries, it is infeasible to deduce whether two circuits $C$ and $C'$ agree on a particular matrix entry. In our construction, the gates in $C$ and $C'$ look locally Haar-random, with global constraints between the two circuits that cannot be distinguished pair-wisely.

\section{Verifiable quantum advantage with RPC sampling}\label{sec: advantage}
In this section, we turn the “peaked circuit” idea from a hardness statement into a practical recipe that can run on today’s devices. 
We first prove that while the postselection idea is an important theoretical model, it is impractical for problem generation as the success probability for any $\delta\geq1/\poly(n)$ becomes exponentially small even in the $k$-design case. Nevertheless, we show that one could use numerical optimization to search for peaked circuits, and that when random initialization is implemented, these searches yield peaked circuits with properties matching RPCs generated by postselection. Thirdly, we discuss a circuit stitching idea that allows one to scalable construct peaked circuits from small peaked blocks. Lastly, we discuss a practical advantage of peaked circuits and their robustness to sparse bit-flip errors.

\subsection{Bounding the postselection success probability}
\label{sec:postselection}

In the previous sections we built peaked circuits via postselection. However, this literal
postselection is not scalable in $n$ as we explain here: if we draw $C,C'\in U(2^n)$ at random (either truly Haar or from a
$k$-design) then, the probability of first column of $C'$ being
(near-)aligned with that of $C$ is exponentially small, so the acceptance rate of the postselection procedure in Def.~\ref{def: pcc} is exponentially
small in the Hilbert-space dimension $d=2^n$.
Let $d=2^n$ and write $\ket{c}$ for the first column of $C\in U(d)$, i.e.\ $\ket{c}=C\ket{0^n}$, we have the following results:

\begin{lemma}[Peaked Haar random circuits are rare]
\label{lem:haar-cap}
Fix any unit vector $\ket{v}\in\mathbb{C}^d$.  If $\ket{\psi}$ is Haar random on the unit sphere, then
\[
\Pr\!\big[\,|\!\braket{v}{\psi}\!|^2 \ge \delta\,\big]=(1-\delta)^{\,d-1}\,.
\]
Consequently, for $d=2^n$ this probability is doubly exponential in $n$ for any fixed $\delta<1$.
\end{lemma}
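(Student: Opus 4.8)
The plan is to collapse the $d$-dimensional problem to a one-dimensional one by invariance, recognize the resulting law in closed form, and then integrate the tail.

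First I would use the $\mathrm{U}(d)$-invariance of the Haar measure on the unit sphere of $\C^d$: choosing any $W\in\mathrm{U}(d)$ with $W\ket{v}=\ket{e_1}$, the random variable $|\!\braket{v}{\psi}\!|^2$ has the same distribution as $t:=|\!\braket{e_1}{\psi}\!|^2$ for $\ket{\psi}$ Haar random, so it suffices to determine the law of $t\in[0,1]$. To do so I would realize the uniform measure on the sphere by normalizing a complex Gaussian vector $g=(g_1,\dots,g_d)$ with i.i.d.\ standard complex-Gaussian entries, $\ket{\psi}=g/\|g\|$. Then $t=|g_1|^2/\bigl(|g_1|^2+\sum_{i=2}^d|g_i|^2\bigr)$ is a ratio $X/(X+Y)$ with $X\sim\mathrm{Exp}(1)$ and an independent $Y\sim\Gamma(d-1,1)$, which by the standard Beta--Gamma algebra is $\mathrm{Beta}(1,d-1)$-distributed, i.e.\ has density $(d-1)(1-t)^{d-2}$ on $[0,1]$.

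Then the tail is a one-line integral:
\[
\Pr\!\bigl[\,|\!\braket{v}{\psi}\!|^2\ge\delta\,\bigr]=\int_\delta^1 (d-1)(1-t)^{d-2}\,dt=(1-\delta)^{d-1}.
\]
Substituting $d=2^n$ gives $(1-\delta)^{2^n-1}=\exp\!\bigl(-\Theta(2^n)\bigr)$ for any fixed $\delta\in(0,1)$, which is the claimed doubly-exponential smallness in $n$ (equivalently, the acceptance probability of the postselection in \cref{def: pcc} is $\exp(-\Theta(2^n))$).

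None of the steps is genuinely hard; the only point requiring care is the complex versus real structure of the sphere --- it sits in $S^{2d-1}$, and it is precisely the complex-Gaussian normalization that produces the exponent $d-1$, equivalently $t\sim\mathrm{Beta}(1,d-1)$ rather than $\mathrm{Beta}(1/2,(d-1)/2)$; a direct computation of the spherical cap volume also works but is more cumbersome. Finally I would remark that the same conclusion persists when $\ket{\psi}$ is drawn from a (projective) $k$-design instead of Haar: since $\E\bigl[|\!\braket{v}{\psi}\!|^{2k}\bigr]=\binom{d+k-1}{k}^{-1}$ for Haar $\psi$, a degree-$k$ Markov inequality yields $\Pr[\,|\!\braket{v}{\psi}\!|^2\ge\delta\,]\le \delta^{-k}\binom{d+k-1}{k}^{-1}$ (up to an $O(\eta)$ correction for an $\eta$-approximate design), already exponentially small in $n$ once $k=\omega(1)$, so naive postselection remains impractical in the design setting as well.
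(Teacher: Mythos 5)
Your proof is correct and follows essentially the same route as the paper: both identify the law of $|\!\braket{v}{\psi}\!|^2$ as $\mathrm{Beta}(1,d-1)$ and integrate the tail $\int_\delta^1 (d-1)(1-x)^{d-2}\,dx=(1-\delta)^{d-1}$. The paper simply asserts the Beta law while you supply the standard Gaussian normalization / Beta--Gamma derivation of it, and your closing remark on $k$-designs reproduces the content the paper treats separately in Lemma~\ref{rem:designs}.
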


\begin{proof}
For complex Haar measure, $X:=|\!\braket{v}{\psi}\!|^2\sim \mathrm{Beta}(1,d-1)$ with density
$(d-1)(1-x)^{d-2}$ on $x\in[0,1]$, hence
$\Pr[X\ge \delta]=\int_{\delta}^1 (d-1)(1-x)^{d-2}\,dx=(1-\delta)^{d-1}$.
\end{proof}

\begin{lemma}[Peaked $k$-design random circuits are rare]
\label{rem:designs}
Exact complex projective $k$-designs match Haar moments up to degree $k$. If $C'$ is drawn from an
exact unitary $k$-design, then its first column $\ket{c'}=C'\ket{0^n}$ is distributed as an
exact state $k$-design; hence for any fixed unit vector $\ket{v}$ and any integer
$t\le k$,
\[
\mathbb{E}\big[|\!\braket{v}{c'}\!|^{2t}\big]
\;=\;
\frac{t!\,(d-1)!}{(d-1+t)!}
\;\le\;
\frac{t!}{d^{\,t}}\,.
\]
By Markov’s inequality, for any threshold $\tau\in(0,1)$,
\[
\Pr\!\big[\,|\!\braket{v}{c'}\!|^2\ge \tau\,\big]
\;\le\;
\frac{\mathbb{E}\big[|\!\braket{v}{c'}\!|^{2t}\big]}{\tau^{\,t}}
\;\le\;
\frac{t!}{(\tau d)^{\,t}}\,.
\]
Setting $\tau=\delta$ and choosing $t=k$ yields the explicit bound
\[
\Pr\!\big[\,|\!\braket{v}{c'}\!|^2\ge \delta\,\big]
\;\le\;
\frac{k!}{\big(\delta\,d\big)^{k}}
\;\le\;
\Big(\frac{c\,k}{\delta\,d}\Big)^{\!k}
\]
for some universal constant $c>0$. Thus, even a $k$-design guarantees a decay like
$d^{-k}=2^{-nk}$ up to poly$(k)$ factors, which is still exponentially small in $n$ for any fixed $k\ge1$.

\end{lemma}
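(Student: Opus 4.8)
The plan is to reduce the statement to a single Haar moment computation followed by a Markov bound, the only real input being that the first column of a unitary $k$-design behaves like a Haar-random unit vector up to degree $k$. So first I would establish this reduction: pushing the defining measure of $C'$ forward under $U\mapsto U\ket{0^n}$, any balanced polynomial of degree $t\le k$ in the amplitudes $\braket{j}{c'}$ and their conjugates is a balanced polynomial of total degree $2t\le 2k$ in the entries of $C'$ and $\overline{C'}$, so its expectation equals the Haar value; and under Haar, $U\ket{0^n}$ is uniform on the unit sphere by left-invariance. Hence $\ket{c'}$ is an exact complex projective $k$-design (an $\eta$-approximate design picks up the usual additive $O(\eta)$ slack in what follows).

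Second I would compute the Haar moment. By unitary invariance of the uniform sphere measure take $\ket{v}=\ket{0^n}$, so $Y:=|\!\braket{v}{c'}\!|^{2}$ has the $\mathrm{Beta}(1,d-1)$ law with density $(d-1)(1-y)^{d-2}$ --- exactly the fact recorded in Lemma~\ref{lem:haar-cap}. For $t\le k$,
\[
\E\!\big[Y^{t}\big]=\int_0^1 y^{t}(d-1)(1-y)^{d-2}\,dy=(d-1)\,B(t+1,d-1)=\frac{t!\,(d-1)!}{(d-1+t)!},
\]
and since $(d-1)!/(d-1+t)!=1/\big(d(d+1)\cdots(d+t-1)\big)$ is a product of $t$ factors each at least $d$, this is at most $t!/d^{\,t}$.

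Third, I would apply Markov's inequality to the non-negative variable $Y^{t}$: $\Pr[Y\ge\tau]=\Pr[Y^{t}\ge\tau^{t}]\le \E[Y^{t}]/\tau^{t}\le t!/(\tau d)^{t}$. Specializing $\tau=\delta$ and taking $t=k$ --- the largest power for which the design guarantees the exact Haar moment --- gives $\Pr[\,|\!\braket{v}{c'}\!|^{2}\ge\delta\,]\le k!/(\delta d)^{k}$, and bounding $k!\le (ck)^{k}$ by Stirling (indeed $c=1$ already works via $k!\le k^{k}$) yields $(ck/\delta d)^{k}$. With $d=2^{n}$ this is $(ck/\delta)^{k}\,2^{-nk}$, which is exponentially small in $n$ for any fixed $k\ge1$ once $ck<\delta d$.

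Since the computation is elementary I expect no serious obstacle; the two points that require care are (i) the first step --- that the first column of a unitary $k$-design is genuinely an exact state $k$-design, not merely heuristically ``random'' --- and (ii) that the moment identity is exact only up to degree $k$, so $t=k$ is the optimal and maximal legitimate choice in the Markov step; pushing $t>k$ would break the reduction to the Haar value.
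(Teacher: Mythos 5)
Your proof is correct and follows essentially the same route as the paper's inline argument: reduce to the Haar first-column moment via the pushforward from the unitary $k$-design, compute $\E[Y^t]$ from the $\mathrm{Beta}(1,d-1)$ law, apply Markov with $t=k$, and Stirling. The only detail worth flagging is that you correctly use the convention that a unitary $k$-design matches monomials of degree $\le k$ separately in $U$ and in $\overline{U}$ — this is what makes $t=k$ legitimate — whereas the paper's phrasing ``total degree $\le k$'' is slightly ambiguous on this point; your reading is the standard one and the one the lemma needs.
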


From the above lemma, we see that generating RPCs by fix $C$ and draw $C'$ independently from a $k$-design by postselection requires exponential trials to obtain a constant chance of success—exponential in $n$, hence is not a scalable generator of peaked circuits.

\subsection{Random peaked circuits from variational search}

In practice, we may leverage a variational circuit and variationally search for a $C'$ so that its first column approximates a target $\ket{c}$ up to fidelity $\delta$. 
Under reasonable assumptions, we next show that random seeding the optimizer returns a random representative among many degenerate realizations of $\ket{c}$, effectively obfuscating the underlying circuit
decomposition.
\begin{definition}[Degenerate realizations and equivalence]
\label{def:equiv}
Write $C'\sim \tilde C'$ if $C'$ and $\tilde C'$ induce the same peaked instance $P=C'^{\dagger}C$
up to global phase (equivalently, they have the same first column).  The equivalence class
$\mathcal{O}(\ket{c'})=\{\,\tilde C'\in U(d):\ \tilde C'\ket{0^n}=e^{i\phi}\ket{c'}\,\}$ is a
high-dimensional submanifold of $U(d)$; distinct classes are typically far apart in the
natural Riemannian metric.
\end{definition}

\begin{theorem}[Obfuscation with fixed $C$ via randomized initialization and local optimization]
\label{thm:obfuscation}
Fix $C\in\U(d)$ and let $\ket{c}:=C\ket{0^n}$. Consider losses that depend on $C'$ only through this target column, e.g.
\[
\mathcal{L}(C') \;=\; 1-\big|\!\braket{c}{C'\!0^n}\big|^2
\;=\; 1-\big|\!\bra{0^n}C'^{\dagger}C\ket{0^n}\big|^2,
\]
so minimizing $\mathcal{L}$ is equivalent to maximizing the “peak” of $P:=C'^{\dagger}C$ at $\ket{0^n}$.
Let $\mathcal{A}$ be a local optimizer (e.g., gradient descent, quasi-Newton) that converges to a local minimizer in the basin containing the initialization $\theta_0$.

Define the fiber over the fixed target column
\[
\mathcal{F}_C \;:=\; \Big\{\,C'\in\U(d):\; C'\ket{0^n}=e^{i\phi}\ket{c}\,\Big\}.
\]
Assume:
\begin{enumerate}\itemsep2pt
\item[(i)] (\emph{Fiber minima \& basin separation})
Each connected component of $\mathcal{F}_C$ contains at least one local minimizer of $\mathcal{L}$ achieving $\mathcal{L}=0$, and the basins of attraction of \emph{distinct} such minimizers inside $\mathcal{F}_C$ are disjoint and separated in parameter space by a distance $\Delta>0$.
\item[(ii)] (\emph{Locally uniform seeding})
The initialization distribution $\mu$ over parameters is approximately uniform at scale $\Delta$ (its density is nearly constant on any ball of radius $\Delta$).
\end{enumerate}
Then the output $\mathcal{A}(\theta_0)$ is supported on $\mathcal{F}_C$ and is distributed according to the basin-volume weights under $\mu$. Consequently, the induced peaked circuit
\[
P \;=\; C'^{\dagger}C
\]
has the same first column $\ket{0^n}$ for every run, while its action on the orthogonal $(d-1)$-dimensional subspace varies across runs. In particular, any observable or loss that depends only on the first column of $P$ (e.g., $p_{\max}(P)$) is invariant across runs, whereas the $(d-1)\!\times\!(d-1)$ block of $P$ is randomized by the choice of $C'$.
\end{theorem}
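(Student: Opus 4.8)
The idea is to unwind the statement in three moves: identify the zero–loss set of $\mathcal{L}$ with the fiber $\mathcal{F}_C$; invoke assumptions (i)--(ii) to pin down where the optimizer lands and with what distribution; and then translate membership in $\mathcal{F}_C$ into the claimed block structure of $P=C'^{\dagger}C$.

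First I would observe that $\mathcal{L}(C')=1-\big|\braket{c}{C'0^n}\big|^2\ge 0$, with equality precisely when $\big|\braket{c}{C'0^n}\big|=1$, i.e.\ when $C'\ket{0^n}=e^{i\phi}\ket{c}$ for some phase $\phi$ --- equivalently $C'\in\mathcal{F}_C$. Hence $\mathcal{F}_C$ is exactly the global--minimum locus, and since $\mathcal{L}$ sees $C'$ only through the vector $C'\ket{0^n}$, it is identically zero on all of $\mathcal{F}_C$. Pulling this back through the variational map $\theta\mapsto C'(\theta)$ gives a zero--loss parameter set $\Theta_0$ whose connected components $\{\Theta_0^{(j)}\}$ each, by assumption (i), contain a local minimizer with $\mathcal{L}=0$ and carry a basin of attraction $B_j$ for $\mathcal{A}$, the $B_j$ being pairwise disjoint and $\Delta$--separated.

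Next, since $\mu$ is absolutely continuous at scale $\Delta$ (so in particular non-atomic), the seed $\theta_0$ almost surely lands in the interior of a unique $B_j$; the local optimizer then stays in $B_j$ and converges to its zero--loss minimizer, so $C'(\mathcal{A}(\theta_0))\in\mathcal{F}_C$ with probability one, and $\Pr[\mathcal{A}(\theta_0)\in B_j]=\mu(B_j)$. Approximate uniformity at scale $\Delta$ (assumption (ii)) then gives $\mu(B_j)=(1+o(1))\,\mathrm{vol}(B_j)/\sum_i\mathrm{vol}(B_i)$, i.e.\ the output is distributed by basin volume, so no realization is chosen deterministically and every basin of non-negligible volume is hit with non-negligible probability. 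This step is where the real content sits: the genuinely nontrivial input is that the zero--loss basins exhaust parameter space up to a null set --- equivalently, that the landscape has no spurious attracting minima --- which is expected for sufficiently over-parametrized ansätze but is deliberately packaged into assumption (i) rather than derived, and is what I would flag as the main obstacle to a fully self-contained proof.

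Finally I would read off the consequences for $P$. For any $C'\in\mathcal{F}_C$ one has $\braket{0^n}{P0^n}=\overline{\braket{c}{C'0^n}}=e^{-i\phi}$, of unit modulus, so $P\ket{0^n}=e^{-i\phi}\ket{0^n}$ and, in the decomposition $\C^d=\mathrm{span}\{\ket{0^n}\}\oplus\ket{0^n}^{\perp}$,
\[
P=\begin{pmatrix} e^{-i\phi} & 0 \\ 0 & V \end{pmatrix},\qquad V\in\U(d-1).
\]
Thus $p_{\max}(P)=1$ on every run --- any functional of the first column of $P$ is run-invariant --- while a reference--unitary computation (writing $C'=C'_0\,(e^{i\alpha}\oplus W')$ with $C'_0\ket{0^n}=\ket{c}$ and $W'\in\U(d-1)$) shows the block $V$ equals $W'^{\dagger}V_0$ for a $C$-dependent $V_0$, so $V$ ranges over all of $\U(d-1)$ as $C'$ sweeps $\mathcal{F}_C$ and is therefore randomized across runs according to the basin weights of the previous step. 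This is exactly the asserted obfuscation, and the argument is complete modulo the landscape hypothesis noted above.
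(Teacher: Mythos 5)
Your proposal matches the paper's proof sketch step for step: identify $\mathcal{F}_C$ as the zero-loss locus, use (i)--(ii) to pass to a basin-volume pushforward distribution, and read off the block structure $P=\diag(e^{-i\phi},V)$ so that first-column observables are run-invariant while the orthogonal block varies. Your observation that the real content is hidden in the (implicit) assumption that zero-loss basins exhaust parameter space up to a null set is apt --- the paper's proof invokes this partition tacitly ("the parameter space is partitioned into basins") even though assumption (i) as stated only asserts disjointness and separation, not coverage.
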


\begin{proof}[Proof sketch]
Because $\mathcal{L}$ depends only on $\braket{c}{C'\!0^n}$, every $C'\in\mathcal{F}_C$ achieves the global minimum $\mathcal{L}=0$. By (i), the parameter space (up to null boundaries) is partitioned into basins $\{B_m\}$ of local minimizers $\{m\}\subset\mathcal{F}_C$. A local method maps each seed $\theta_0$ to the minimizer $m$ whose basin contains it, so the output distribution is the pushforward of $\mu$, assigning weight $\mu(B_m)$ to $m$. For each realized $m$, the resulting peaked circuit is $P=m^{\dagger}C$, which fixes the first column to $\ket{0^n}$ and leaves the orthogonal block determined by $m$; variability of $m$ across runs therefore randomizes that block while leaving first-column observables unchanged.
\end{proof}

Although it is computationally extensive to verify the numerically optimized circuits $C'^\dagger C$ indeed form designs, we provide strong evidence for Thm.~\ref{thm:obfuscation} by showing these circuits satisfy
Obs.~\ref{obs:hs} and shares the same property with those generated by postselection. Specifically, the (unnormalized) Hilbert–Schmidt overlap $\mathbb{E} [|\langle C,C'\rangle_{\mathrm{HS}}|^2]$ is close to $2/d$. To mitigate barren plateau, we consider the following optimization procedure described in Algo.~\cref{algo:1}

\begin{algorithm}[t]
\caption{Numerical search for random peaked circuits}
\begin{algorithmic}[1]
\Require Fix a circuit architecture for both $C$, $C'$.
\State \textbf{Target generation.} Draw each gate of $C$ from the Haar random distribution. Define
\[
P({\Vec{\theta}}):=C'({\Vec{\theta}})^{\dagger}C,\qquad
a({\Vec{\theta}}):=\bra{0^n}P({\Vec{\theta}})\ket{0^n},\qquad
p_0({\Vec{\theta}}):=|a({\Vec{\theta}})|^2.
\]
\State \textbf{Multi-start.} Draw $S$ seeds ${\Vec{\theta}}^{(0,s)}\sim\mathcal{D}_{\rm init}$.
\For{each seed $s$}
  \For{$t=0,\dots,T-1$}
    \State \textit{Evaluate objective.} Build a tensor-network for $p_0({\Vec{\theta}}^{(t,s)})$ and contract.
    \State \textit{Evaluate gradient.} Compute $\nabla_{\Vec{\theta}} p_0({\Vec{\theta}}^{(t,s)})$ by differentiating through the contraction, using
    \[
    \nabla_{\Vec{\theta}} p_0=2\,\mathrm{Re}\!\big(\overline{a}\,\nabla_{\Vec{\theta}} a\big).
    \]
    \State \textit{Adam update.} Minimize $\mathcal{L}({\Vec{\theta}}):=-p_0({\Vec{\theta}})$ with Adam:
    \[
    {\Vec{\theta}}^{(t+1,s)}\leftarrow \mathrm{Adam}\!\big({\Vec{\theta}}^{(t,s)},\,\nabla_{\Vec{\theta}} \mathcal{L}({\Vec{\theta}}^{(t,s)})\big).
    \]
  \EndFor
\EndFor
\State \textbf{Select.} Return the best seed
${\Vec{\theta}}_\star=\arg\max_{s,t} p_0({\Vec{\theta}}^{(t,s)})$,
and set $C'_\star:=C'({\Vec{\theta}}_\star)$, reporting $\delta_\star:=1-p_0({\Vec{\theta}}_\star)$.
\end{algorithmic}\label{algo:1}
\end{algorithm}

We tested this across system sizes $n\in\{8,9,10,11,12\}$, taking both $C$ and the ansatz $C'({\Vec{\theta}})$
to be brickwall circuits of the same depth, $n$. For each $n$ we ran $100$ independent trials (with different target
$C$ and random initialization for $C'$), optimized $C'$ with Adam to reach the prescribed peakedness, and then
computed $\mathbb{E} [|\langle C,C'\rangle_{\mathrm{HS}}|^2]$ by exact tensor–network contraction. 
Figure~\ref{fig:2} shows that the average HS overlap remains close to $2/d$ for all $n$, with no
discernible dependence on system size, in agreement with Obs.~\ref{obs:hs}.

\subsection{Constructing Large Peaked Circuits from circuit stitching}
On a laptop, previous work~\cite{aaronson2024verifiable} successfully identified peaked circuits with sizes up to 36 qubits. With more advanced computational resources like GPUs or TPUs, it is feasible to scale to even larger circuits. Of course, if one could perform classical optimization, then they could spoof by contracting the circuits. However, this issue can be mitigated by ``stitching" peaked circuits together in both horizontal and vertical directions. The observation is, that by combining two random peaked circuits together, the resultant circuit is still peaked. Fig.~\ref{fig:2} left gives an example of such a construction. On the one hand, the verifier would know where the peak is and have a good estimation of the peakedness. On the other hand, since each block is drawn at random and each block is incompressible, it's not hard to show that the whole peaked circuit is non-compressible and thus hard to simulate for a classical challenger.

\begin{lemma}[Exponential decay of peakedness under random block mixing]\label{lem:exp-decay}
Let $d=2^n$ and fix a unit vector $|\psi_0\rangle\in\C^d$. For each layer $j=1,\dots,L$, write
\[
U_j \ =\
\begin{pmatrix}
\alpha_j & a_j^\dagger\\[2pt]
b_j & X_j
\end{pmatrix}
\quad\text{in the decomposition } \C^d=\mathrm{span}\{|\psi_0\rangle\}\oplus |\psi_0\rangle^\perp,
\]
and define the (per-layer) leakage
\[
\varepsilon_j\ :=\ \|b_j\|_2^2\ =\ \|a_j\|_2^2\ =\ 1-|\alpha_j|^2.
\]
Assume that $|\alpha_j|^2$ is very close to 1 and $X_j\in\U(d-1)$ are independent unitary $2$-designs, independent of $(\alpha_j,a_j,b_j)$ and of $\{U_1,\ldots,U_{j-1}\}$. Let
\[
q_j\ :=\ \mathbb{E}\,\bigl[\,|\langle\psi_0|\,U_jU_{j-1}\cdots U_1\,|\psi_0\rangle|^2\,\bigr],
\qquad q_0=1.
\]
Then $q_j$ obeys the one-step recurrence
\begin{equation}\label{eq:one-step-recurrence}
q_j\ =\Bigl(1-\tfrac{d}{d-1}\,\varepsilon_j\Bigr)\,q_{j-1}\ +\ \tfrac{\varepsilon_j}{d-1},
\end{equation}
and hence
\begin{equation}\label{eq:closed-form}
q_L\;-\;\tfrac{1}{d}\ =\
\Biggl(\ \prod_{j=1}^L \Bigl(1-\tfrac{d}{d-1}\,\varepsilon_j\Bigr)\ \Biggr)\ \Bigl(1-\tfrac{1}{d}\Bigr).
\end{equation}
In particular, if $\varepsilon_j\equiv \varepsilon\in(0,1)$ is constant across layers, then
\begin{equation}\label{eq:exp-decay-constant-eps}
q_L\ =\ \tfrac{1}{d}\ +\ \Bigl(1-\tfrac{d}{d-1}\,\varepsilon\Bigr)^{\!L}\,\Bigl(1-\tfrac{1}{d}\Bigr),
\end{equation}
i.e.\ the peakedness decays exponentially in $L$ towards the uniform baseline $1/d$ (rate $\approx e^{-\varepsilon L}$ for large $d$).
\end{lemma}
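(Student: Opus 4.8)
The plan is to prove the one-step recurrence \eqref{eq:one-step-recurrence} first, and then iterate it to obtain the closed form \eqref{eq:closed-form}, specializing to constant $\varepsilon_j\equiv\varepsilon$ for \eqref{eq:exp-decay-constant-eps}. Let me set up notation: write $|\Psi_{j-1}\rangle := U_{j-1}\cdots U_1|\psi_0\rangle$, and decompose it along $|\psi_0\rangle$ as $|\Psi_{j-1}\rangle = \gamma_{j-1}|\psi_0\rangle + |\xi_{j-1}\rangle$ with $|\xi_{j-1}\rangle \in |\psi_0\rangle^\perp$, so that $q_{j-1} = \mathbb{E}|\gamma_{j-1}|^2$ and $\mathbb{E}\|\xi_{j-1}\|_2^2 = 1 - q_{j-1}$ (the state is a unit vector). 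Applying $U_j$ in block form, the overlap we want is
\[
\langle\psi_0|U_j|\Psi_{j-1}\rangle = \alpha_j\,\gamma_{j-1} + a_j^\dagger|\xi_{j-1}\rangle .
\]
First I would take $\mathbb{E}|\cdot|^2$ of this, conditioning on everything before layer $j$ and then averaging, using that $X_j$ (hence $a_j$, $b_j$, $\alpha_j$) is independent of $|\Psi_{j-1}\rangle$.

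The cross term $\mathbb{E}[\alpha_j\bar\gamma_{j-1}\langle\xi_{j-1}|a_j\rangle + \text{c.c.}]$ should vanish: conditioned on layer-$j$ data and on $\gamma_{j-1}$, the direction of $|\xi_{j-1}\rangle$ within $|\psi_0\rangle^\perp$ averages to zero because $X_j$ being a $1$-design (a fortiori a $2$-design) makes $a_j$ mean-zero in $|\psi_0\rangle^\perp$; more carefully, since $a_j^\dagger = (\text{first row of } X_j \text{ in the perp block, appropriately placed})$ — actually the cleanest route is to note $a_j$ and $\xi_{j-1}$ are independent and $\mathbb{E}[a_j] = 0$ by the design property, killing the cross term. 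That leaves
\[
q_j = \mathbb{E}|\alpha_j|^2\,\mathbb{E}|\gamma_{j-1}|^2 + \mathbb{E}\big[\,|a_j^\dagger|\xi_{j-1}\rangle|^2\,\big]
= (1-\varepsilon_j)\,q_{j-1} + \mathbb{E}\big[\,|a_j^\dagger|\xi_{j-1}\rangle|^2\,\big].
\]
For the second term, condition on $|\xi_{j-1}\rangle$ and use the $2$-design property of $X_j$ on $\U(d-1)$: $a_j$ is (up to the leakage normalization $\|a_j\|^2=\varepsilon_j$) a uniformly random direction in $\C^{d-1}$, so $\mathbb{E}\big[|a_j^\dagger v|^2\big] = \frac{\varepsilon_j}{d-1}\|v\|_2^2$ for any fixed $v\in\C^{d-1}$ — this is exactly where the $2$-design assumption is used and where the factor $\tfrac{1}{d-1}$ comes from. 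Taking $v = |\xi_{j-1}\rangle$ and then averaging gives $\frac{\varepsilon_j}{d-1}\,\mathbb{E}\|\xi_{j-1}\|_2^2 = \frac{\varepsilon_j}{d-1}(1-q_{j-1})$. Combining,
\[
q_j = (1-\varepsilon_j)q_{j-1} + \tfrac{\varepsilon_j}{d-1}(1-q_{j-1}) = \Big(1 - \varepsilon_j - \tfrac{\varepsilon_j}{d-1}\Big)q_{j-1} + \tfrac{\varepsilon_j}{d-1} = \Big(1 - \tfrac{d}{d-1}\varepsilon_j\Big)q_{j-1} + \tfrac{\varepsilon_j}{d-1},
\]
which is \eqref{eq:one-step-recurrence}.

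For the closed form, I would observe that \eqref{eq:one-step-recurrence} has fixed point $q_\star = 1/d$ (check: $(1-\tfrac{d}{d-1}\varepsilon)\tfrac1d + \tfrac{\varepsilon}{d-1} = \tfrac1d - \tfrac{\varepsilon}{d-1} + \tfrac{\varepsilon}{d-1} = \tfrac1d$), so setting $r_j := q_j - 1/d$ gives the homogeneous recurrence $r_j = (1-\tfrac{d}{d-1}\varepsilon_j)\,r_{j-1}$, hence $r_L = \big(\prod_{j=1}^L (1-\tfrac{d}{d-1}\varepsilon_j)\big)r_0$ with $r_0 = q_0 - 1/d = 1 - 1/d$, which is \eqref{eq:closed-form}; the constant-$\varepsilon$ case \eqref{eq:exp-decay-constant-eps} is immediate, and for large $d$ one has $1 - \tfrac{d}{d-1}\varepsilon \approx 1-\varepsilon \approx e^{-\varepsilon}$ giving the stated decay rate. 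The main obstacle I anticipate is the rigorous justification of the two averaging steps — specifically, making precise that $X_j$ being a unitary $2$-design on $\U(d-1)$ implies (a) $\mathbb{E}[a_j|\xi_{j-1}\rangle\text{-cross-term}] = 0$ and (b) $\mathbb{E}|a_j^\dagger v|^2 = \tfrac{\varepsilon_j}{d-1}\|v\|^2$. The subtlety is that $a_j, b_j, \alpha_j$ are \emph{not} themselves freely Haar-distributed — they are constrained by $U_j$ being unitary and by the hypothesis that $|\alpha_j|^2$ is fixed very close to $1$ — so I need the assumption that $X_j$ is an independent $2$-design conditionally on $(\alpha_j, a_j, b_j)$, and I should be careful that "$a_j$ is a uniform direction of norm $\sqrt{\varepsilon_j}$" follows from the column of $U_j$ through $|\psi_0\rangle$ being (projectively) Haar-like in its $|\psi_0\rangle^\perp$ component; alternatively one can phrase the whole computation using only the second-moment (i.e.\ $2$-design) identity $\mathbb{E}[X_j\, M\, X_j^\dagger] = \tfrac{\Tr M}{d-1} I$ applied to the relevant rank-one operator, which sidesteps having to characterize the joint law of $(\alpha_j, a_j)$ and is the route I would actually write up.
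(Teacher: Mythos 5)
Your decomposition setup, the telescoping of $|\Psi_{j-1}\rangle$, the derivation of the recurrence from the fixed point $1/d$, and the closed-form iteration are all correct and match the paper's proof. There is, however, a genuine gap precisely at the step you flag as a concern, and your proposed fix does not close it.

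Working directly from the stated block form of $U_j$, the block multiplication yields
\[
\langle\psi_0|U_j|\Psi_{j-1}\rangle \ =\ \alpha_j\gamma_{j-1}\ +\ a_j^\dagger\,\xi_{j-1},
\]
exactly as you wrote. To get the clean recurrence you then need
$\mathbb{E}\bigl[|a_j^\dagger\xi_{j-1}|^2\bigr]=\tfrac{\varepsilon_j}{d-1}\,\mathbb{E}\|\xi_{j-1}\|_2^2$.
You justify this by asserting that ``$a_j$ is a uniformly random direction of norm $\sqrt{\varepsilon_j}$,'' attributing this to the $2$-design property of $X_j$. But the hypothesis places the $2$-design assumption on $X_j$ (the bottom-right $(d-1)\times(d-1)$ block) and explicitly makes $X_j$ \emph{independent} of $(\alpha_j,a_j,b_j)$; it says nothing about the distribution of $a_j$ itself. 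So ``$a_j$ uniform'' is an unstated additional assumption. Your proposed alternative --- ``phrase everything via $\mathbb{E}[X_j M X_j^\dagger]=\tfrac{\Tr M}{d-1}I$ applied to the relevant rank-one operator'' --- cannot rescue this, because in the expression $a_j^\dagger\xi_{j-1}$ there is no $X_j$ to average over: $\xi_{j-1}$ depends on $U_1,\dots,U_{j-1}$ but not on $U_j$, and $a_j$ is, as just noted, not controlled by the design assumption. The paper's proof sidesteps this by writing the overlap as $\alpha_j\gamma_{j-1}+a_j^\dagger X_j|w_{j-1}\rangle$, with an explicit $X_j$ inside. This only makes sense if $U_j$ is parametrized in the product form $U_j = W_j\cdot\diag(1,X_j)$ (so the off-diagonal block of $U_j$ is $a_j^\dagger X_j$, not $a_j^\dagger$), i.e.\ a $2$-design ``pre-scrambler'' on the perp subspace followed by a fixed coupling; under that parametrization $\mathbb{E}_{X_j}[a_j^\dagger X_j\xi_{j-1}]=0$ and $\mathbb{E}_{X_j}[|a_j^\dagger X_j\xi_{j-1}|^2]=\|a_j\|_2^2\|\xi_{j-1}\|_2^2/(d-1)$ both follow immediately from the $2$-design on $X_j$, and the recurrence drops out. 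If you instead insist on the literal block form, you must either (a) add a hypothesis that $a_j$ has a $2$-design-like marginal on the sphere of radius $\sqrt{\varepsilon_j}$ in $|\psi_0\rangle^\perp$, or (b) push the randomization back one step and use that $\xi_{j-1}=b_{j-1}\gamma_{j-2}+X_{j-1}\xi_{j-2}$ is scrambled by $X_{j-1}$ --- but then the extra $|a_j^\dagger b_{j-1}|^2|\gamma_{j-2}|^2$ cross term appears and the one-step recurrence no longer closes cleanly. In short: the algebra of your closed-form iteration is right, but the crucial second-moment identity is not a consequence of the hypotheses as you invoke them; you need the $X_j$ to sit between $a_j^\dagger$ and the perp component, as the paper's proof (implicitly) arranges.
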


\begin{proof}
Let $|\psi_{j-1}\rangle:=U_{j-1}\cdots U_1|\psi_0\rangle$ and decompose it as
$|\psi_{j-1}\rangle=\gamma_{j-1}\,|\psi_0\rangle + |w_{j-1}\rangle$ with $|w_{j-1}\rangle\in|\psi_0\rangle^\perp$ and $|\gamma_{j-1}|^2=q_{j-1}$. Then
\[
\langle\psi_0|U_j|\psi_{j-1}\rangle
=\alpha_j\,\gamma_{j-1}\ +\ a_j^\dagger X_j\,|w_{j-1}\rangle.
\]
Conditioning on $U_1,\ldots,U_{j-1}$ and on $(\alpha_j,a_j)$, the unitary $2$-design property of $X_j$ implies
$\mathbb{E}_{X_j}[a_j^\dagger X_j|w_{j-1}\rangle]=0$ and
$\mathbb{E}_{X_j}\!\bigl[\,|a_j^\dagger X_j|w_{j-1}\rangle|^2\,\bigr]=\|a_j\|_2^2\,\|w_{j-1}\|_2^2/(d-1)
=\varepsilon_j(1-q_{j-1})/(d-1)$.
Taking expectations gives
\[
q_j=\mathbb{E}\bigl[|\alpha_j|^2\bigr]\,q_{j-1}+\frac{\mathbb{E}[\varepsilon_j]}{d-1}\,(1-q_{j-1}).
\]
Since $|\alpha_j|^2=1-\varepsilon_j$ deterministically by unitarity of $U_j$, this reduces to
\[
q_j=(1-\varepsilon_j)q_{j-1}+\frac{\varepsilon_j}{d-1}(1-q_{j-1})
=\Bigl(1-\tfrac{d}{d-1}\varepsilon_j\Bigr)q_{j-1}+\frac{\varepsilon_j}{d-1},
\]
which is \eqref{eq:one-step-recurrence}. Solving the affine recursion yields \eqref{eq:closed-form}; the specialization \eqref{eq:exp-decay-constant-eps} follows by taking $\varepsilon_j\equiv\varepsilon$.
\end{proof}

In numerical optimization, as long as we choose $\varepsilon$ sufficiently small: $\varepsilon\sim1/L$, the expectation of the composed circuit will remain a $O(1)$ number.

 Can someone classically spoof this by contracting each small peaked circuit $U_1$, $U_2$..., and spot where the peak is? The trick is, that the classical challenger wouldn't know where the segments are, and they cannot infer the peak from randomly selecting a subregion of the circuit. Asymptotically, breaking a large circuit into $k=\Theta(\log n)$ contiguous peaked blocks already yields a super-polynomial number of stitching patterns: if the circuit has $m$ gates (or layers) in a fixed topological order, the number of ways to place $k-1$ cuts is $\binom{m-1}{k-1}$; for $m=\Theta(n)$ this is $2^{\,\Omega((\log n)^2)}$, and in $2$D with $m=\Theta(n^2)$ it is even larger. A natural attempt to detect the true pattern is to look for “high concentration” at block boundaries via single– or few–qubit marginals, but this is unreliable:  local circuit rewrite rules can shift or disperse any apparent concentration across a few neighboring gates while keeping the overall circuit equivalent. Under the rewrite, the output distribution will not go through high concentration at the boundary. In practice, local obfuscations and rewrite protocols also can destroy an explicit circuit pattern, which increases the combinatorial ambiguity. 

\subsection{Robustness to sparse classical noise}\label{sec:robustness-formal}

One merit of RPCs is their robustness to many realistic noise models. Local errors tend to shrink structure and gently fill the distribution toward uniform, but the heavy outcome remains detectable as long as its excess over uniform stays above statistical error. Here we model residual measurement errors as classical bit flips and analyze how one could recover peakedness (as well as the peaked string) under two ubiquitous classical noise models:

\begin{itemize}
\item[(1)] \textbf{$t$-sparse bit-flip (pre-measurement \& readout) noise}:
in each shot, an adversary (or a stochastic process) flips at most $t$ output bits
right before computational-basis measurement.\vspace{2pt}

\item[(2)] \textbf{I.I.D.\ readout bit-flips}:
each measured bit is flipped independently with probability $r\in[0,1/2)$
(a binary-symmetric channel, BSC$(r)$).
\end{itemize}

\subsubsection{Estimating $p_{\max}$ with Hamming-ball aggregation.}
We first assume $x_\star$ is known to the verifier and ask whether we could give a good estimation of $p_{\max}$. The intuition is that small bit flips will result a small Hamming ball that is centered around $x_\star$.
For $t\in\{0,1,\ldots,n\}$ write the Hamming ball
$B_t(x_\star):=\{x\in\{0,1\}^n:\ \mathrm{dist}(x,x_\star)\le t\}$ and its size
$|B_t|:=|B_t(x_\star)|=\sum_{h=0}^t\binom{n}{h}$. 
aggregate the observed mass in a small Hamming ball:

\begin{algorithm}[H]
\caption{Hamming-Ball Aggregation (HBA$(t)$)}
\begin{algorithmic}[1]
\Require $N$ samples $x_1,\ldots,x_N \in \{0,1\}^n$ drawn from $P$, reference $x_\star\in\{0,1\}^n$, radius $t\in\{0,\ldots,n\}$
\Ensure $\widehat{p}^{(t)} \approx \Pr_{x\sim P}\!\big[\mathrm{dist}(x,x_\star)\le t\big]$
\State $c \gets 0$
\For{$i = 1$ to $N$}
  \If{$\mathrm{dist}(x_i,x_\star) \le t$}
    \State $c \gets c + 1$
  \EndIf
\EndFor
\State \Return $\widehat{p}^{(t)} \gets c / N$
\end{algorithmic}
\end{algorithm}
\paragraph{(A) $t$-sparse per-shot flips (adversarial).}

\begin{proposition}[Adversarial $t$-sparse bit-flips]
\label{prop:t-sparse}
With radius parameter $t$, for \emph{every} outcome distribution $p(\cdot)$ of $P$
and every realization of the flips,
\[
p_{\max} \;\le\; \mathbb{E}[\widehat{p}^{(t)}]\;=\; \sum_{x\in B_t(x_\star)} p(x)
\;\le\;p_{\max} \;+\; b\,|B_t|,
\qquad\text{where}\quad b:=\max_{x\neq x_\star} p(x).
\]
In particular, $\widehat{p}^{(t)}$ \emph{lower-bounds} $p_{\max}$, and its upward bias is at most $b\,|B_t|$.
\end{proposition}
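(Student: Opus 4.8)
The plan is to reduce the whole statement to three elementary facts about the Hamming ball $B_t(x_\star)$, using that by construction $x_\star$ is the designated peaked string, so $p(x_\star)=p_{\max}$ and $b:=\max_{x\neq x_\star}p(x)\le p_{\max}$. First I would establish the middle equality as a ``noiseless reference'': the $N$ inputs to HBA$(t)$ are i.i.d.\ from $p$, each lands in $B_t(x_\star)$ independently with probability exactly $\sum_{x\in B_t(x_\star)}p(x)$, so $\mathbb{E}[\widehat p^{(t)}]=\tfrac1N\mathbb{E}[c]=\sum_{x\in B_t(x_\star)}p(x)$ by linearity of expectation; a one-line Hoeffding bound then controls $|\widehat p^{(t)}-\mathbb{E}[\widehat p^{(t)}]|$ with $N=O(\log(1/\eta)/\gamma^2)$ shots, which I would record for completeness.

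Next I would prove the two-sided sandwich of the ball mass. The lower bound is immediate: $\mathrm{dist}(x_\star,x_\star)=0\le t$, so $p(x_\star)=p_{\max}$ is one nonnegative summand of $\sum_{x\in B_t(x_\star)}p(x)$, and dropping the others gives $p_{\max}\le\sum_{x\in B_t(x_\star)}p(x)$. For the upper bound I would isolate the peak term, $\sum_{x\in B_t(x_\star)}p(x)=p(x_\star)+\sum_{x\in B_t(x_\star)\setminus\{x_\star\}}p(x)$, and bound each of the remaining $|B_t|-1$ entries by $b$, giving $\le p_{\max}+(|B_t|-1)b\le p_{\max}+b|B_t|$.

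Then I would check that the sandwich survives ``for every realization of the flips.'' The lower bound is robust with no work: in any shot whose \emph{true} outcome is $x_\star$, flipping at most $t$ bits keeps the observed string within Hamming distance $t$ of $x_\star$, hence inside $B_t(x_\star)$ and still counted, so on corrupted data $c$ stochastically dominates $\#\{i:y_i=x_\star\}$ and $\mathbb{E}[\widehat p^{(t)}]\ge\Pr[y=x_\star]=p_{\max}$ for every adversary. For the upper bound, an \emph{oblivious} flip pattern $e$ with $|e|\le t$ merely translates the distribution, so $\mathbb{E}[\widehat p^{(t)}]=\sum_{x\in B_t(x_\star\oplus e)}p(x)$; since $\mathrm{dist}(x_\star\oplus e,x_\star)\le t$ the global maximizer $x_\star$ still lies in $B_t(x_\star\oplus e)$, and the same split-off-the-peak estimate gives $\le p_{\max}+b|B_t|$.

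The step I expect to need the most care is the fully \emph{outcome-dependent} adversary: a flip vector $e(y)$ that may depend on the true outcome $y$ can carry a string from distance up to $2t$ of $x_\star$ into $B_t(x_\star)$, so by the triangle inequality every counted corrupted sample must have originated from a true outcome in $B_{2t}(x_\star)$, and the clean bound degrades to $\mathbb{E}[\widehat p^{(t)}]\le p_{\max}+b|B_{2t}|$. This is still $p_{\max}+b\cdot n^{O(t)}$ and harmless for constant $t$, but it means the stated ``$|B_t|$'' form is really the oblivious-noise statement; I would either phrase the model as oblivious or carry the harmless $t\mapsto 2t$ inflation through. Apart from this bookkeeping the proof uses only linearity of expectation, a triangle inequality, and isolating the peak term, so no deeper machinery is needed.
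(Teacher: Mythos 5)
Your argument is correct, and it is more careful than the paper's. For the lower bound you and the paper use the same observation: a $t$-flip cannot move an outcome $x_\star$ outside $B_t(x_\star)$, so the counted fraction stochastically dominates the fraction of shots that truly hit $x_\star$. For the upper bound, though, the paper's proof asserts that ``only shots whose ideal strings already lie in $B_t(x_\star)$ can be mapped into $B_t(x_\star)$ by $\le t$ flips,'' which is false: a string at distance $2t$ can be carried into $B_t(x_\star)$ by $t$ flips (e.g.\ $x_\star=0^n$, $y=1100\cdots0$, $t=1$, flip the first bit). Your reasoning for the oblivious case is the one that actually works: a fixed flip pattern $e$ translates the decision region to $B_t(x_\star\oplus e)$, which has the \emph{same cardinality} $|B_t|$ and still contains $x_\star$, so isolating the peak term gives $p_{\max}+b(|B_t|-1)$. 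The paper's conclusion is thus saved for oblivious noise, but by a different argument than the one it writes down.

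Your flag on the outcome-dependent adversary is a genuine gap in the proposition as stated. The statement says ``for \emph{every} realization of the flips,'' but if the flip vector $e(y)$ may depend on the ideal outcome $y$, then (as you say) every counted corrupted sample must have originated in $B_{2t}(x_\star)$, and an adversary can in fact saturate this by flipping $\min(d(y,x_\star),t)$ of the differing bits for each $y\in B_{2t}(x_\star)$. The tight bound is therefore $p_{\max}+b(|B_{2t}|-1)$, which strictly exceeds $p_{\max}+b|B_t|$ whenever the background has any mass on the annulus $B_{2t}\setminus B_t$. This does not affect the paper's downstream use (Remark~\ref{rmk:b-small} only needs $b|B_{2t}|\ll 1$, which still holds for $t=O(\log n)$ and $b=O(2^{-n})$), so the appropriate fix is exactly the one you propose: either state the proposition for oblivious noise, or replace $|B_t|$ by $|B_{2t}|$ in the bias term. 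Your handling of the middle equality as a noiseless reference is also the right reading — as written it cannot hold ``for every realization of the flips,'' since nontrivial flips change the observed distribution.
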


\begin{proof}
A $t$-flip can change any bitstring by Hamming distance at most $t$. Hence, if the ideal shot equals
$x_\star$ then the observed shot still lies in $B_t(x_\star)$, contributing to $\widehat{p}^{(t)}$;
consequently, $p_{\max} \le \mathbb{E}[\widehat{p}^{(t)}]$.
Conversely, only shots whose ideal strings already lie in $B_t(x_\star)$ can be mapped into $B_t(x_\star)$
by $\le t$ flips, so $\mathbb{E}[\widehat{p}^{(t)}]=\sum_{x\in B_t(x_\star)} p(x) \le p_{\max}+b(|B_t|-1)$.
\end{proof}
\paragraph{(B) I.I.D.\ bit-flip noise.}
\begin{remark}[Bias is negligible for peaked ensembles]
\label{rmk:b-small}
For our peaked construction the non-peak weights look $k$-design-like “random”
at low order (Haar baseline $\approx 2^{-n}$),
so typically $b=O(2^{-n})$ while $p_{\max} \gg 2^{-n}$.%

Thus the bias bound $b|B_t|$ is $|B_t|/2^n\ll 1$ even for $t=O(\log n)$, because
$|B_t| \le \sum_{h\le t}\binom{n}{h} \le (en/t)^t = \exp\!\big(O(\log^2 n)\big)\ll 2^n$.
\end{remark}

\begin{theorem}[I.I.D.\ readout flips admit a logarithmic error budget]
\label{thm:bsc-logn}
Under \emph{(N2)} with rate $r$, let $W\sim\mathrm{Bin}(n,r)$ be the number of flipped bits in one shot.
Choose any $t\ge (1+\delta)\,nr$ with fixed $\delta\in(0,1]$.
Then
\[
\mathbb{E}[\widehat{p}^{(t)}]
\;\ge\;p_{\max} \cdot \Pr[W\le t]
\;\ge\;p_{\max} \cdot \Big(1-e^{-\frac{\delta^2}{2+\delta}\,nr}\Big),
\]
and the upward bias is at most $b\,|B_t|$ as in Prop.~\ref{prop:t-sparse}.
In particular, if $nr=\Theta(\log n)$ and $t=\Theta(\log n)$ then with probability $1-n^{-\Omega(1)}$
every shot from $x_\star$ falls inside $B_t(x_\star)$ (Chernoff bound), while contamination is
$O(|B_t|/2^n)=o(1)$ for peaked ensembles. Hence HBA$(t)$ preserves the peak signal
with at most a vanishing additive error.
\end{theorem}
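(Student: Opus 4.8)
The plan is to push a single measurement shot through the binary symmetric channel and then average over the ideal output distribution $p(\cdot)$ of $P$. Write $W_r(y\mid x)=r^{\mathrm{dist}(x,y)}(1-r)^{\,n-\mathrm{dist}(x,y)}$ for the $\mathrm{BSC}(r)$ transition probabilities and set $f(x):=\Pr[\mathrm{flip}(x)\in B_t(x_\star)]=\sum_{y\in B_t(x_\star)}W_r(y\mid x)$, so that by linearity $\mathbb{E}[\widehat{p}^{(t)}]=\sum_x p(x)\,f(x)$. The lower bound drops out of the single term $x=x_\star$: conditioned on the ideal shot equalling $x_\star$ (probability $p_{\max}$), the observed string lies at Hamming distance $W\sim\mathrm{Bin}(n,r)$ from $x_\star$, so $f(x_\star)=\Pr[W\le t]$ and $\mathbb{E}[\widehat{p}^{(t)}]\ge p_{\max}\Pr[W\le t]$. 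The second inequality is then the standard multiplicative upper-tail Chernoff bound for a sum of independent Bernoullis: for $t\ge(1+\delta)nr$ one has $\Pr[W>t]\le\exp\!\big(-\tfrac{\delta^2}{2+\delta}\,nr\big)$, hence $\Pr[W\le t]\ge 1-\exp\!\big(-\tfrac{\delta^2}{2+\delta}\,nr\big)$.

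For the upward bias I would exploit that the BSC transition matrix is doubly stochastic: since $W_r(y\mid x)=W_r(x\mid y)$ and $\sum_y W_r(y\mid x)=1$, we also have $\sum_x W_r(y\mid x)=1$ for every $y$. Therefore $\sum_{x\ne x_\star}f(x)=\sum_{y\in B_t(x_\star)}\big(1-W_r(y\mid x_\star)\big)\le |B_t|$, and the contamination is $\sum_{x\ne x_\star}p(x)\,f(x)\le b\,|B_t|$ with $b:=\max_{x\ne x_\star}p(x)$, matching the bound in Prop.~\ref{prop:t-sparse}. This is the one step where the i.i.d.\ model genuinely differs from the $t$-sparse case: there is no hard cap on the number of flips, so one cannot argue that ``only ideal strings already inside $B_t(x_\star)$ can be mapped there''; the doubly stochastic identity replaces that combinatorial fact. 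Combining the two sides gives $p_{\max}\Pr[W\le t]\le\mathbb{E}[\widehat{p}^{(t)}]\le p_{\max}+b\,|B_t|$.

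For the asymptotic corollary I would instantiate $nr=\Theta(\log n)$, $t=\Theta(\log n)$ with $t\ge(1+\delta)nr$: the Chernoff factor becomes $\exp(-\Theta(\log n))=n^{-\Omega(1)}$, so $\Pr[W\le t]\ge 1-n^{-\Omega(1)}$, and a union bound over the $\poly(n)$ shots keeps ``every shot originating from $x_\star$ lands in $B_t(x_\star)$'' with probability $1-n^{-\Omega(1)}$ once the hidden constant in $nr=\Theta(\log n)$ is taken large enough. For the contamination, $|B_t|=\sum_{h\le t}\binom{n}{h}\le(en/t)^t=\exp\!\big(O(\log^2 n)\big)\ll 2^n$, while Remark~\ref{rmk:b-small} gives $b=O(2^{-n})$ for the peaked ensemble, so $b\,|B_t|=O(|B_t|/2^n)=o(1)$. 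Hence $\mathbb{E}[\widehat{p}^{(t)}]\in[(1-o(1))\,p_{\max},\ p_{\max}+o(1)]$, and a Hoeffding bound on the $N$ i.i.d.\ indicators averaged by HBA$(t)$ upgrades this to an $\varepsilon$-additive estimate with $N=\poly(1/\varepsilon)$ shots, recovering $p_{\max}$ up to the vanishing terms above. The only mildly delicate point is the bias estimate in the unbounded-flip model just noted; everything else is a routine Chernoff/Hoeffding computation.
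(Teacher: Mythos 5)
Your proof is correct and, for the lower bound and the Chernoff step, follows exactly the paper's route: condition on the ideal outcome being $x_\star$, use $f(x_\star)=\Pr[W\le t]$, and apply the multiplicative Chernoff bound. Where you diverge is the upward-bias estimate, and here your version is actually tighter logically than the paper's. The paper asserts that the $b\,|B_t|$ bound follows ``by the same argument as Prop.~\ref{prop:t-sparse},'' but the argument in that proposition uses the hard cap of $t$ flips per shot to say that only ideal strings already inside $B_t(x_\star)$ can land in $B_t(x_\star)$; under $\mathrm{BSC}(r)$ the flip count is unbounded, so that combinatorial observation does not transfer verbatim. Your replacement via double stochasticity, $\sum_{x\ne x_\star} f(x)=\sum_{y\in B_t(x_\star)}\bigl(1-W_r(y\mid x_\star)\bigr)\le |B_t|$, recovers the same bound $\mathbb{E}[\widehat{p}^{(t)}]\le p_{\max}+b\,|B_t|$ without any sparsity assumption, and you are right to flag this as the one place where the i.i.d.\ model needs a genuinely different justification. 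The asymptotic instantiation and the $|B_t|/2^n=o(1)$ contamination estimate match the paper's Remark~\ref{rmk:b-small}. Net assessment: the proposal is correct, and its treatment of the bias term is a cleaner and more self-contained argument than the one the paper sketches.
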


\begin{proof}
Condition on the ideal outcome. If the ideal outcome is $x_\star$ (probability $p_{\max}$),
the observed shot lies in $B_t(x_\star)$ whenever $W\le t$. This yields the stated lower bound.
All other ideal outcomes contribute at most $b\,|B_t|$ by the same argument as
Prop.~\ref{prop:t-sparse}. The Chernoff bound $\Pr[W>(1+\delta)nr]\le \exp(-\frac{\delta^2}{2+\delta}nr)$
is standard.
\end{proof}

\subsubsection{Recovering the peaked string when $x_\star$ is unknown.}
Next we discuss how to recover the peaked string with noisy measurement outcomes when the string is unknown.
We observe i.i.d.\ measurement outcomes $X^{(1)},\dots,X^{(N)}\in\{0,1\}^n$ drawn from the (noisy)
output distribution of $C$. Being consistent with last subsection, we consider the same two noise models
(A) Adversarial
$t$-sparse flips (adversarial but bounded) (B) i.i.d.\ bit-flip noise $BSC(r)$ with $r<\tfrac12$; . We show how to recover $x_\star$ and then estimate the
peak robustly by the same Hamming-ball estimator used when $x_\star$ is known.

\begin{definition}[Hamming distance and ball]
For $x,y\in\{0,1\}^n$, let $\mathrm{dist}_H(x,y)$ be the Hamming distance, and let
$B_{t}(x):=\{y:\mathrm{dist}_H(x,y)\le t\}$. Write $|B_t|=\sum_{h=0}^{t}\binom{n}{h}$.
\end{definition}

\paragraph{(A) $t$-sparse per-shot flips (adversarial).}
Assume each shot differs from the clean outcome by at most $t$ bit flips (the error pattern may be
adversarial and vary across shots). Consider the \emph{Hamming-center} decoder:

\begin{algorithm}[H]
\caption{Hamming-Center$(t)$}
\label{alg:hamming-center}
\begin{algorithmic}[1]
\Require Samples $X^{(1)},\dots,X^{(N)}\in\{0,1\}^n$, radius $2t$.
\State For each $i$, compute the cluster size
$c_i:=\big|\{j:\mathrm{dist}_H(X^{(i)},X^{(j)})\le 2t\}\big|$.
\State Let $i^\star\in\arg\max_i c_i$ and define the core
$\mathcal{C}:=\{j:\mathrm{dist}_H(X^{(j)},X^{(i^\star)})\le 2t\}$.
\State Output $\hat s$ as the bitwise majority over $\{X^{(j)}:j\in\mathcal{C}\}$.
\end{algorithmic}
\end{algorithm}

\begin{proposition}[Recovery under $t$-sparse flips]
\label{prop:sparse-recovery}
Suppose a fraction $p_{\max}$ of the shots are $t$-flipped versions of $x_\star$, and the remaining
shots are arbitrary. Then there exist constants $c_1,c_2>0$ such that if
\[
N\;\ge\; c_1\,\frac{|B_{2t}|}{p_{\max}^{2}}\;\log\!\frac{n}{\eta},
\]
Algorithm~\ref{alg:hamming-center} returns $\hat s=x_\star$ with probability at least $1-\eta$.
\emph{Sketch.} Any two $t$-flipped copies of $x_\star$ are within distance $\le 2t$, so the true
cluster contributes $\approx p_{\max} N$ points to a single $2t$-ball; spurious points do not
concentrate in any such ball. A counting/Chernoff argument shows the densest ball is dominated by
noisy copies of $x_\star$, and the bitwise majority within that ball yields $x_\star$.
\end{proposition}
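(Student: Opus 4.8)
The plan is to separate the argument into a deterministic geometric observation, a lower-tail estimate showing the ``true'' cluster is large, an upper-tail estimate ruling out competing clusters, and a coordinatewise-majority step; the sample size in the statement falls out of balancing the latter three. Write $S_\star$ for the (random) set of shot indices $j$ with $X^{(j)}$ a $t$-flip of $x_\star$, and $N_\star:=|S_\star|\sim\mathrm{Bin}(N,p_{\max})$. Geometrically, if $y,y'$ are each obtained from $x_\star$ by flipping at most $t$ bits then $\mathrm{dist}_H(y,y')\le\mathrm{dist}_H(y,x_\star)+\mathrm{dist}_H(x_\star,y')\le 2t$; hence $S_\star\subseteq B_{2t}(X^{(i)})$ for every $i\in S_\star$, so $c_i\ge N_\star$ for such $i$ and in particular $\max_i c_i\ge N_\star$. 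A multiplicative Chernoff bound gives $N_\star\ge\tfrac12 p_{\max}N$ except with probability $e^{-\Omega(p_{\max}N)}$, so the densest radius-$2t$ ball around a sample has occupancy at least $\tfrac12 p_{\max}N$.

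Next I would show no \emph{spurious} ball competes. Invoking Remark~\ref{rmk:b-small}, for RPCs the non-peak weights are $O(2^{-n})$, so the part of the output distribution outside $x_\star$ places mass at most $b|B_{2t}|=O(|B_{2t}|/2^n)$ in any radius-$2t$ ball. For a fixed center the number of shots from outside $S_\star$ inside that ball is stochastically dominated by $\mathrm{Bin}(N,b|B_{2t}|)$, and a Chernoff bound together with a union bound over the (at most $N$) observed centers shows that, except with probability $\eta/3$, every radius-$2t$ ball around a sample contains at most $\max\{2Nb|B_{2t}|,\,O(\log(N/\eta))\}$ points from outside $S_\star$. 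In the natural regime $b|B_{2t}|\ll p_{\max}$ --- equivalently $|B_{2t}|\ll p_{\max}2^n$, which holds e.g.\ for $t=o(n/\log n)$ whenever $p_{\max}\ge 2^{-o(n)}$ --- this is $o(p_{\max}N)\ll N_\star$. Therefore the densest ball, having occupancy $\ge N_\star$, consists of $(1-o(1))N_\star$ genuine $t$-flipped copies of $x_\star$ together with $o(N_\star)$ other points: the index $i^\star$ picked by Algorithm~\ref{alg:hamming-center} lies ``inside'' the true cluster and the core $\mathcal{C}$ has this composition.

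For the majority vote, fix a coordinate $\ell$ and let $f_\ell$ be the number of $j\in\mathcal{C}\cap S_\star$ whose shot was flipped at position $\ell$. Inside $\mathcal{C}$ at least $|\mathcal{C}\cap S_\star|-f_\ell$ samples carry the correct bit $(x_\star)_\ell$ while at most $f_\ell+o(N_\star)$ carry the wrong one, so the correct bit wins as soon as $f_\ell<\tfrac12|\mathcal{C}\cap S_\star|-o(N_\star)$. A union bound over the $n$ coordinates is what costs the $\log(n/\eta)$ factor; bundling this with the tail bounds of the previous two paragraphs (a second-moment/Chernoff estimate of the per-ball occupancies, union bounded over ball centers, accounts for the $|B_{2t}|$ and the second power of $1/p_{\max}$) gives the threshold $N\gtrsim |B_{2t}|\,p_{\max}^{-2}\log(n/\eta)$ and hence $\hat s=x_\star$ with probability at least $1-\eta$.

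The delicate point --- and the main obstacle --- is the last step: a fully adversarial $t$-sparse process can flip the \emph{same} $t$ coordinates in every peak shot, driving $f_\ell$ up to $|\mathcal{C}\cap S_\star|$ there, in which case the majority returns a string at Hamming distance $t$ from $x_\star$ rather than $x_\star$ itself. To obtain exact recovery one must add a mild and, for physical readout noise, realistic hypothesis that the flip \emph{locations} are spread --- e.g.\ (essentially) uniform per shot, or more weakly $\mathbb{E}[f_\ell]\le(\tfrac12-\Omega(1))N_\star$ for every $\ell$ --- after which a Chernoff bound closes the gap; under the purely worst-case location model the honest conclusion is $\mathrm{dist}_H(\hat s,x_\star)\le t$. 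Relatedly, ``the remaining shots are arbitrary'' should be read through Remark~\ref{rmk:b-small}, since with genuinely unconstrained spurious shots an adversary could pile $(1-p_{\max})N>p_{\max}N$ identical fakes and beat the densest-ball test whenever $p_{\max}<\tfrac12$.
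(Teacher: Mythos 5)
Your argument follows the same plan as the paper's own (sketched) proof: triangle inequality to collapse all $t$-flipped copies of $x_\star$ into a single radius-$2t$ ball, a Chernoff lower bound giving the true cluster occupancy $\approx p_{\max}N$, a union bound over sampled centers to rule out spurious dense balls using the $b|B_{2t}|$ background estimate from Remark~\ref{rmk:b-small}, and a coordinatewise majority with the $\log(n/\eta)$ coming from a union bound over the $n$ coordinates. Up to the (unstated) constants, the two approaches coincide.

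The caveat you raise in your final paragraph is, however, a genuine gap in the proposition as stated, not merely a pedantic worry. Under the declared adversarial $t$-sparse model the adversary may flip the \emph{same} $t$ positions in every peak shot; then every member of the core $\mathcal{C}\cap S_\star$ agrees on the wrong bit at those $t$ coordinates, so $f_\ell=|\mathcal{C}\cap S_\star|$ there and the bitwise-majority step of Algorithm~\ref{alg:hamming-center} returns a string at Hamming distance exactly $t$ from $x_\star$. The paper's sketch simply asserts ``the bitwise majority within that ball yields $x_\star$'' without confronting this. To make the proposition literally true one must either (i) add a spread hypothesis on the flip locations — e.g.\ that they are stochastic with $\mathbb{E}[f_\ell]\le(\tfrac12-\Omega(1))|\mathcal{C}\cap S_\star|$ for every $\ell$, after which your Chernoff-plus-union-bound argument closes the gap at the advertised sample complexity — or (ii) weaken the conclusion to $\mathrm{dist}_H(\hat s,x_\star)\le t$, which still suffices for the downstream HBA$(t')$ estimator with $t'=O(t)$. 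Your further observation that ``the remaining shots are arbitrary'' cannot be taken literally (else $(1-p_{\max})N$ identical adversarial fakes out-cluster the peak whenever $p_{\max}<\tfrac12$) is likewise correct and the intended reading, via Remark~\ref{rmk:b-small}, should be made explicit in the hypothesis.
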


As mentioned above, once $\hat s$ is recovered, one could estimate the peak by $\hat p(t)$ around $\hat s$; with
$t'=O(t)$ the estimator concentrates at $p_{\max}$ with standard Hoeffding-type rates.

\paragraph{(B) I.I.D.\ bit-flip noise.}
Under $BSC(r)$ each bit is flipped independently with probability $r<\frac12$.
Let $\bar X_j:=\frac1N\sum_{i=1}^N X^{(i)}_j$ be the empirical mean of bit $j$.
Define the \emph{bitwise majority decoder}
\[
\hat s_j\;:=\;\mathbb{1}\!\left\{\bar X_j\ge\tfrac12\right\},\qquad j=1,\dots,n.
\]

\begin{proposition}[Majority recovers $x_\star$ under $BSC(r)$]
\label{prop:majority}
If the peak has weight $p_{\max}>0$ and the background mass has no systematic per-bit bias,\footnote{Formally,
$\big|\Pr[X_j=1\mid X\neq x_\star]-\tfrac12\big|\le o(1)$, which holds for near-uniform residual mass.} then each bit has bias toward $x_\star^j$ of at least
$\frac12\,p_{\max}(1-2r)$, and from a Chernoff bound for any failure probability $\eta\in(0,1)$,
\[
N\;\ge\; \frac{c\,\log(n/\eta)}{p_{\max}^{2}(1-2r)^{2}}
\quad\Longrightarrow\quad
\Pr\big[\hat s=x_\star\big]\;\ge\;1-\eta,
\]
for a universal constant $c>0$. 
\end{proposition}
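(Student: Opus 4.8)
The plan is to reduce the proposition to a per-bit bias estimate followed by a Hoeffding bound and a union bound over the $n$ coordinates. First I would fix a coordinate $j\in\{1,\dots,n\}$ and compute the marginal $\mu_j:=\Pr[X^{(i)}_j=x_\star^j]$ of a single noisy shot by conditioning on the underlying \emph{clean} outcome $Y$. Writing $X=Y\oplus E$ with $E$ having i.i.d.\ $\mathrm{Bernoulli}(r)$ coordinates independent of $Y$: conditioned on $Y=x_\star$ (probability $p_{\max}$), bit $j$ agrees with $x_\star^j$ iff $E_j=0$, with probability $1-r$; conditioned on $Y\neq x_\star$, if $q_j:=\Pr[Y_j=x_\star^j\mid Y\neq x_\star]$ then a short computation gives $\Pr[X_j=x_\star^j\mid Y\neq x_\star]=r+q_j(1-2r)=\tfrac12+(q_j-\tfrac12)(1-2r)$, which the no-systematic-bias hypothesis $|q_j-\tfrac12|=o(1)$ forces to equal $\tfrac12\pm o(1)$. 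Combining the two cases, $\mu_j=p_{\max}(1-r)+(1-p_{\max})\bigl(\tfrac12\pm o(1)\bigr)=\tfrac12+\tfrac12 p_{\max}(1-2r)\pm o(1)$, so the per-bit bias toward the correct value is $\gamma_j:=|\mu_j-\tfrac12|\ge \tfrac12 p_{\max}(1-2r)-o(1)$, which is the first assertion of the proposition (the $o(1)$ slack is absorbed below).

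Next I would apply concentration. The estimator $\bar X_j=\tfrac1N\sum_{i=1}^N X^{(i)}_j$ is the mean of $N$ i.i.d.\ $\mathrm{Bernoulli}(\mu_j)$ variables across shots, and the decoded bit $\hat s_j=\mathbb 1\{\bar X_j\ge\tfrac12\}$ is wrong only when $\bar X_j$ lands on the side of $\tfrac12$ opposite to $\mu_j$, i.e.\ when $|\bar X_j-\mu_j|\ge\gamma_j$; by Hoeffding's inequality this has probability at most $2\exp(-2N\gamma_j^2)$ (when $x_\star^j=0$ the roles of $0$ and $1$ are swapped and the bound is unchanged). A union bound over the $n$ coordinates — which needs no independence across $j$ — then gives $\Pr[\hat s\neq x_\star]\le 2n\exp(-2N\gamma^2)$ with $\gamma:=\min_j\gamma_j=\Theta\!\bigl(p_{\max}(1-2r)\bigr)$. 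Solving $2n\exp(-2N\gamma^2)\le\eta$ yields $N\ge \tfrac{1}{2\gamma^2}\log(2n/\eta)$, and substituting the value of $\gamma$ produces the claimed sample complexity $N\ge c\,\log(n/\eta)/\bigl(p_{\max}^2(1-2r)^2\bigr)$ for a universal constant $c$.

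The only delicate point I anticipate is bookkeeping the $o(1)$ term coming from the background-bias hypothesis: it must be dominated by $\tfrac12 p_{\max}(1-2r)$ for the bias $\gamma$ to stay $\Theta(p_{\max}(1-2r))$. I would make this clean either by strengthening the hypothesis to the quantitative form $|q_j-\tfrac12|\le\tfrac14 p_{\max}(1-2r)$ for all $j$ (so that $\gamma\ge\tfrac14 p_{\max}(1-2r)$ with no residual error term), or by invoking the structure of the peaked ensemble, where the non-peak mass is $k$-design-like and hence $q_j=\tfrac12+O(2^{-n})$, after which the remaining steps are a routine Chernoff computation. Everything else — the per-bit marginal identity, Hoeffding, and the union bound — is standard and requires no new ideas.
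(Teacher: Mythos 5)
Your proof is correct and fills in exactly the standard argument that the paper's statement alludes to but does not spell out (the paper gives no explicit proof of Prop.~\ref{prop:majority}): a per-coordinate marginal computation conditioned on the clean outcome, Hoeffding concentration of $\bar X_j$ around its mean, and a union bound over the $n$ bits. Your closing remark about the $o(1)$ slack is a genuine (if minor) sharpening: the proposition as worded asserts a bias of \emph{at least} $\tfrac12 p_{\max}(1-2r)$, but under the stated hypothesis $|q_j-\tfrac12|\le o(1)$ one only gets $\gamma_j\ge\tfrac12 p_{\max}(1-2r)-o(1)$, so either the quantitative strengthening $|q_j-\tfrac12|\le \tfrac14 p_{\max}(1-2r)$ or the $O(2^{-n})$ bound from the $k$-design structure of the non-peak mass is needed to make the displayed sample-complexity bound literally valid with a universal constant $c$.
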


Given $\hat s$, estimate the peak weight by the same Hamming-ball statistic as above:
\[
\hat p(t)\;:=\;\frac1N\sum_{i=1}^{N}\mathbb{1}\!\left\{\mathrm{dist}_H\!\big(X^{(i)},\hat s\big)\le t\right\},
\]
with $t$ chosen per the noise level (e.g., $t\approx r n$ or a small multiple thereof)
\subsection{Recovering peakedness under weak global depolarizing noise}
Lastly, we examine the robustness of peaked circuits to a quantum noise channel: the global depolarizing channel. Let the true peak be $p_{\max}=p(x_\star)$ as usual. We assume in execution, the quantum circuit goes through of the global depolarizing channel with strength $\varepsilon$~\footnote{Random circuits turns local noise into white noise~\cite{dalzell2021random}. Therefore if we characterize the gate level noises sufficiently well we can assume we have a good estimation on the global depolarizing rate},
\[
p'_{\max}=(1-\varepsilon)\,p_{\max}+\frac{\varepsilon}{2^{n}}.
\]
From $N$ samples with $X\sim\mathrm{Bin}\!\left(N,\,p'_{\max}\right)$ and
$\widehat{p}'_{\max}=X/N$, define the de-biased estimator
\[
\widehat{p}_{\max}\;=\;\frac{\widehat{p}'_{\max}-\varepsilon/2^{n}}{1-\varepsilon}.
\]
Then $\mathbb{E}[\widehat{p}_{\max}]=p_{\max}$ and
\[
\operatorname{SE}(\widehat{p}_{\max})
=\frac{\sqrt{p'_{\max}(1-p'_{\max})/N}}{1-\varepsilon}
\;\le\;\frac{1}{2(1-\varepsilon)\sqrt{N}}.
\]
To achieve additive error $|\widehat{p}_{\max}-p_{\max}|\le \alpha$ with failure
probability $\le \delta$, it suffices (by Chernoff) to take
\[
N=\Theta\!\left(\frac{p'_{\max}(1-p'_{\max})}{(1-\varepsilon)^2\,\alpha^2}
\log\frac{1}{\delta}\right)
=\Theta\!\left(\frac{1}{(1-\varepsilon)^2\,\alpha^2\,p_{\max}}
\log\frac{1}{\delta}\right),
\]
using $p'_{\max}\asymp p_{\max}$. For a peaked circuit with
$p_{\max}=O(1)$, this becomes
\[
N=\Theta\!\left(\frac{1}{(1-\varepsilon)^2\,\alpha^2}
\log\frac{1}{\delta}\right).
\]

Therefore, as long as the survival probability $1-\varepsilon = \Omega(1/\poly(n)$ one could recover the peakedness to some polynomial additive error with $N = \poly(n)$ samples.

\section{Future Directions}

Our work examines the properties of random peaked circuits: on one specific input string, these circuits carry a single designated computational-basis string $x_\star$ carries an anomalously large weight $p_{\max}(P)=|\!\braket{x_\star}{P|0^n}\!|^2\ge \delta$, while on all other inputs the output distribution look almost Haar random. We show strong analytical evidence that this task of estimating peakedness is hard for a classical simulator, while its quantum simulation remain near-term implementable and verifiable. In this sense, RPCs offer a minimal-structure requirement to quantum-classical separation: peaked quantum circuits do not require careful design (like Shor's algorithm) to be classically hard. On the other hand, it showed that, unlike in the RQC sampling case, anti-concentration is not a necessary condition for a quantum circuit to be average case hard. These random peaked circuits can be used as the next generation of sampling-based quantum advantage protocol for efficient classical verification.

While our work focuses on resolving existing open problems for RPCs, a few future goals suggest themselves:

\paragraph{Numerical and experimental demonstration at large scale.}
(i) \emph{Numerical scaling:} First of all, it would be interesting push the exact-contraction/Adam synthesis of Sec.~\ref{sec: advantage} to larger $n$ and depth, and verify if the analytical predictions still hold for the optimized circuits at that scale. (ii) \emph{Hardware demonstrations:} Another immediate next step is to implement stitched RPCs (introduced in Sec.~\ref{sec: advantage}) at system and circuit sizes beyond those used for random-circuit sampling, leveraging the product bound $|\!\braket{x_L}{U|x_0}\!|^2\ge\prod_i(1-\delta_i)$. Further, it would be interesting to rigorously prove relate the hardness of simulating such stitched and locally obfuscated RPCs to known complexity classes.

%Randomized initialization already makes the optimizer return a random representative in the fiber $\{C':C'\ket{0^n}\propto C\ket{0^n}\}$ (Sec.~\ref{sec:obfuscation}). A natural next step is to formalize lightweight \emph{local twirls} (randomized compiling) that preserve the tracked column while rapidly mixing gate parameters away from its lightcone. At the unitary level, conditional-Haar on the unpeaked sector implies $O(1/d)$ signal for local linear probes; with explicit twirls one expects per-site overlaps to concentrate at the Haar baseline. Establishing such decorrelation bounds under implementable twirl layers would strengthen the “one constraint $\Rightarrow$ global obfuscation’’ picture. 

\paragraph{Peaked circuits as an encryption protocal.}
Thus far we have assumed (w.l.o.g.) that the input is $\ket{0^n}$. As a theoretical perspective, we
advocate a \emph{secret–hiding} variant in which the input is unknown and the task is to decide whether the circuit is peaked:
\[
\text{Given a circuit } P,\ \text{decide whether }\exists\,x, y\in\{0,1\}^n\ \text{s.t. }\
|\!\braket{x}{P|y}\!|^2 \,\ge\, \delta .
\]
This is reminiscent of the \emph{Non-Identity Check} problem~\cite{janzing2005non,ji2009non} which is known to be QMA-complete, but with a complementary
flavor: there, one asks whether a circuit is close to identity on all inputs or far on
some witness state; here, we instead ask whether there exist a input state where the unitary mapping is nearly trivial (up to some bit-flips).

If this variant is hard even for quantum algorithms (e.g., QMA-/QCMA-hard under natural
promises), then peaked circuits suggest an encryption application: with a key (the hidden state) one
can efficiently decode the hidden peak(s), while without it the instance is computationally
intractable. Beyond a single string, one can hide a set $S\subseteq\{0,1\}^n$ with the number of peaked strings $|S|=\mathrm{poly}(n)$ and total mass $\sum_{x\in S} p(x)$, moving toward
code-like “hidden sets’’ in Hilbert space that are efficiently decodable and
plausibly hard to recover otherwise.

\paragraph{Obfuscation by compilation.}
In Sec.~\ref{sec: hardness} we have seen that a standard, deterministic compiler can be used to hide information. However, even when two circuits implement the same unitary, checking their equivalence from gate lists alone can be hard. Consider the following thought experiment: we draw a random polynomial-depth circuit $C_1$. Suppose we have the computational power to obtain its matrix $U$, and then we resynthesize $U$ via a standard, e.g., cosine–sine decomposition to obtain $C_2$. Clearly $C_1$ and $C_2$ represent the same unitary. 

But given only $C_1$ and $C_2$, deciding whether they are equivalent seems extremely costly: a naive test requires simulating the two circuits and checking the output closeness to the identity, which is generally exponentially costly. Moreover, small local edits in $C_2$ (e.g., inserting a gate at some random location) can dramatically alter the overall unitary. This suggests hiding randomness in the compilation process to make equivalence checking even harder. For the problem generator, the circuit is effectively an identity transformation of the logical algorithm; for any challenger, $C_1$ and $C_2$ can look very different, and the best available approach is to compute or characterize the full unitary—something a classical computer is likely to fail at for large $n$.

\section{Acknowledgment}

The author thanks Scott Aaronson, Dima Abanin, Bill Fefferman, Hrant Gharibyan, Soumik Ghosh, Hong-Ye Hu, Hayk Tepanyan, Yifan Zhang, and Leo Zhou for their insightful discussions and feedback. YZ was supported by the Natural Science and Engineering Research Council (NSERC) of Canada and acknowledges support from the Center for Quantum Materials and Centre for Quantum Information and Quantum Control at the University of Toronto. Resources used in preparing this research were provided, in part, by the Province of Ontario, the Government of Canada through CIFAR, and companies sponsoring the Vector Institute \url{www.vectorinstitute.ai/#partners}.

In preparation of the manuscript, we are aware of another related work, where the authors consider constructing peaked circuits with a quantum error correction approach called ``Hidden Code Sampling''~\cite{deshpande2025peaked}, which provides another promising path to scaling up peaked circuits that are provably hard.

\appendix
\bibliographystyle{alpha}
\bibliography{main}

\section{An alternative proof of Thm~\ref{thm: incom}}\label{sec: alter}
Here we provide an alternative proof sketch. Fundamentally, this construction is possible because of collisions: two or more distinct unitaries can lead to the same final state. In fact, in a $k$-design, the number of states with complexity $\sim kn$ is $d^k/k!$ while the number of unitaries with the same complexity is roughly $d^{2k}/k!$~\cite{brandao2021models}, where $d := 2^n$. Crucially, the concentration of measure says, in k-designs, the probability $p_i$ of a state (or circuit) being picked should be roughly flat: 
\begin{itemize}
    \item Pick a random state from a k-design (assuming $k>3$) according to $p_i$, with $>1/6$ probability, the corresponding $\Omega(p_i = k!/(d)^{-k})$ (see appendix for a proof);
    \item For all elements from the unitary k-design, max $p_i \lessapprox k!/d^{2k}$ (see, e.g. Lemma 3 of \cite{brandao2021models})
\end{itemize}
This redundancy is the key to this `obfuscation' process.
Then we can consider the following states: $\ket{\psi_1} :=C\ket{0^{n-1}}\otimes\ket{1}$ and $\ket{\psi_1'} :=C'\ket{0^{n-1}}\otimes\ket{1}$. As we show below, by random matrix theory and design properties, with overwhelmingly high probability, they should be far apart. Assume $C'^\dagger C$ is compressible - this violates the assumption because one could easily connect those two states by applying a shortcut circuit $C' C^\dagger$ to $\ket{\psi_1}$, which we show below is forbidden.
\begin{fact}
    With overwhelmingly high probability, $C\ket{0^{n-1}\otimes1}$ and $C'\ket{0^{n-1}\otimes1}$ are far apart.
\end{fact}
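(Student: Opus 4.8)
The plan is to reduce the overlap between the two states to a single matrix entry of a Haar- (or $k$-design-)random unitary acting on the $(d-1)$-dimensional orthogonal complement of $\ket{0^n}$, and then invoke a standard tail bound. Write $\ket{\psi_1}:=C\ket{0^{n-1}1}$ and $\ket{\psi_1'}:=C'\ket{0^{n-1}1}$, and note that $\ket{0^{n-1}1}\perp\ket{0^n}$, so
\[
\braket{\psi_1'}{\psi_1}\;=\;\bra{0^{n-1}1}C'^{\dagger}C\ket{0^{n-1}1}\;=\;\bra{0^{n-1}1}P\ket{0^{n-1}1},
\]
i.e.\ the overlap is exactly the matrix element of $P=C'^{\dagger}C$ indexed by $\ket{0^{n-1}1}$. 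Under the first-column conditioning $C\ket{0^n}=C'\ket{0^n}$, \cref{thm:haar-block} (Haar case) and \cref{cor:peaked-design}/\cref{prop:design-block} (design case) give $P\overset{d}{=}\diag(1,V)$ with $V=Y^{\dagger}X$ a Haar (resp.\ a unitary $k$-design, being a product of independent $k$-designs) on $\U(d-1)$, independent of $\ket{\psi_0}$. Hence $\braket{\psi_1'}{\psi_1}\overset{d}{=}V_{11}$, and the whole question collapses to controlling $|V_{11}|^{2}$, the squared modulus of the leading entry of a random unitary's first column.

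Next I would bound the tail of $|V_{11}|^{2}$ exactly as in \cref{lem:haar-cap} and \cref{rem:designs}. In the Haar case the first column of $V$ is a uniform unit vector, so $|V_{11}|^{2}\sim\mathrm{Beta}(1,d-2)$ and $\Pr[\,|V_{11}|^{2}\ge\delta\,]=(1-\delta)^{d-2}$, which is doubly exponentially small in $n$ for any fixed $\delta\in(0,1)$. In the $k$-design case the first column of $V$ is a state $k$-design, so $\E[\,|V_{11}|^{2t}\,]=\tfrac{t!\,(d-2)!}{(d-2+t)!}\le t!/(d-1)^{t}$ for $t\le k$, and Markov's inequality with $t=k$ gives $\Pr[\,|V_{11}|^{2}\ge\delta\,]\le k!/(\delta(d-1))^{k}\le(ck/(\delta d))^{k}=2^{-\Omega(nk)}$. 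In either regime, with probability $1-2^{-\Omega(nk)}$ we have $|\braket{\psi_1'}{\psi_1}|^{2}<\delta$, hence for every global phase $\gamma$,
\[
\bigl\|\,\ket{\psi_1}-e^{i\gamma}\ket{\psi_1'}\,\bigr\|_{2}^{2}\;\ge\;2-2\sqrt{\delta},
\]
so taking e.g.\ $\delta=1/2$ certifies that $\ket{\psi_1}$ and $\ket{\psi_1'}$ are a constant Euclidean (equivalently trace) distance apart with overwhelming probability.

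This feeds directly into \cref{thm: incom}. The identity $\braket{\psi_1'}{\psi_1}=\bra{0^{n-1}1}P\ket{0^{n-1}1}$ together with the tail bound already shows $\|P-I\|=\Omega(1)$ w.h.p., so $C'$ is provably far from $C^{\dagger}$ and $P$ is not a trivial (or obfuscated) inverse — this settles the qualitative worry. Quantitatively, the very same reduction makes $P\overset{d}{=}\diag(1,V)$ with $V$ a unitary $k$-design on $\U(d-1)$, so \cref{lem:state-packing} and \cref{lem:lifting} produce a $\delta$-separated family of $\gtrsim(d-1)^{k}/k!$ such peaked unitaries, and the net of size $(Cn^{2}s/\varepsilon)^{\kappa s}$ from \cref{prop:covering} cannot cover it unless $s=\tilde\Omega(kn)$ — precisely \cref{thm:packing-lb}. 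The explicit far-apart pair $(\ket{\psi_1},\ket{\psi_1'})$ is the concrete witness underlying that counting bound.

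The main obstacle is not the tail bound, which is routine once the reduction is in place, but making the reduction watertight. One must use that \cref{thm:haar-block} supplies a \emph{measurable} choice of $R$ and, crucially, that $V=Y^{\dagger}X$ is independent of the conditioning event (equivalently of $\ket{\psi_0}$), so that $|V_{11}|^{2}$ does not secretly inherit correlations from the postselection; in the design case one must also verify that $Y^{\dagger}X$ genuinely reproduces Haar moments to the order used in the Markov step — this follows from the block moment-matching of \cref{prop:design-block} together with the fact that a product of independent $k$-designs is again a $k$-design — and track an additive $O(\eta)$ for $\eta$-approximate designs. A secondary subtlety is that ``Euclidean-far'' does not by itself imply ``far in circuit complexity'': that last step genuinely requires the covering/packing apparatus of \cref{prop:covering}--\cref{thm:packing-lb}, not just the single-pair estimate.
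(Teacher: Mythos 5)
Your proof is correct, and it takes a genuinely different and arguably tighter route than the paper's own appendix argument. The paper proves the Fact through a Gaussian generative picture for Haar columns: it writes $\ket{\psi_1}=C\ket{0^{n-1}1}$ as a Gram--Schmidt projection of a raw Gaussian column $g_1$, invokes L\'evy's lemma (Markov for the design case) to show $\ket{\psi_1}$ is $O(d^{-1/2})$-close to the \emph{unconditional} normalized Gaussian $v_1$, does the same for $\ket{\psi_1'}$, and then argues that two (approximately) independent Haar states cannot be connected by a short circuit. You instead observe that $\braket{\psi_1'}{\psi_1}=\bra{0^{n-1}1}P\ket{0^{n-1}1}$ is a \emph{single} matrix entry of the $(d-1)$-block $V$ from \cref{thm:haar-block}/\cref{cor:peaked-design}, and read the tail directly off the Beta law $(1-\delta)^{d-2}$ in the Haar case and off the $k$-th moment bound in the design case. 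This is cleaner and more quantitative — it avoids the detour through an auxiliary state $v_1$ and a triangle inequality, and the design-case tail $(ck/(\delta d))^{k}$ is more explicit than what the paper extracts from its $\|\psi_1-v_1\|$ estimate. The trade-off, which you have correctly identified, is that your reduction reuses the main text's block-decomposition machinery, so it is less ``independent'' of the main proof than the paper's Gram--Schmidt argument (the appendix is explicitly framed as an alternative proof); and like the paper, you still need the packing/covering apparatus (either \cref{prop:covering}--\cref{thm:packing-lb} from the main text, or Lemma~A.5/Corollary~A.7 from the appendix) to promote Euclidean separation to a circuit-complexity separation. One small point worth making explicit: the identification $\braket{\psi_1'}{\psi_1}=V_{11}$ works because the $R$-conjugations cancel in $C'^{\dagger}C$, so $P$ is block diagonal in the computational-basis split $\ket{0^n}\oplus\ket{0^n}^{\perp}$ rather than in $\ket{\psi_0}\oplus\ket{\psi_0}^{\perp}$; your proof relies on this and it is true, but it is stated only tersely in the paper and is worth flagging when you invoke \cref{thm:haar-block}.
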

How to show this? Well, let us start with the Haar-random case as a motivating example. 
To generate the $i$-th vector in a Haar-random unitary, one could do the following: 
\begin{enumerate}
\item Generate $G$ according to
\[
G \;\in\;\mathbb{C}^{d\times 2},
\qquad
G_{ij}\;\overset{\text{i.i.d.}}{\sim}\;\mathcal{N}(0,1)\;+\;i\,\mathcal{N}(0,1).
\]

  \item Let \(g_{0}\) be the first column of \(G\).  Normalize:
    \[
      \ket{\psi_{0}}
      \;=\;
      \frac{g_{0}}{\|g_{0}\|}\,.
    \]
  \item Let \(g_{1}\) be the second column.  Project orthogonal to \(\ket{\psi_{0}}\) and normalize:
    \[
      \tilde g_{1}
      \;=\;
      g_{1}
      \;-\;
      \ket{\psi_{0}}\bigl\langle \psi_{0}\bigr|g_{1}\bigr),
      \quad
      \ket{\psi_{1}}
      \;=\;
      \frac{\tilde g_{1}}{\|\tilde g_{1}\|}\,.
    \]
\end{enumerate}

Then the pair \(\bigl(\ket{\psi_{0}},\ket{\psi_{1}}\bigr)\) is exactly distributed as the first two columns of a Haar‐random unitary on \(\mathbb{C}^{d}\).

For our construction, the vectors of $C$ are completely selected at random. There, by our demand, $C'$ shares the same first vector as $C$, and its second vector is generated randomly. Geometrically, we are randomly sampling two points on a $2^{n-1}$ dimensional hyperbolic sphere that is orthogonal to $\ket{\psi_0}$. Intuitively, the chance that they are close to each other should be doubly exponentially small.

In fact, $\ket{\psi_1}$ and $\ket{\psi_1'}$ are very close to Haar random states:
\begin{proposition}\label{prop: haar}
        If $C$, $C'$ are picked and postselected from a Haar random ensemble according to Definition~\cref{def: pcc}, then \(\|\psi_{1}-v_1\|=O(d^{-1/2})\) and \(\|\psi_{1}'-v_1'\|=O(d^{-1/2})\)  with $1-\exp(-n)$ high probability.
\end{proposition}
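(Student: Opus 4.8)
The plan is to reduce the statement to one elementary concentration estimate: the overlap of a Haar‑random pure state with a single fixed direction is $O(1/d)$ with exponential tails. First I would invoke Theorem~\ref{thm:haar-block} (equivalently Corollary~\ref{cor:peaked-haar}): conditioning on $C\ket{0^n}=C'\ket{0^n}=\ket{\psi_0}$ gives $C\overset{d}{=}R\,\diag(e^{i\phi},X)$ and $C'\overset{d}{=}R\,\diag(e^{i\phi'},Y)$ with $X,Y\overset{\mathrm{i.i.d.}}{\sim}\mu_{d-1}$ supported on $\ket{\psi_0}^\perp$ and $R\ket{0^n}=\ket{\psi_0}$. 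Consequently $\ket{\psi_1}:=C\ket{0^{n-1}1}$ and $\ket{\psi_1'}:=C'\ket{0^{n-1}1}$ are \emph{exactly} Haar‑uniform on the unit sphere of the $(d-1)$‑dimensional subspace $\ket{\psi_0}^\perp$, and they are mutually independent (since $X$ and $Y$ are). By symmetry it suffices to analyze $\ket{\psi_1}$; the argument for $\ket{\psi_1'}$ is identical with an independent copy.

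Next I would set up a coupling to a genuine $d$‑dimensional Haar state. Let $\ket{v_1}$ be Haar on the unit sphere of $\C^d$ and decompose it along $\ket{\psi_0}$ as $\ket{v_1}=\alpha\ket{\psi_0}+\sqrt{1-|\alpha|^2}\,\ket{\psi_1}$ with $\alpha:=\braket{\psi_0}{v_1}$ and $\ket{\psi_1}\in\ket{\psi_0}^\perp$ a unit vector (the phase of the orthogonal part absorbed into $\ket{\psi_1}$). The standard ``peeling'' of the complex sphere says the normalized $\ket{\psi_0}^\perp$‑component is Haar‑uniform on the sphere of $\ket{\psi_0}^\perp$ and is independent of $|\alpha|$, so this $\ket{\psi_1}$ has exactly the law of $C\ket{0^{n-1}1}$ from the previous step: the coupling is valid. (Equivalently, in the Gram--Schmidt picture recalled just before the proposition, take $\ket{v_1}=g_1/\|g_1\|_2$ for a fresh complex Gaussian $g_1\in\C^d$ and $\ket{\psi_1}=\tilde g_1/\|\tilde g_1\|_2$ with $\tilde g_1=g_1-\ket{\psi_0}\!\braket{\psi_0}{g_1}$.)

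Under this coupling the distance collapses to a single scalar. A one‑line computation gives $\braket{v_1}{\psi_1}=\sqrt{1-|\alpha|^2}$, hence
\[
\|\ket{v_1}-\ket{\psi_1}\|_2^2\;=\;2\bigl(1-\sqrt{1-|\alpha|^2}\,\bigr)\;\le\;2|\alpha|^2 .
\]
Now I would plug in the exact law $|\alpha|^2\sim\mathrm{Beta}(1,d-1)$, i.e.\ $\Pr[|\alpha|^2\ge s]=(1-s)^{d-1}$ (Lemma~\ref{lem:haar-cap}). Taking $s=cn/d$ gives $\Pr[|\alpha|^2\ge cn/d]=(1-cn/d)^{d-1}\le e^{-cn/2}$ for $d$ large, so with probability at least $1-e^{-\Omega(n)}$ one has $\|\ket{v_1}-\ket{\psi_1}\|_2\le\sqrt{2cn/d}$; the identical bound holds for the independent pair $(\ket{\psi_1'},\ket{v_1'})$, and a union bound combines the two. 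In the Gaussian picture the same conclusion follows from $\|\ket{v_1}-\ket{\psi_1}\|_2^2\le 2|\braket{\psi_0}{g_1}|^2/\|g_1\|_2^2$ together with $|\braket{\psi_0}{g_1}|^2=O(n)$ (sub‑exponential tail) and $\|g_1\|_2^2\ge d$ ($\chi^2$ concentration), which fail with probability $e^{-\Omega(n)}$ and $e^{-\Omega(d)}$ respectively.

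There is no real obstacle here — the whole thing is a few lines of concentration — but the one point I would be careful about is the trade‑off between the error magnitude and the confidence level: the clean $1-e^{-\Omega(n)}$ guarantee actually yields $\|\ket{\psi_1}-\ket{v_1}\|_2=O(\sqrt{n}\,d^{-1/2})$, i.e.\ $O(d^{-1/2})$ only up to a $\sqrt{n}=\mathrm{polylog}(d)$ factor; to obtain the bare $O(d^{-1/2})$ one must either relax the confidence to $1-1/\poly(d)$ (take $s=\Theta(\log d/d)$) or state the bound in expectation ($\E\|\ket{\psi_1}-\ket{v_1}\|_2^2\le 2/d$). I would therefore phrase the proposition as $\|\ket{\psi_1}-\ket{v_1}\|_2=\tilde O(d^{-1/2})$ with probability $1-e^{-\Omega(n)}$, which is exactly what the argument delivers and is more than enough for the downstream incompressibility claim (there one only needs $\ket{\psi_1}$ and $\ket{\psi_1'}$ to be $\Omega(1)$‑separated w.h.p., which is immediate since two independent Haar states on a $(d-1)$‑dimensional sphere are $\sqrt 2-o(1)$ apart).
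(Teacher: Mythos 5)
Your proof is essentially the same construction as the paper's — couple $\ket{\psi_1}$ (the Gram–Schmidt-orthogonalized second column) to a full-sphere Haar vector $\ket{v_1}$ and bound their distance by the overlap $|\alpha|=|\braket{\psi_0}{v_1}|$ — but you swap the paper's Lévy-lemma estimate on $f(v_1)=|\alpha|$ for the exact Beta law $\Pr[|\alpha|^2\ge s]=(1-s)^{d-1}$, which is sharper and more elementary, and you explicitly invoke Theorem~\ref{thm:haar-block} to pin down that $\ket{\psi_1}$ is exactly Haar on $\ket{\psi_0}^\perp$ (the paper leaves this implicit in the Gram–Schmidt discussion). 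The more valuable part of your analysis is that you correctly flag a real tension in the proposition's phrasing: no version of this argument gives simultaneously $O(d^{-1/2})$ error and failure probability $e^{-\Omega(n)}$. Plugging $\varepsilon=\Theta(d^{-1/2})$ into the paper's own Lévy bound $\Pr(f\ge\varepsilon)\le 2e^{-c(2d-1)\varepsilon^2}$ yields only a \emph{constant} failure probability, and the proof's closing line ``with $1-\exp(-d)$ high probability'' is not supported by the displayed inequality either. To get failure probability $e^{-\Omega(n)}$ one pays $\sqrt{n}$ in the error, i.e.\ $\|\psi_1-v_1\|=O(\sqrt{n/d})=\tilde O(d^{-1/2})$, exactly as you observe. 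This is a cosmetic overstatement rather than a substantive gap — the downstream use only needs $\ket{\psi_1},\ket{\psi_1'}$ to be $\Omega(1)$-separated, which holds trivially since two independent (near-)Haar states on a $(d-1)$-dimensional sphere are $\sqrt{2}-o(1)$ apart — but the proposition should really be stated with a polylog factor or with $1-1/\poly(d)$ confidence, and your write-up handles this more carefully than the paper does.
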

\begin{proof}
Let 
\[
v_1 \;=\;\frac{g_{1}}{\|g_{1}\|}\;\sim\;\mathrm{Haar}\bigl(S^{2d-1}\bigr),
\qquad
f(v_1)\;=\;\bigl|\langle \psi_{0}\mid v_1\rangle\bigr|.
\]
Since \(f\) is 1–Lipschitz on \(S^{2d-1}\), Lévy’s lemma implies that there is a universal constant \(c>0\) such that for all \(\varepsilon>0\),
\[
\Pr\bigl(f(v_1)\ge\varepsilon\bigr)
\;\le\;
2\,\exp\!\bigl(-c\,(2d-1)\,\varepsilon^{2}\bigr).
\]
Moreover, from the Gram–Schmidt construction one shows
\[
\bigl\|\ket{\psi_{1}} - v_1\bigr\|
\;\le\;
2\,f(v_1).
\]
Combining these,
\[
\Pr\bigl(\|\psi_{1}-v_1\|\ge 2\varepsilon\bigr)
\;\le\;
2\,\exp\!\bigl(-c\,(2d-1)\,\varepsilon^{2}\bigr),
\]
so \(\|\psi_{1}-v_1\|=O(d^{-1/2})\) with $1-\exp(-d)$ high probability.
But, even the circuit complexity between two $\varepsilon$-approximate  $n$-qubit Haar random states requires exponential gates to implement. Therefore, there cannot exist a short-cut between these two states.
\end{proof}

For a $k$-design in the $(d-1)$-dimensional subspace, one recovers Haar-like behavior up to $k$-th moments. Using the same convention \(\|\ket{\psi_{1}} - v\|\le 2\,f(v)\) and applying Markov’s inequality to the \(k\)th moment for an \(k\)-design, one obtains
\[
\Pr\bigl(\|\psi_{1}-v\|\ge 2\varepsilon\bigr)
\;\le\;
\frac{\mathbb{E}[\,f(v)^{k}\bigr]}{\varepsilon^{k}}
\;=\;
O\!\Bigl(\frac{1}{(N^{1/2}\,\varepsilon)^{k}}\Bigr)\;.
\]
In this case, Propasition~\ref{prop: haar} becomes:
\begin{proposition}\label{prop: k-design}
        If $C$, $C'$ are picked and postselected from a $k$-design according to Definition~\ref{def: pcc}, then \(\|\psi_{1}-v_1\|=O(d^{-0.49})\) and \(\|\psi_{1}'-v_1'\|=O(d^{-0.49})\)  with $1-\exp(-n)$ high probability.
\end{proposition}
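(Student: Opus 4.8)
The plan is to transplant the skeleton of the Haar argument (Proposition~\ref{prop: haar}) and replace its single use of Lévy concentration by a moment--Markov bound supplied by the design. Following that proof, write $f(v):=|\!\braket{\psi_0}{v}\!|$ for a unit vector $v$, set $\ket{\psi_1}:=C\ket{0^{n-1}1}$ and $\ket{\psi_1'}:=C'\ket{0^{n-1}1}$ for the actual orthogonalized second columns, and let $v_1,v_1'$ be the corresponding un-orthogonalized raw columns (the $k$-design analogues of $g_1/\|g_1\|$). The same one-line Gram--Schmidt estimate still holds: with $\mu:=\braket{\psi_0}{v_1}$ one has $\ket{\psi_1}=(v_1-\mu\ket{\psi_0})/\sqrt{1-|\mu|^2}$, hence $\|\psi_1-v_1\|\le 2f(v_1)$ whenever $f(v_1)\le\tfrac12$, and likewise $\|\psi_1'-v_1'\|\le 2f(v_1')$. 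So everything reduces to controlling $f(v_1)$ and $f(v_1')$.

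For that I would invoke the block--design decomposition of Proposition~\ref{prop:design-block}: after postselecting $C\ket{0^n}=C'\ket{0^n}=\ket{\psi_0}$, the raw columns' overlaps with the \emph{fixed} direction $\ket{\psi_0}$ have all degree-$\le k$ moments equal to the Haar values (plus an additive $O(\eta)$ for $\eta$-approximate designs). For any even $2t\le k$, $f(v_1)^{2t}=|\!\braket{\psi_0}{v_1}\!|^{2t}$ is a balanced degree-$2t$ polynomial, so Lemma~\ref{rem:designs} gives $\mathbb{E}[f(v_1)^{2t}]=\tfrac{t!\,(d-1)!}{(d-1+t)!}+O(\eta)\le t!/d^{\,t}+O(\eta)$, and already $t=1$ (a $2$-design) yields $\mathbb{E}[f(v_1)^2]\le 1/d+O(\eta)$. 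Markov's inequality then gives, for any $\varepsilon>0$,
\[
\Pr\!\big[\,f(v_1)\ge\varepsilon\,\big]\ \le\ \frac{t!}{(\sqrt d\,\varepsilon)^{2t}}\;+\;O(\eta).
\]

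The point of the exponent $0.49$ is exactly to push $\varepsilon$ just past the $d^{-1/2}$ scale so this tail becomes exponentially small rather than a constant: with $\varepsilon=d^{-(1/2-\gamma)}$ for a small fixed $\gamma>0$ we get $\sqrt d\,\varepsilon=d^{\gamma}$ and hence $\Pr[f(v_1)\ge\varepsilon]\le t!\,2^{-2t\gamma n}+O(\eta)=e^{-\Omega(n)}$ (absorbing $\eta=2^{-\omega(n)}$), with $\gamma=0.01$ realizing the stated exponent; to obtain the literal rate $1-e^{-n}$ one additionally takes $k$ (hence $t$) a large enough constant so that $2t\gamma\ln 2\ge 1$. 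Combined with the Gram--Schmidt bound this yields $\|\psi_1-v_1\|\le 2d^{-0.49}=O(d^{-0.49})$ with probability $1-e^{-\Omega(n)}$; the identical computation for $C'$ handles $\|\psi_1'-v_1'\|$, and a union bound over the two events costs nothing. The downstream consequence -- that $\ket{\psi_1}$ and $\ket{\psi_1'}$ are then far apart and admit no short connecting circuit -- follows as in the Haar case, since near-Haar states on a $(d-1)$-dimensional sphere are pairwise nearly orthogonal and of exponential circuit complexity.

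The step I expect to be the real obstacle is the clause in the first paragraph that a ``raw un-orthogonalized column'' $v_1$ exists and couples to $\ket{\psi_1}$ with $\|\psi_1-v_1\|\le 2f(v_1)$: unlike the Gaussian recipe, a generic $k$-design on $\U(d)$ has no canonical such column. To make this rigorous I would simply \emph{define} $v_1:=\mu\ket{\psi_0}+\sqrt{1-|\mu|^2}\,\ket{\psi_1}$ with $|\mu|^2\sim\mathrm{Beta}(1,d-1)$ drawn independently of $\ket{\psi_1}$; then $f(v_1)=|\mu|$ has the exact Haar marginal on $[0,1]$, $\|\psi_1-v_1\|\le 2|\mu|$ again by a one-line computation, and no statistic of $\ket{\psi_1}$ beyond the design guarantees of Proposition~\ref{prop:design-block} is used, so the argument goes through with Lemma~\ref{lem:haar-cap} in place of the moment bound if one prefers. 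A secondary, cosmetic caveat worth stating is that the exponent $0.49$ presumes $k$ is at least a moderately large constant; for small $k$ the honest claim is $O(d^{-(1/2-c/k)})$ with a $k$-dependent but exponentially small failure probability.
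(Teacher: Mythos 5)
Your proof matches the paper's in all essentials: both use the one-line Gram--Schmidt bound $\|\psi_1-v_1\|\le 2f(v_1)$ together with a moment--Markov bound supplied by the design, and both obtain exponentially small failure probability by taking $\varepsilon$ just past $d^{-1/2}$, which is exactly what the exponent $0.49$ buys. Your version is cleaner in one respect: you use even moments $f(v_1)^{2t}$ with $2t\le k$, which are genuinely balanced polynomials and hence directly controlled by the design, whereas the paper's displayed bound works with $\mathbb{E}[f(v)^k]$ for general $k$ (implicitly even, or else requiring a small Cauchy--Schwarz step).

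The gap you flag in the last paragraph is real and not addressed in the paper either: unlike the Gaussian recipe behind Proposition~\ref{prop: haar}, a unitary drawn from a $k$-design has no canonical ``raw second column'' $v_1$, so the quantity $\|\psi_1-v_1\|$ in the statement is not actually defined by the construction. Your artificial coupling $v_1:=\mu\ket{\psi_0}+\sqrt{1-|\mu|^2}\,\ket{\psi_1}$ with an auxiliary $\mathrm{Beta}(1,d-1)$ radius $\mu$ does make the proposition literally true, but it makes it essentially tautological: the concentration now comes entirely from the artificial $\mu$, not from the design, so the proposition by itself carries no information about $\psi_1$. The content the paper actually needs downstream --- that $\psi_1$ and $\psi_1'$ are far apart and of high circuit complexity --- is delivered more honestly by bypassing $v_1$ altogether and invoking the block decomposition already proved earlier (Theorem~\ref{thm:haar-block} and Proposition~\ref{prop:design-block}): conditioned on the first column, $\psi_1=C\ket{0^{n-1}1}$ and $\psi_1'=C'\ket{0^{n-1}1}$ are, up to the fixed rotation $R$, exactly independent state $k$-designs on the $(d-1)$-dimensional space $\ket{\psi_0}^\perp$, and the separation/complexity lower bounds then follow directly from Lemma~\ref{lem:circuit-lb} and Corollary~\ref{cor:reference-independence} without ever introducing $v_1$. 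You might add that observation as the rigorous version of this step; the proof as you have written it is faithful to what the paper does, including inheriting its imprecision about $v_1$.
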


A $k$-design `only' guarantees $O(d^{-k/2}$) tail decay, but even just setting $\varepsilon = O(d^{-0.49})$ still gives an exponentially small in $n$. Crucially, this shows that, even if we fix the first column vector of $C'$ to be the same as $C$, with almost probability 1, $\ket{\psi_1}'$ still forms $\varepsilon$-approximate $k$ designs themselves with $\varepsilon$ exponentially small in $n$.

We are now ready to prove the main theorem: we want to show that with high probability that connecting $\ket{\psi_1}'$ into $\ket{\psi_1}$ requires many gates. Intuitively, this is justified, as two states drawn from an approximate $k$-design are very likely to be far apart in the circuit complexity picture. Our next step is to formalize this by showing a counting argument.
Let $\mathcal{D}=\{\ket{\phi_1},\dots,\ket{\phi_N}\}\subset \mathbb{C}^d$ be an $\varepsilon$–approximate spherical $k$–design (uniform measure). 
Fix a reference state $\ket{\phi'}$ (which could be any state as the design properties are independent of the reference state) and $0<\delta<1$. 
Define the Fubini–Study distance $d_{\mathrm{FS}}(\ket{\phi},\ket{\phi'})=\sqrt{1-|\langle\phi|\phi'\rangle|^2}$.

\begin{lemma}
The number $P_\delta$ of pairwise $\delta$–separated states contained in $\mathcal{D}$ satisfies
\[
P_\delta \;\ge\; \frac{(1-\delta^2)^k}{1+\varepsilon}\binom{d+k-1}{k}
\;=\; \frac{(1-\delta^2)^k}{(1+\varepsilon)k!}\,d^{\,k}\Bigl(1+O\!\bigl(\tfrac{k^2}{d}\bigr)\Bigr).
\]
In particular, for fixed $k,\delta,\varepsilon$, $P_\delta = \Theta(d^{k})$ and hence 
\[
N \;\ge\; P_\delta \;=\; \Omega(d^{k}).
\]
\end{lemma}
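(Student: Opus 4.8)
The plan is to run a volumetric packing argument over $\mathcal D$ in which the $k$-design property supplies the only non-trivial ingredient: a uniform bound on how many design points can fall inside any small Fubini--Study cap. First I would record the single moment we need. For Haar-random $\ket{\phi}$ on $\mathbb{S}^{2d-1}$ and any fixed unit vector $\ket{\psi}$, the overlap $|\langle\phi|\psi\rangle|^2$ is $\mathrm{Beta}(1,d-1)$-distributed, so
\[
\E_{\mathrm{Haar}}\bigl[\,|\langle\phi|\psi\rangle|^{2k}\,\bigr]
\;=\;\frac{k!\,(d-1)!}{(d+k-1)!}
\;=\;\binom{d+k-1}{k}^{-1}.
\]
Since $|\langle\phi|\psi\rangle|^{2k}=\langle\psi|\phi\rangle^{k}\,\langle\phi|\psi\rangle^{k}$ is exactly a contraction of $(\ket{\phi}\bra{\phi})^{\otimes k}$ against $(\ket{\psi}\bra{\psi})^{\otimes k}$ — the $k$-th moment operator controlled by a (projective/state) $k$-design — the $\varepsilon$-approximate design property, in the multiplicative (relative-error) sense that matches the $(1+\varepsilon)^{-1}$ appearing in the statement, yields for \emph{every} fixed $\ket{\psi}$
\[
\E_{\phi\sim\mathcal D}\bigl[\,|\langle\phi|\psi\rangle|^{2k}\,\bigr]\;\le\;\frac{1+\varepsilon}{\binom{d+k-1}{k}}.
\]

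Next I would convert this into a cap-counting bound by Markov's inequality. For any fixed $\ket{\psi}$ and $\delta\in(0,1)$, using that $1-\delta^2\in(0,1)$ makes $x\mapsto x^{k}$ monotone,
\[
\bigl|\{\,j:\ |\langle\phi_j|\psi\rangle|^2>1-\delta^2\,\}\bigr|
\;=\;N\cdot\Pr_{\phi\sim\mathcal D}\!\bigl[\,|\langle\phi|\psi\rangle|^{2k}>(1-\delta^2)^{k}\,\bigr]
\;\le\;\frac{(1+\varepsilon)\,N}{(1-\delta^2)^{k}\binom{d+k-1}{k}}\;=:\;L.
\]
Then I would take a \emph{maximal} $\delta$-separated subset $\{\ket{\psi_1},\dots,\ket{\psi_M}\}\subseteq\mathcal D$ in the Fubini--Study metric, i.e.\ $|\langle\psi_i|\psi_j\rangle|^2\le 1-\delta^2$ for $i\ne j$, so that $M\le P_\delta$ (with equality if $P_\delta$ denotes the maximum). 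By maximality every $\ket{\phi}\in\mathcal D$ lies within Fubini--Study distance $\delta$ of some $\ket{\psi_i}$, hence in the cap $B_i:=\{\ket{\phi}\in\mathcal D:\ |\langle\phi|\psi_i\rangle|^2>1-\delta^2\}$ (note $\ket{\psi_i}\in B_i$), so $\mathcal D=\bigcup_{i=1}^{M}B_i$. Applying the cap bound with $\ket{\psi}=\ket{\psi_i}$ is legitimate because $\mathcal D$, and therefore each $\ket{\psi_i}$, is fixed \emph{before} the design inequality is invoked, and that inequality holds uniformly over all fixed reference vectors. A union bound then gives
\[
N=\Bigl|\bigcup_{i=1}^{M}B_i\Bigr|\;\le\;\sum_{i=1}^{M}|B_i|\;\le\;M\cdot L\;=\;\frac{M\,(1+\varepsilon)\,N}{(1-\delta^2)^{k}\binom{d+k-1}{k}},
\]
and dividing by $N$ and rearranging yields $P_\delta\ge M\ge\dfrac{(1-\delta^2)^{k}}{1+\varepsilon}\binom{d+k-1}{k}$. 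Finally, writing $\binom{d+k-1}{k}=\frac{d^{k}}{k!}\prod_{j=0}^{k-1}(1+j/d)=\frac{d^{k}}{k!}\bigl(1+O(k^2/d)\bigr)$ recovers the stated asymptotic form; for fixed $k,\delta,\varepsilon$ this is $\Theta(d^{k})$, and since all $P_\delta$ separated states lie in $\mathcal D$ we conclude $N\ge P_\delta=\Omega(d^{k})$.

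I do not anticipate a genuine obstacle here: this is the standard ``design $\Rightarrow$ anticoncentration of overlaps $\Rightarrow$ volumetric packing'' template, and every displayed step is elementary. The two points that want care are purely bookkeeping. First, one must fix the convention for ``$\varepsilon$-approximate'': the $(1+\varepsilon)^{-1}$ here is precisely multiplicative (relative) error on the single moment operator $\E[(\ket{\phi}\bra{\phi})^{\otimes k}]$; under an additive or diamond-norm notion the argument still goes through, but with $1+\varepsilon$ replaced by $1+O(\varepsilon)$-type factors (possibly carrying dimension-dependent constants), and this should be made explicit. Second, one must check that $|\langle\phi|\psi\rangle|^{2k}$ is genuinely the $k$-th-moment quantity controlled by a \emph{$k$}-design (degree $k$ in $\phi$ and $k$ in $\overline{\phi}$) rather than a $2k$-design — this is exactly what makes the exponent in $P_\delta=\Omega(d^{k})$ come out to $k$. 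Both conventions are consistent with those already in force in Lemma~\ref{rem:designs} and Lemma~\ref{lem:state-packing}.
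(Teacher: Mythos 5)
Your proof is correct and follows essentially the same route as the paper's: both bound $\E_{\phi\sim\mathcal D}[|\langle\phi|\psi\rangle|^{2k}]$ via the $\varepsilon$-approximate $k$-design property, convert this into a cap-size bound at threshold $(1-\delta^2)^k$, and then run a packing/covering argument (you via a maximal $\delta$-separated set plus union bound, the paper via greedy selection — these are interchangeable and give the identical count $N/L$). Your added bookkeeping remarks about the multiplicative vs.\ additive design convention and the $k$ vs.\ $2k$ moment degree are sound and consistent with the paper's usage.
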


\begin{proof}
Let $f(\phi)=|\langle\phi|\phi'\rangle|^{2k}$. For Haar measure
$\mathbb{E}_{\text{Haar}} f = \binom{d+k-1}{k}^{-1}$. Since $\mathcal D$ is an
$\varepsilon$–approximate $k$–design (uniform over its $N$ points),
\[
\frac1N\sum_{j=1}^N f(\phi_j)\;\le\;(1+\varepsilon)\binom{d+k-1}{k}^{-1}.
\]
Write $B_\delta(\phi)=\{\phi: d_{\mathrm{FS}}(\phi,\phi')\le\delta\}$; on this event
$|\langle\phi|\phi\rangle|^{2}\ge 1-\delta^{2}$, hence
$f(\phi)\ge (1-\delta^{2})^{k}$. Let $q:=|\mathcal D\cap B_\delta(\phi)|/N$.
Then
\[
(1-\delta^{2})^{k} q \;\le\; \frac1N\sum_{j=1}^N f(\phi_j)
\;\le\; (1+\varepsilon)\binom{d+k-1}{k}^{-1},
\]
so
\[
q \;\le\; \frac{1+\varepsilon}{(1-\delta^{2})^{k}} \binom{d+k-1}{k}^{-1}.
\]
Thus any Fubini–Study ball of radius $\delta$ contains at most
$qN \le (1+\varepsilon)(1-\delta^{2})^{-k}\binom{d+k-1}{k}^{-1} N$ points.
A greedy packing that successively selects a point and removes all points
within distance $\delta$ produces at least
\[
P_\delta \;\ge\; \frac{N}{qN} \;\ge\;
\frac{(1-\delta^{2})^{k}}{1+\varepsilon}\binom{d+k-1}{k}.
\]
This matches the stated bound. The asymptotic form follows from
$\binom{d+k-1}{k} = d^{k}/k!(1+O(k^{2}/d))$. 
\end{proof}
The large number of distinct states in an approximate design allows us to prove the following circuit complexity fact:

\begin{lemma}[Circuit size lower bound from packing]
\label{lem:circuit-lb}
Fix $\delta\in(0,1)$ and a finite universal two–qubit gate set $\mathcal G$.
Let $\mathcal S$ be an ensemble of $N$ pure $n$–qubit states (with $d=2^n$)
that are pairwise Fubini–Study distance $\ge \delta$. Any circuit over $\mathcal G$
using at most $s$ two–qubit gates (in a fixed layout, e.g.\ brickwork) can generate
at most
\[
  \Bigl(\frac{C s}{\delta}\Bigr)^{\Gamma s}
\]
distinct states from $\mathcal S$, where $c,\Gamma>0$ depend only on $\mathcal G$.
Consequently
\begin{equation}\label{eq:implicit-s}
  s \;\ge\; \frac{\log N}{\Gamma\bigl(\log s + \log(c/\delta)\bigr)}.
\end{equation}
In particular, for constant $\delta$ and large $N$,
\[
  s \;=\; \Omega\!\left(\frac{\log N}{\log\log N}\right).
\]
If moreover $N=\Theta(d^{k})=\Theta(2^{k n})$ for constant $k$, then
\[
  s \;=\; \Omega\!\left(\frac{k n}{\log(k n)}\right).
\]
\end{lemma}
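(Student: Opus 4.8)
The plan is to convert the statement into a covering-number (volume-counting) argument: a circuit with few two-qubit gates reaches only a small, low-dimensional region of state space, so it cannot realize too many of the $N$ pairwise $\delta$-separated targets in $\mathcal{S}$.

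First I would invoke Proposition~\ref{prop:covering} in its fixed-layout form: once the two-qubit interaction pattern is pinned down, the family $\mathcal{C}_{n,\le s}$ of circuits with at most $s$ two-qubit gates (and free single-qubit gates, hence a fortiori circuits over any fixed two-qubit gate set $\mathcal{G}$) admits, for each $\varepsilon\in(0,1/2]$, an $\varepsilon$-net in the operator norm of size at most $(C s/\varepsilon)^{\kappa s}$ with $C,\kappa>0$ universal. Since $U\mapsto U\ket{0^n}$ is $1$-Lipschitz from the operator norm to the Euclidean norm on the sphere, and since $d_{\mathrm{FS}}$ is comparable up to a universal constant to the Euclidean distance between unit-norm representatives, the image of this net is an $O(\varepsilon)$-net in $d_{\mathrm{FS}}$ of the set of output states reachable by $\le s$ two-qubit gates. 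Choosing $\varepsilon=\Theta(\delta)$ small enough that each net ball has $d_{\mathrm{FS}}$-radius $<\delta/2$, each such ball contains at most one element of $\mathcal{S}$ (the targets are pairwise $\ge\delta$ apart); hence a circuit with $\le s$ two-qubit gates can output at most $(C's/\delta)^{\Gamma s}$ distinct states of $\mathcal{S}$ for universal $C'$ and $\Gamma:=\kappa$, which is the first displayed bound.

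Next I would read off the size constraint. If $\mathcal{C}_{n,\le s}$ can produce all of $\mathcal{S}$, then $(C s/\delta)^{\Gamma s}\ge N$; taking logarithms yields
\[
\Gamma\,s\bigl(\log s+\log(C/\delta)\bigr)\ \ge\ \log N,
\]
which is exactly \eqref{eq:implicit-s} (identifying the universal constants $c=C$). For the asymptotics with $\delta$ constant, split on the size of $s$: if $s>\log N$ we are done since $\log N=\omega(\log N/\log\log N)$; otherwise $s\le\log N$, so $\log s\le\log\log N$ while $\log(C/\delta)=O(1)$, and the inequality forces $s\cdot O(\log\log N)\ge \log N/\Gamma$, i.e.\ $s=\Omega(\log N/\log\log N)$. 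Finally, when $\mathcal{S}$ is the separated set extracted from a (possibly $\eta$-approximate) unitary $k$-design on $\U(d-1)$ via Lemma~\ref{lem:state-packing}, we have $N=\Theta(d^k)=\Theta(2^{kn})$, hence $\log N=\Theta(kn)$ and $\log\log N=\Theta(\log(kn))$, giving $s=\Omega\!\bigl(kn/\log(kn)\bigr)$.

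The steps are essentially bookkeeping (the metric comparisons between operator norm, Euclidean distance, and $d_{\mathrm{FS}}$, and solving the transcendental inequality); the one point deserving care is the fixed-layout hypothesis. Genuinely fixing the interaction graph is what removes the $\poly(n)$ gate-placement factor from Proposition~\ref{prop:covering}; with a free layout the same argument still runs with $(C n^2 s/\delta)^{\kappa s}$, which merely adds a $\log n$ to the denominator of the final estimate and still yields $\tilde\Omega(kn)$. The approximate-design parameter $\eta$ enters only through Lemma~\ref{lem:state-packing}, shrinking $N$ by a factor $1-O(\eta)$, which is immaterial to the counting above.
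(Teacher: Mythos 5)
Your proof is correct and follows the same covering-number/packing strategy as the paper; the only cosmetic difference is that you invoke Proposition~\ref{prop:covering} directly, whereas the paper's proof re-derives the discretized-circuit count from scratch via a Solovay--Kitaev net on $SU(4)$ and a telescoping bound, choosing $\alpha=\delta/(3s)$ per gate. The metric bookkeeping (operator norm $\to$ Euclidean $\to$ Fubini--Study) and your case split to solve the transcendental inequality are both sound and match the paper's conclusion.
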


\begin{proof}

For any precision $\alpha\in(0,1)$ there is an $\alpha$–net
$\mathcal G_\alpha\subset SU(4)$ with
$|\mathcal G_\alpha|\le (c/\alpha)^{\Gamma}$ (Solovay--Kitaev plus compilation
of single–qubit gates; constants absorbed into $c,\Gamma$).

Let $U=U_s\cdots U_1$ and let $\tilde U_j$ approximate $U_j$ with
$\|U_j-\tilde U_j\|_\infty\le \alpha$.
Telescoping gives
\[
  \|U-\tilde U\|_\infty
  \;\le\; \sum_{j=1}^s
  \bigl\| U_s\cdots U_{j+1}(U_j-\tilde U_j)\tilde U_{j-1}\cdots\tilde U_1\bigr\|_\infty
  \;\le\; s\alpha.
\]
Thus $\|U\ket{\phi} - \tilde U\ket{\phi}\|_2 \le s\alpha$ for any input $\ket{\phi}$. Set $\alpha := \delta/(3s)$. Then every length‑$s$ circuit has a discretized
representative producing an output within $\delta/3$ of the original. Fix a length‑$s$ two–qubit gate layout. The number of discretized circuits of
length $\le s$ is at most
\[
  M(s,\delta) \;\le\; |\mathcal G_\alpha|^{s}
  \;\le\; \Bigl(\frac{c}{\alpha}\Bigr)^{\Gamma s}
  \;=\; \Bigl(\frac{c s}{\delta}\Bigr)^{\Gamma s}.
\]
Each discretized circuit corresponds to a ball (radius $\delta/3$) covering all
outputs of circuits within per–gate $\alpha$.

Two distinct $\delta/3$–balls cannot contain two states of $\mathcal S$ at
distance $\ge \delta$ (triangle inequality). Hence $N \le M(s,\delta)$, giving
\[
  \log N \;\le\; \Gamma s\bigl(\log s + \log(c/\delta)\bigr),
\]
which rearranges to \eqref{eq:implicit-s}.

For constant $\delta$, $\log(c/\delta)=O(1)$, so
$s\log s = \Omega(\log N)$ and therefore
$s = \Omega(\log N / \log\log N)$. Substituting $N=\Theta(2^{k n})$ yields
$s = \Omega(k n / \log(k n))$ for constant $k$.
\end{proof}
It is worth noticing that the lower bound is a global geometric/combinatorial fact about how many well‑separated states short circuits can reach. Changing the ``origin'' for any fixed starting state, either $\ket{0}$ or $\ket{\psi_1}$ in our problem does not change the fact. One could prove the following corollary:

\begin{corollary}[Reference--independence of typical circuit lower bound]
\label{cor:reference-independence}
Let $\mathcal D$ be an $\varepsilon$--approximate spherical $k$--design on $n$ qubits
whose elements are pairwise Fubini--Study distance at least $\delta\in(0,1)$, and let
$N = |\mathcal D| = \Theta(2^{k n})$ (for fixed $k,\delta,\varepsilon$).
Fix an \emph{arbitrary} reference state $\ket{\eta}$.
For any $s\in\mathbb N$ denote by $\mathcal C(\ket{\eta}\!\to\!\ket{\psi})$ the minimal number
of two--qubit gates from a fixed finite universal gate set needed to map $\ket{\eta}$
to $\ket{\psi}$ up to Euclidean error $\le \delta/3$.
Then there exist constants $c,\Gamma>0$ (depending only on the gate set) such that
\begin{equation}\label{eq:ref-indep-tail}
  \Pr_{\ket{\psi}\sim \mathcal D}\Big[\,\mathcal C(\ket{\eta}\!\to\!\ket{\psi}) < s\,\Big]
  \;\le\;
  \frac{(c s/\delta)^{\Gamma s}}{N}.
\end{equation}
Choosing $s = c_1 k n / \log(k n)$ with $c_1>0$ sufficiently small yields
\[
  \Pr_{\ket{\psi}\sim \mathcal D}\Big[\,\mathcal C(\ket{\eta}\!\to\!\ket{\psi})
     < c_1 k n / \log(k n)\Big]
  \;\le\; e^{-\Omega(k n)}.
\]
Thus, with probability $1 - e^{-\Omega(k n)}$ over a random $\ket{\psi}\in\mathcal D$,
\[
  \mathcal C(\ket{\eta}\!\to\!\ket{\psi})
  = \Omega\!\Bigl(\frac{k n}{\log(k n)}\Bigr),
\]
uniformly for every fixed choice of $\ket{\eta}$.
\end{corollary}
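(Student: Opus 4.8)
\emph{The plan} is to read the corollary directly off Lemma~\ref{lem:circuit-lb}: the proof of that lemma never uses anything about the starting state except that a fixed netted circuit applied to it produces \emph{one} output, so the bound on the \emph{number} of design points reachable by short circuits carries no dependence on the reference state; one only has to turn the resulting counting inequality into a probability (using that $\ket{\psi}\sim\mathcal D$ is uniform on the $N$ points of $\mathcal D$) and then substitute $s$. Concretely, first fix an arbitrary $\ket{\eta}$, a finite universal two-qubit gate set $\mathcal G$, and a layout with at most $s$ two-qubit slots. As in Lemma~\ref{lem:circuit-lb}, take an $\alpha$-net $\mathcal G_\alpha\subset SU(4)$ with $|\mathcal G_\alpha|\le (c_0/\alpha)^{\Gamma}$ and set $\alpha:=\delta/(Cs)$ for an absolute constant $C\ge 12$. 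Telescoping gives $\|U\ket{\eta}-\tilde U\ket{\eta}\|\le s\alpha=\delta/C$ for every length-$\le s$ circuit $U$ and its netted version $\tilde U$, and this bound holds verbatim for every $\ket{\eta}$.

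Next, suppose $\mathcal C(\ket{\eta}\to\ket{\psi})<s$ for some $\ket{\psi}\in\mathcal D$; then some $U$ with $\le s$ two-qubit gates satisfies $\|U\ket{\eta}-\ket{\psi}\|\le \delta/3$, hence $\|\tilde U\ket{\eta}-\ket{\psi}\|\le \delta/3+\delta/C<\delta/2$. Assign to each such $\ket{\psi}$ one witnessing netted circuit $\tilde U_\psi$. If $\tilde U_{\psi_1}=\tilde U_{\psi_2}$ for $\psi_1\ne\psi_2$, then $\|\ket{\psi_1}-\ket{\psi_2}\|<\delta$, contradicting the $\delta$-separation of $\mathcal D$; so $\ket{\psi}\mapsto\tilde U_\psi$ is injective on $\{\ket{\psi}\in\mathcal D:\mathcal C(\ket{\eta}\to\ket{\psi})<s\}$. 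Since the number of netted circuits of length $\le s$ is at most $|\mathcal G_\alpha|^{s}\le (c_0 C s/\delta)^{\Gamma s}=:(cs/\delta)^{\Gamma s}$ with $c,\Gamma$ depending only on $\mathcal G$, that set has size at most $(cs/\delta)^{\Gamma s}$. Dividing by $N$ yields exactly \eqref{eq:ref-indep-tail}, and nothing in the argument referred to $\ket{\eta}$ beyond it being fixed, which is the asserted reference-independence.

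Finally, take $s=\lfloor c_1 kn/\log(kn)\rfloor$. Then $\log s\le \log(kn)$ for large $n$, so $\log\!\bigl((cs/\delta)^{\Gamma s}\bigr)=\Gamma s\bigl(\log s+\log(c/\delta)\bigr)\le 2\Gamma c_1 kn$ for large $n$, while $N=\Theta(2^{kn})$ gives $\log N\ge \tfrac12 kn\log 2$ for large $n$. Choosing $c_1<(\log 2)/(8\Gamma)$ makes the exponent of $(cs/\delta)^{\Gamma s}/N$ at most $-\tfrac14 kn\log 2=-\Omega(kn)$, so $\Pr_{\ket{\psi}\sim\mathcal D}[\mathcal C(\ket{\eta}\to\ket{\psi})<c_1 kn/\log(kn)]\le e^{-\Omega(kn)}$; equivalently, with probability $1-e^{-\Omega(kn)}$ over $\ket{\psi}\sim\mathcal D$ one has $\mathcal C(\ket{\eta}\to\ket{\psi})=\Omega(kn/\log(kn))$, with all implied constants independent of the fixed $\ket{\eta}$.

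\emph{The only point needing care} is the error bookkeeping: the $\delta/3$ slack in the definition of $\mathcal C$ plus the discretization slack $s\alpha$ must stay strictly below $\delta/2$ so that $\delta$-separation still forces injectivity of $\ket{\psi}\mapsto\tilde U_\psi$; this is arranged by taking the per-gate net precision $\alpha=\Theta(\delta/s)$ with a sufficiently small constant, which only rescales $c$. There is no substantive obstacle — the statement is genuinely a corollary of Lemma~\ref{lem:circuit-lb} — and the conceptual content is simply that the covering number $(cs/\delta)^{\Gamma s}$ of $s$-gate circuits is the same regardless of the input state, so the high-probability circuit lower bound transfers uniformly from the origin $\ket{0^n}$ to any fixed $\ket{\eta}$.
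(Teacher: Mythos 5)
Your proof is correct, and it actually takes a cleaner and more careful route than the paper's own argument. The paper proceeds by choosing a unitary $W$ with $W\ket{0^n}=\ket{\eta}$, rotating the design to $\mathcal D' := W^\dagger\mathcal D$, and asserting $\mathcal C(\ket{\eta}\!\to\!\ket{\psi}) = \mathcal C(\ket{0^n}\!\to\!W^\dagger\ket{\psi})$ before invoking Lemma~\ref{lem:circuit-lb} on $\mathcal D'$. That equality is not actually justified as stated: circuit complexity over a \emph{fixed} gate set is not invariant under conjugation by an arbitrary $W$, since $W^\dagger U W$ is generally not a circuit with the same gate count as $U$. You sidestep this entirely by observing the key point directly: the netting/covering argument in the proof of Lemma~\ref{lem:circuit-lb} never references the input state, so the count $(cs/\delta)^{\Gamma s}$ of design points reachable by $\le s$-gate circuits holds verbatim for an arbitrary fixed $\ket{\eta}$. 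You then turn the count into a tail probability by uniformity of $\ket{\psi}$ over $\mathcal D$, and your substitution $s=\lfloor c_1 kn/\log(kn)\rfloor$ with the error-budget check $\delta/3 + s\alpha < \delta/2$ (via $\alpha=\delta/(Cs)$, $C\ge 12$) is airtight. Both approaches buy the same conclusion; yours makes the reference-independence transparent at the level of the counting lemma, while the paper's rotation trick obscures it and, as written, contains a gap. If anything, your proof is what the paper's proof \emph{should} have said: the second half of the paper's own proof already appeals to the fact that the packing bound applies ``in any $\delta$-separated set,'' which is exactly the observation you lead with.
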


\begin{proof}
Let $W$ be any unitary with $W\ket{0^{\otimes n}} = \ket{\eta}$ and define the rotated
ensemble $\mathcal D' := W^\dagger \mathcal D$.
Unitary invariance of the Fubini--Study metric implies $\mathcal D'$ is also
$\delta$--separated, and since (approximate) spherical $k$--designs are invariant
under global conjugation, $\mathcal D'$ remains an $\varepsilon$--approximate
$k$--design of the same size $N$.

For any $\ket{\psi}\in\mathcal D$ we have
$\mathcal C(\ket{\eta}\!\to\!\ket{\psi}) = \mathcal C(\ket{0^{\otimes n}}\!\to\! W^\dagger\ket{\psi})$,
hence
\[
  \Pr_{\ket{\psi}\sim\mathcal D}\!\big[\mathcal C(\ket{\eta}\!\to\!\ket{\psi}) < s\big]
  =
  \Pr_{\ket{\psi'}\sim\mathcal D'}\!\big[\mathcal C(\ket{0^{\otimes n}}\!\to\!\ket{\psi'}) < s\big].
\]
The right-hand probability is bounded by the absolute (reference $\ket{0}$) case:
by the circuit counting / packing lemma (Lemma~\ref{lem:circuit-lb}) there are at most
$(c s/\delta)^{\Gamma s}$ states of circuit size $< s$ in any $\delta$--separated set,
establishing \eqref{eq:ref-indep-tail}. Taking $s = c_1 k n/\log(k n)$ and choosing
$c_1$ small so that
$\Gamma s \log(c s/\delta) \le \tfrac{1}{2} k n$ for large $n$ gives the exponential
tail $e^{-\Omega(k n)}$. This yields the stated high-probability lower bound.
\end{proof}

Combining Corollary~\ref{cor:reference-independence} and Proposition~\ref{prop: k-design} gives Theorem~\ref{thm: incom}: for a fixed choice of $C$, with high probability, the second column vector of $C'$ forms approximate designs and should be far from the second column vector of $C$ with high probability.

\section{Tail Bound from a State 3-Design}

Let $\{\ket{\psi_i}\}_{i=1}^N \subset \mathbb{C}^d$ be a spherical $3$-design, here $N=O(d^3)$ is the number of states in the design (and in general, $
N
= \binom{d + k - 1}{k}
\;\approx\;
\frac{d^k}{k!}.
$ for $k$-designs), so
\[
\frac{1}{N}\sum_{i=1}^N \ket{\psi_i}\!\bra{\psi_i} = \frac{\mathbb{I}}{d},
\qquad
\frac{1}{N}\sum_{i=1}^N (\ket{\psi_i}\!\bra{\psi_i})^{\otimes m}
= \int_{\text{Haar}} (\ket{\psi}\!\bra{\psi})^{\otimes m}\, d\psi
\quad \text{for } m=1,2,3.
\]
Fix any reference state $\ket{\phi}\in\mathbb{C}^d$, and define
\[
  p_i := \frac{d}{N}\,|\langle\phi|\psi_i\rangle|^2,
  \qquad I \sim p,\qquad X := p_I.
\]
Since $\sum_i |\langle\phi|\psi_i\rangle|^2 = N/d$ by 1-design property, $\{p_i\}$ is a valid probability distribution.
Our goal is to lower bound $\Pr\big[X \ge 1/N\big]$, where $X$ is the random variable that we sample according to $p$.

Because the set is a $3$-design, we can replace the first three even moments of $|\langle\phi|\psi\rangle|$ by their Haar values:
\[
\mathbb{E}_{\text{Haar}}[|\langle\phi|\psi\rangle|^{2m}]
= \frac{m!\,(d-1)!}{(d+m-1)!},\qquad m=1,2,3.
\]
Thus
\begin{align*}
\mathbb{E}[X] 
&= \sum_{i=1}^N p_i^2
 = \Big(\tfrac{d}{N}\Big)^2 \sum_{i=1}^N |\langle\phi|\psi_i\rangle|^4
 = \frac{d^2}{N^2}\cdot N \cdot \frac{2}{d(d+1)}
 = \frac{2d}{N(d+1)},
\\[4pt]
\mathbb{E}[X^2]
&= \sum_{i=1}^N p_i^3
 = \Big(\tfrac{d}{N}\Big)^3 \sum_{i=1}^N |\langle\phi|\psi_i\rangle|^6
 = \frac{d^3}{N^3}\cdot N \cdot \frac{6}{d(d+1)(d+2)}
 = \frac{6 d^{2}}{N^{2}(d+1)(d+2)}.
\end{align*}

Now we recall Paley--Zygmund inequality: for a nonnegative r.v.\ $X$ and $\theta\in(0,1)$,
\[
\Pr\!\big[X \ge \theta\,\mathbb{E}[X]\big]
\;\ge\;
(1-\theta)^2\,\frac{\mathbb{E}[X]^2}{\mathbb{E}[X^2]}.
\]
We set the threshold $t := 1/N$ and choose
\[
\theta = \frac{t}{\mathbb{E}[X]}
= \frac{1/N}{2d/[N(d+1)]}
= \frac{d+1}{2d},
\]
so $1-\theta = \frac{d-1}{2d}$. Plugging in,
\[
\Pr\!\Big[X \ge \frac{1}{N}\Big]
\ge
\left(\frac{d-1}{2d}\right)^{\!2}
\frac{\left(\frac{2d}{N(d+1)}\right)^2}{\frac{6d^2}{N^2(d+1)(d+2)}}
=
\frac{(d-1)^2(d+2)}{6\,d^2(d+1)}.
\]

As such we reach the conclusion:
\begin{theorem}
Let $\{\ket{\psi_i}\}_{i=1}^N$ be an spherical $3$-design in $\mathbb{C}^d$ ($d\ge2$). 
Fix any $\ket{\phi}$ and define
$p_i=\tfrac{d}{N}|\langle\phi|\psi_i\rangle|^2$, $I\sim p$, and $X=p_I$.
Then
\[
\Pr\!\Big[X \ge \frac{1}{N}\Big]
\;\ge\;
\frac{(d-1)^2(d+2)}{6\,d^2(d+1)}.
\]
This lower bound is independent of $N$ and approaches $1/6$ as $d\to\infty$.
\end{theorem}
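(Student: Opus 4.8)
The plan is to reduce the tail bound to a second-moment (Paley--Zygmund) estimate, using the $3$-design hypothesis only to evaluate $\mathbb{E}[X]$ and $\mathbb{E}[X^2]$ exactly in terms of Haar moments.

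First I would check that $\{p_i\}$ is a genuine probability vector: by the $1$-design property $\sum_i |\langle\phi|\psi_i\rangle|^2 = N/d$, so $\sum_i p_i = 1$ and $X = p_I$ is well defined. Next I would compute $\mathbb{E}[X] = \sum_i p_i^2 = (d/N)^2 \sum_i |\langle\phi|\psi_i\rangle|^4$ and $\mathbb{E}[X^2] = \sum_i p_i^3 = (d/N)^3 \sum_i |\langle\phi|\psi_i\rangle|^6$. Since the ensemble is a $3$-design, the normalized sums $\frac1N \sum_i |\langle\phi|\psi_i\rangle|^{2m}$ coincide with the Haar values $m!\,(d-1)!/(d+m-1)!$ for $m = 1,2,3$; substituting $m=2,3$ yields $\mathbb{E}[X] = \tfrac{2d}{N(d+1)}$ and $\mathbb{E}[X^2] = \tfrac{6 d^2}{N^2 (d+1)(d+2)}$.

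Then I would apply Paley--Zygmund: for a nonnegative random variable $X$ and $\theta \in (0,1)$, $\Pr[X \ge \theta\,\mathbb{E}[X]] \ge (1-\theta)^2\,\mathbb{E}[X]^2 / \mathbb{E}[X^2]$. To land exactly at the threshold $1/N$, I set $\theta = (1/N)/\mathbb{E}[X] = (d+1)/(2d)$, so that $1 - \theta = (d-1)/(2d)$; this $\theta$ lies in $(0,1)$ for every $d \ge 2$, which is precisely what makes the estimate nontrivial. Plugging the moment formulas in, the factors of $N$ cancel and a one-line simplification gives $\Pr[X \ge 1/N] \ge \frac{(d-1)^2(d+2)}{6 d^2 (d+1)}$, whose $d\to\infty$ limit is $1/6$.

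The proof is essentially arithmetic once the moments are in hand, so there is no serious obstacle; the only points that require care are (i) that one genuinely needs a $3$-design, not merely a $2$-design, because $\mathbb{E}[X^2]$ is governed by the sixth moment (an $\varepsilon$-approximate $3$-design would cost an $O(\varepsilon)$ additive correction in the bound), and (ii) that the chosen $\theta$ stays strictly below $1$, i.e.\ that the target threshold $1/N$ sits below the mean $\mathbb{E}[X] = \Theta(d/N)$ --- this is the conceptual heart of why a positive constant survives in the limit.
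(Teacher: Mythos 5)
Your proof is correct and follows the paper's argument step for step: the same check that $\{p_i\}$ sums to one via the $1$-design property, the same evaluation of $\mathbb{E}[X]$ and $\mathbb{E}[X^2]$ from the fourth and sixth Haar moments, and the same Paley--Zygmund application with $\theta=(d+1)/(2d)$. Your remarks on why a genuine $3$-design is needed and why $\theta<1$ is the conceptual crux are accurate and helpful, but the route is the same as the paper's.
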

\section{A upper bound on the pair-wise gate correlation}\label{appx: gate}
In the main text Sec.\ref{sec: complexity}, under $1$-peakedness assumption we proved
\[
P:=C'{}^{\dagger}C \ = \,\diag(1,V),
\qquad V\sim \text{Haar on }\U(d-1),\ \ d=2^n,
\]
with $R$ any basis alignment sending $\ket{0^n}$ to the tracked ray. Equivalently, conditioning on
$P\ket{0^n}=\ket{0^n}$, the restriction of $P$ to $\ket{0^n}^\perp$ is Haar. In particular,
\[
\mathbb{E}\big[\,|\Tr P|^2\,\big] \ = \ 1+\mathbb{E}\big[\,|\Tr V|^2\,\big] \ = \ 2,
\qquad
\mathbb{E}\big[\|P-\mathbb{I}\|_F^2\big]
\ = \ \mathbb{E}\big[\|V-\mathbb{I}_{d-1}\|_F^2\big] \ = \ 2(d-1).
\]

Here we want to show that this condition also restricts the pair-wise correlation between gates in $C$ and $C'$. Suppose that, fixing $C$, every same-position gate in $C'$ has high overlap with its counterpart in $C$,
\[
\rho_m:=\frac{\big|\Tr\!\big((C'_m)^\dagger C_m\big)\big|^2}{D_m}\ \ge\ 1-\varepsilon
\quad(\text{all }m),
\]
then the telescoping bound yields
\[
\|P-\mathbb{I}\|_F \ \le\ \sum_m \|(C'_m)^\dagger C_m-\mathbb{I}\|_F
\ \le\ M\,\sqrt{d\,\varepsilon},
\]
so $\|P-\mathbb{I}\|_F^2\le M^2 d\,\varepsilon$. For any ensemble supported on such $C'$, this forces
\[
\mathbb{E}\big[\|P-\mathbb{I}\|_F^2\big]\ \le\ M^2 d\,\varepsilon,
\]
which contradicts the conditional-Haar expectation
$\mathbb{E}\big[\|P-\mathbb{I}\|_F^2\big]=2(d-1)$ unless $M^2\varepsilon=\Omega(1)$.
Thus, an ensemble built by keeping all gates of $C'$ highly correlated with those of $C$ cannot be
close (in distribution) to the conditional-Haar law on $\ket{0^n}^\perp$; in particular, it cannot
approximate a unitary $1$-design there, let alone a $k$-design.

\end{document}